\newif\ifpgfshaperectangleroundnortheast
\newif\ifpgfshaperectangleroundnorthwest
\newif\ifpgfshaperectangleroundsoutheast
\newif\ifpgfshaperectangleroundsouthwest
\def\pgf@sh@bg@rectangle{%
  \pgfkeysgetvalue{/pgf/outer xsep}{\outerxsep}%
  \pgfkeysgetvalue{/pgf/outer ysep}{\outerysep}%
  \pgfpathmoveto{\pgfpointadd{\southwest}{\pgfpoint{\outerxsep}{\outerysep}}}%
  {\ifpgfshaperectangleroundnorthwest\else\pgfsetcornersarced{\pgfpointorigin}\fi%
    \pgfpathlineto{\pgfpointadd{\southwest\pgf@xa=\pgf@x\northeast\pgf@x=\pgf@xa}{\pgfpoint{\outerxsep}{-\outerysep}}}}%
  {\ifpgfshaperectangleroundnortheast\else\pgfsetcornersarced{\pgfpointorigin}\fi%
    \pgfpathlineto{\pgfpointadd{\northeast}{\pgfpoint{-\outerxsep}{-\outerysep}}}}%
  {\ifpgfshaperectangleroundsoutheast\else\pgfsetcornersarced{\pgfpointorigin}\fi%
    \pgfpathlineto{\pgfpointadd{\southwest\pgf@ya=\pgf@y\northeast\pgf@y=\pgf@ya}{\pgfpoint{-\outerxsep}{\outerysep}}}}%
  {\ifpgfshaperectangleroundsouthwest\else\pgfsetcornersarced{\pgfpointorigin}\fi%
    \pgfpathclose}}
\newcommand\gatename[1]{\ensuremath{\mathrm{#1}}}
\tikzstyle{strings}=[baseline={([yshift=-.5ex]current bounding box.center)}]
\tikzset{every picture/.style={scale=.75}, transform shape,strings}
\definecolor{zxred}{RGB}{232, 165, 165}
\definecolor{zxgreen}{RGB}{216, 248, 216}
\definecolor{zxdarkgreen}{RGB}{90, 175, 90}
\definecolor{zxhad}{RGB}{255, 255, 130}
\definecolor{zxpaulibox}{RGB}{221, 221, 255}
\tikzstyle{dots-lr}=[font={\normalsize}]
\tikzstyle{none}=[font={\Large\boldmath}]
\tikzstyle{red-dot}=[fill=none, draw={rgb,255: red,255; green,0; blue,4}, shape=circle, scale=2.2]
\tikzstyle{Z}=[draw, fill=white, circle, scale=1, inner sep=0pt, minimum size=10pt, minimum size=1.2em, rounded corners=0.5em, inner sep=0.2em, outer sep=-0.2em]
\tikzstyle{black-dot}=[draw, fill=black, circle, scale=1.0]
\tikzstyle{grey-dot}=[draw={rgb,255: red,198; green,198; blue,198}, fill={rgb,255: red,198; green,198; blue,198}, circle, scale=1.0]
\tikzstyle{X}=[draw, fill={rgb:black,1;white,3}, text=black, circle, scale=1, inner sep=0pt, minimum size=10pt, tikzit fill={rgb,255: red,191; green,191; blue,191}, minimum size=1.2em, rounded corners=0.5em, inner sep=0.2em, outer sep=-0.2em]
\tikzstyle{scalar}=[draw, rounded corners=1ex, rectangle round south west=false, rectangle round south east=false, tikzit fill={rgb,255: red,129; green,253; blue,255}, tikzit shape=rectangle]
\tikzstyle{scalarop}=[draw, rounded corners=1ex, rectangle round north west=false, rectangle round north east=false, tikzit fill={rgb,255: red,116; green,172; blue,255}, tikzit shape=rectangle]
\tikzstyle{map}=[draw, color=black, fill=white, rectangle]
\tikzstyle{Z phase dot}=[draw, minimum size=5mm, inner sep=0mm, scale=1, fill=zxgreen, shape=circle, font={\Large\boldmath}, minimum size=1.2em, rounded corners=0.5em, inner sep=0.2em, outer sep=-0.2em]
\tikzstyle{X phase dot}=[draw, minimum size=5mm, inner sep=0mm, scale=1, fill=zxred, shape=circle, font={\Large\boldmath}, minimum size=1.2em, rounded corners=0.5em, inner sep=0.2em, outer sep=-0.2em]
\tikzstyle{X dot}=[draw, minimum size=5mm, inner sep=0mm, scale=1, fill=zxred, shape=circle, font={\Large\boldmath}, minimum size=1.2em, rounded corners=0.5em, inner sep=0.2em, outer sep=-0.2em]
\tikzstyle{Z dot}=[draw, minimum size=5mm, inner sep=0mm, scale=1, fill=zxgreen, shape=circle, font={\Large\boldmath}, minimum size=1.2em, rounded corners=0.5em, inner sep=0.2em, outer sep=-0.2em]
\tikzstyle{box edge}=[-, dashed, dash pattern=on 2pt off 0.5pt, thick, draw={rgb,255: red,203; green,192; blue,225}]
\tikzstyle{dual}=[-, draw={rgb,255: red,198; green,198; blue,198}, line width=0.35mm]
\tikzstyle{primal}=[-, line width=0.35mm]
\tikzstyle{red-line}=[-, draw={rgb,255: red,255; green,0; blue,4}, line width=0.5mm]
\tikzstyle{over}=[-, draw=white, line width=2.3mm]
\tikzstyle{dual-dot}=[-Circle, draw={rgb,255: red,198; green,198; blue,198}, line width=0.5mm]
\tikzstyle{primal-dot}=[-Circle, draw=black, line width=0.5mm]
\tikzstyle{green-line}=[-, dashed, draw=zxdarkgreen, line width=0.45mm]
\tikzstyle{red-line-}=[-, dashed, draw=zxred, line width=0.45mm]
\newcommand{\catname}[1]{\mathbf{#1}}
\theoremstyle{definition}
\newtheorem{definition}{Definition}[section]
\newtheorem{example}{Example}
\numberwithin{example}{section}
\newtheorem{theorem}{Theorem}
\numberwithin{theorem}{section}
\newtheorem{corollary}{Corollary}
\numberwithin{corollary}{section}
\newtheorem{lemma}{Lemma}
\numberwithin{lemma}{section}
\numberwithin{remark}{section}
\newtheorem{proposition}{Proposition}
\newcommand\restr[2]{{
  \left.\kern-\nulldelimiterspace 
  #1 
  \vphantom{\big|} 
  \right|_{#2} 
  }}
\numberwithin{proposition}{section}
\title{String Diagrams for Defect-Based Surface Code Computing}
\author{
Mateusz Kupper%
\thanks{Corresponding author; author order is reverse-alphabetical, with the purpose of sowing confusion.}
\institute{Dept.~Informatics\\ University of Sussex}
\email{m.kupper}
\vspace*{-1ex}
\email{@sussex.ac.uk}
\and
Dominic Horsman
\institute{Dept.~Computer Science
            \\ University of Oxford}
\email{dom.horsman}
\vspace*{-1ex}
\email{@gmail.com}
\and
Chris Heunen
\institute{\hspace*{-1cm}School of Informatics\hspace*{-1cm}
            \\ \hspace*{-1cm}University of Edinburgh\hspace*{-1cm}}
\email{chris.heunen}
\vspace*{-1ex}
\email{@ed.ac.uk}
\and
Niel de Beaudrap
\institute{Dept.~Informatics\\ University of Sussex}
\email{jrd27}
\vspace*{-1ex}
\email{@sussex.ac.uk}
}
\begin{document}
\maketitle
\begin{abstract}
    Surface codes are a popular choice for implementing fault-tolerant quantum computing.
    Two-qubit gates may be realised in these codes using only nearest-neighbour interactions, either by lattice surgery~\cite{Horsman-2012} or by braiding defects around each other~\cite{raussendorf2007topological}.
    The effect of lattice surgery operations may be simply  described~\cite{deBeaudrap2020zxcalculusis} using the ZX-calculus: a graphical language that has proven effective for program design and optimisation (see e.g.~\cite{Gidney2019efficientmagicstate}).
    In this work, we formalise a similar description via the ZX-calculus of defect braiding, as it is conventionally described.
    We define a graphical calculus $\catname{KNOT}$, denoting the logical effects (in the absence of byproduct operations) of defect braiding in surface codes: we show how these effects may be described via a fragment of ZX-calculus which we call the ($0$, $\pi$)-fragment.
    We then use a `doubling' construction to define a subtheory of $\catname{KNOT}$, more specialised to standard encoding techniques in the defect braiding literature. Within this subtheory, we encompass standard braiding techniques by families of ‘ribbon-like’ and ‘tangle-like’ diagrams, each with semantics distinct from $\catname{KNOT}$, in terms of the ($0$, $\pi$)-fragment of ZX diagrams (again in the absence of byproducts). These subtheories may be used interoperably, and are each sound and complete for the ($0$, $\pi$)-fragment of ZX diagrams. This provides a starting point to use the formal diagrammatics to analyse the operational effects of defect braiding procedures.
\end{abstract}

\vspace{-0.4cm}
\section{Introduction}

Topological quantum error correcting codes store quantum information in global properties of a system rather than in local degrees of freedom. The logical operators and the encoded space are determined by the topological features of an underlying lattice with boundary conditions. Surface codes are among the most widely studied quantum error correction codes due to their high threshold (tolerance to hardware error) and compatibility with multiple physical quantum computing paradigms \cite{raussendorf2006fault}.

Surface codes can be either planar, where individual `patches' of physical qubits stabilised by the nearest-neighbour surface code operations support a single logical qubit each \cite{Dennis-2002}, or defect-based \cite{raussendorf2006fault}. In the latter, logical qubits are supported by (usually) pairs of holes in the lattice of physical qubits, with multiple logical qubits sharing the same lattice `patch'. Two-qubit entangling operations are performed using nearest-neighbour-only operations either by lattice surgery \cite{Horsman-2012} for planar codes, or defect braiding for defect-based codes \cite{raussendorf2007topological, Fowler2009}. Lattice surgery has lower resource requirements in general than defect-based surface codes (e.g. \cite{Gidney2019efficientmagicstate}). However, braiding remains an important potential tool for future surface code architectures and for the development of novel procedures using the code.

Logical information in the surface code is defined by logical operators: chains or rings of physical qubits whose joint state is the state of the logical operator. Extended surfaces of logical operators over time are known as `correlation surfaces' (see e.g. \cite[\S3]{austingoyal2009}). In the planar code, logical operators are strings between two opposite boundaries of the code patch, with the type of operator (X or Z) dependent on the type of boundary (`rough' or `smooth'). In the defect-based code, logical operators are supported between the boundaries created in the lattice by the pair of `holes' (qubits removed from the stabilisation procedure). These also come in two flavours, `primal' and `dual', depending on the qubits removed.

Defect-based codes can be implemented as either 2D or 3D codes, homologically equivalent~\cite{raussendorf2007topological}. In 2D codes, physical qubits persist and are operated on over time; in 3D, they are prepared in an entangled state and then layers are measured out to create `clock ticks' of the code. Intuitively, one can think of the 3D code as simply the 2D code represented in spacetime. Another homological equivalence~\cite{raussendorf2006fault} is between double and single defect codes. In single defects, the role of the second defect is taken by the boundary of the entire surface (in practice limiting to two logical qubits per surface). Figure~\ref{fig:surface-codes} shows how the logical operators for the planar, single defect, and double defect codes are supported across their respective surfaces. Planar and defect-based qubits can even be interconverted, by judiciously increasing or measuring out the stabilised surface \cite[\S5]{Horsman-2012}.

Two-qubit operations perform interactions of logical operators and correlation surfaces. In lattice surgery, planar code \gatename{X} or \gatename{Z} operators are merged or split by straightforward operations that cut or combine operators from different patches. Braiding, however, interacts logical operators by moving defects around each other (either moving single defects around each other, or one of a double-defect pair moving through the space between the pair of another logical qubit). This causes the correlation surfaces to interact, performing a unitary two-qubit entangling gate (usually \gatename{CNOT}) provided the qubits are supported by pairs of opposite-type defects (primal and dual).

\begin{figure}[!t]
	\centering
	\begin{subfigure}[b]{0.4\textwidth}
	\includegraphics[width=1.0\textwidth]{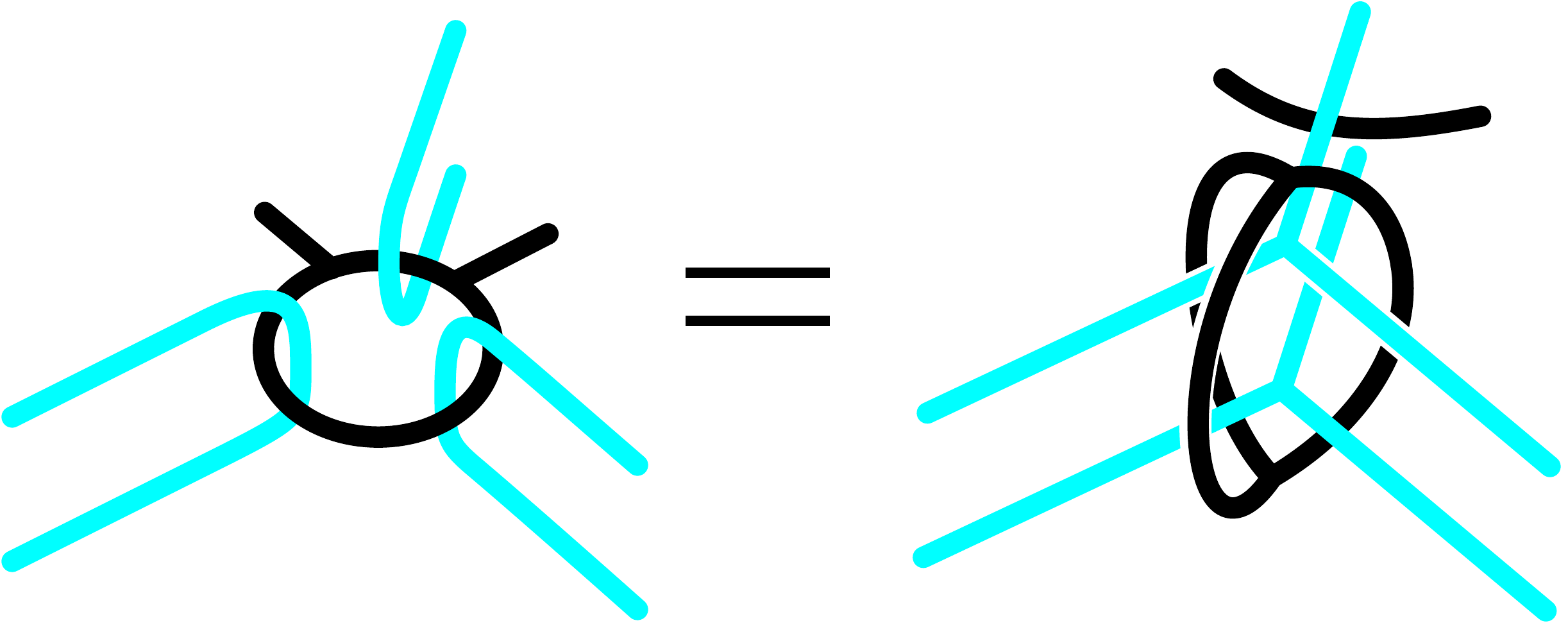}
	\caption{}
	\label{fig:surface3D}
	\end{subfigure}
	\hspace{0.03\textwidth}
	\begin{subfigure}[b]{0.4\textwidth}
	\includegraphics[width=1.0\textwidth]{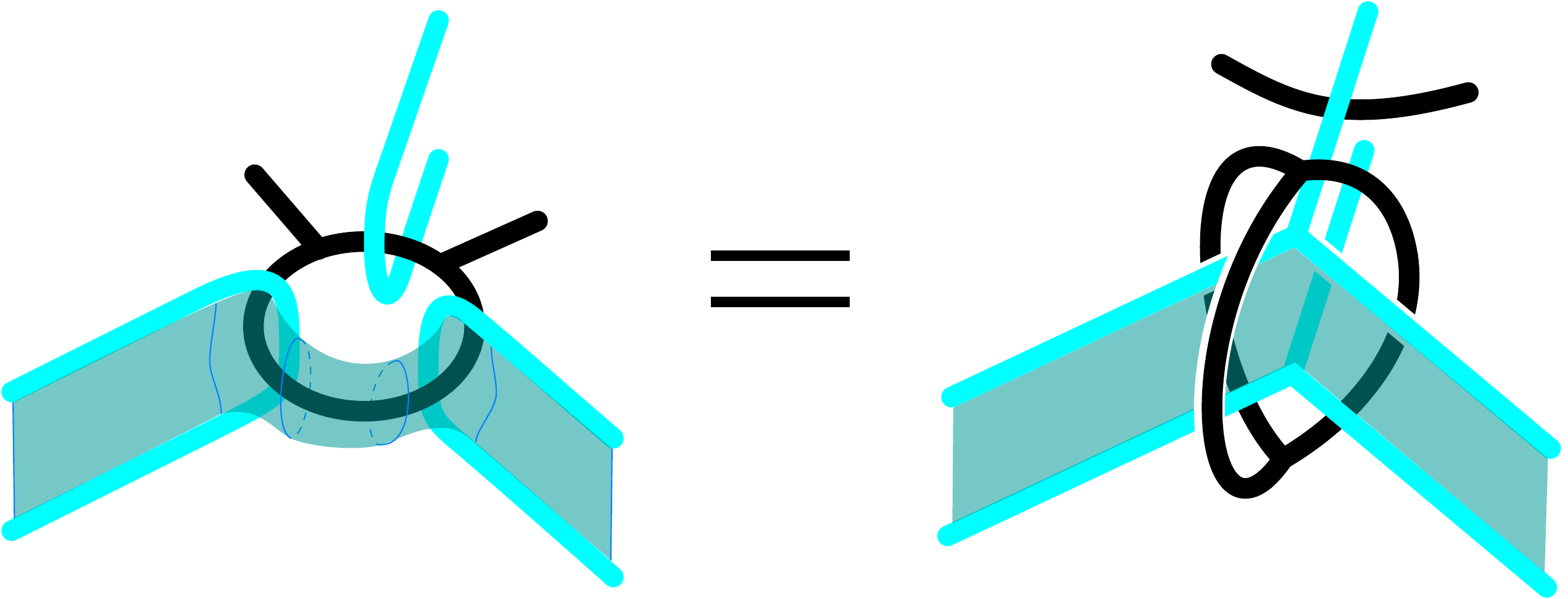}
	\caption{}
	\end{subfigure}
	\caption{Informal diagrammatic representations of braiding procedures using double defects. Different colours denote primal and dual; (b) shows also correlation surfaces. From Ref.~\cite{raussendorf2007topological}.}
	\label{fig:correlation-example-raussendorf}
\end{figure}

\begin{figure}[!b]
	\centering
	\includegraphics[width=0.7\textwidth]{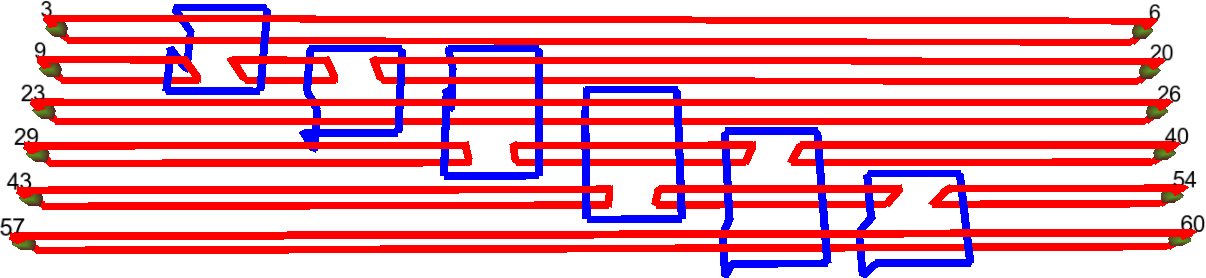}
	\caption{A three-dimensional representation of defect braiding. The red and blue structures represent dual and primal defects whose braiding implements a $T$ gate. From Ref.~\cite{paler2017fault}.}
	\label{fig:3d-braid}
\end{figure}

Braiding is clearly a more challenging operation to visualise than lattice surgery, not least being inherently 3D. Diagrams such as Figure~\ref{fig:3d-braid} show a simple subroutine of six \gatename{CNOT} operations (similar diagrams also appear in ~\cite{austingoyal2009, Austin2D, fujii2015quantum}). Larger procedures quickly become impractical to represent visually, both in terms of human readability and computing power needed to render 3D diagrams. Such diagrams are informal: they are not supported by a standardised semantics and compositional formal language capable of equational reasoning, but are instead validated by checking correlation surfaces ~\cite{raussendorf2007topological} or using a set of ad-hoc rewrite rules \cite{paler2019surfbraid}. This is made more difficult by the use of a combination of 2D and 3D diagrams and the use of informal syntactic sugar~\cite{raussendorf2007topological} (also see~Figure~\ref{fig:correlation-example-raussendorf}).

By contrast, lattice surgery benefits from a representation of its transformations in terms of ZX~diagrams~\cite{deBeaudrap2020zxcalculusis}.
This representation provides a formal, systematic means of reasoning about lattice surgery procedures~\cite{Gidney2020Video,hanks2020effective}. Diagrammatic reasoning (especially ZX-calculus) has been used in quantum error correction for the analysis of more general stabiliser codes (e.g. \cite{rodatz2024floquetifying, kissinger2024scalable, TownsendTeague2023,bombin2023unifying}) and surface codes \cite{kissinger2022phasefree}. There has been some limited work applying it to braided surface codes, where the correctness of \textit{some} braiding patterns has been proven using ZX-calculus \cite{Horsman-2011}. An informal mapping between different braiding patterns and ZX-calculus has also appeared in Ref.~\cite{hanks2020effective}. A further informal connection between lattice surgery, defect surgery, and ZX-calculus was made in Ref.~\cite{Gidney2020Video}. A non-diagrammatic method for equivalence checks for braiding patterns using Boolean expressions appeared in Ref.~\cite{paler2014design}.

This paper introduces a graphical language $\catname{KNOT}$, to formalize defect-based surface code computation. We build upon the informal diagrams introduced originally by Raussendorf~\cite{raussendorf2007topological} and presented in numerous forms in Refs.~\cite{fujii2015quantum, Austin2D, austingoyal2009} to define a formal language for defect braiding. We gather the rules appearing in these works and present them in a systematic way as a category with formal equivalences, and with our assumptions clearly stated. We also prove that the usual semantics appearing in the literature are sound (functorial) for ZX-calculus. In other words, by formalizing the largely intuitive and geometric arguments used in the literature, we provide a systematic account of defect braiding in the absence of byproduct operations. Furthermore, if we consider a pair of wires as encoding a qubit, we recover alternative semantics \cite{raussendorf2007topological} which is sound and complete for the ($0$, $\pi$)-fragment of ZX-calculus.

This formal language for braiding patterns will facilitate automated verification and optimisation of defect-based protocols, contributing to the simplifying analysis of topological quantum computation.

\vspace{-0.4cm}
\paragraph{Structure of the Paper.}

We proceed as follows. Section \ref{sec:scbraids} gives brief background to braided surface codes. Section~\ref{sec:braiding} introduces the the new formal framework for representing braiding operations using $\catname{KNOT}$. Section \ref{sec:interpretation} establishes different mappings into the ($0$, $\pi$)-fragment of the ZX calculus: from $\catname{KNOT}$, and from two subtheories of a `doubled' $\catname{KNOT}$, which lets us demonstrate the equivalence between these two subtheories of $\catname{KNOT}$ and the ($0$, $\pi$)-fragment of ZX calculus.

\vspace{-0.4cm}

\section{Surface codes and defect braiding}
\label{sec:scbraids}

\begin{figure}[t]
	\centering
	\begin{subfigure}[b]{0.3\textwidth}
	\includegraphics[width=\textwidth]{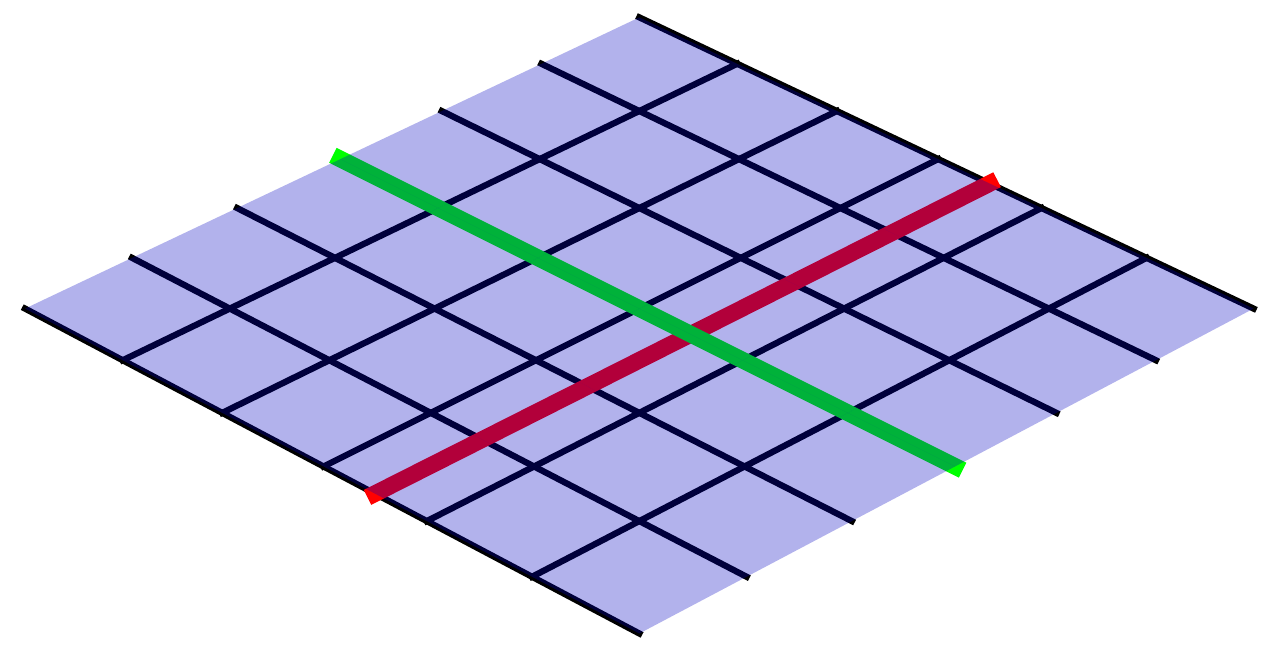}
	\caption{Surface code patch with two boundary types (rough - primal and smooth - dual) and two logical operators represented by red and green lines.}
	\label{fig:surface}
	\end{subfigure}
	\hspace{0.03\textwidth}
	\begin{subfigure}[b]{0.30\textwidth}
	\includegraphics[width=\textwidth]{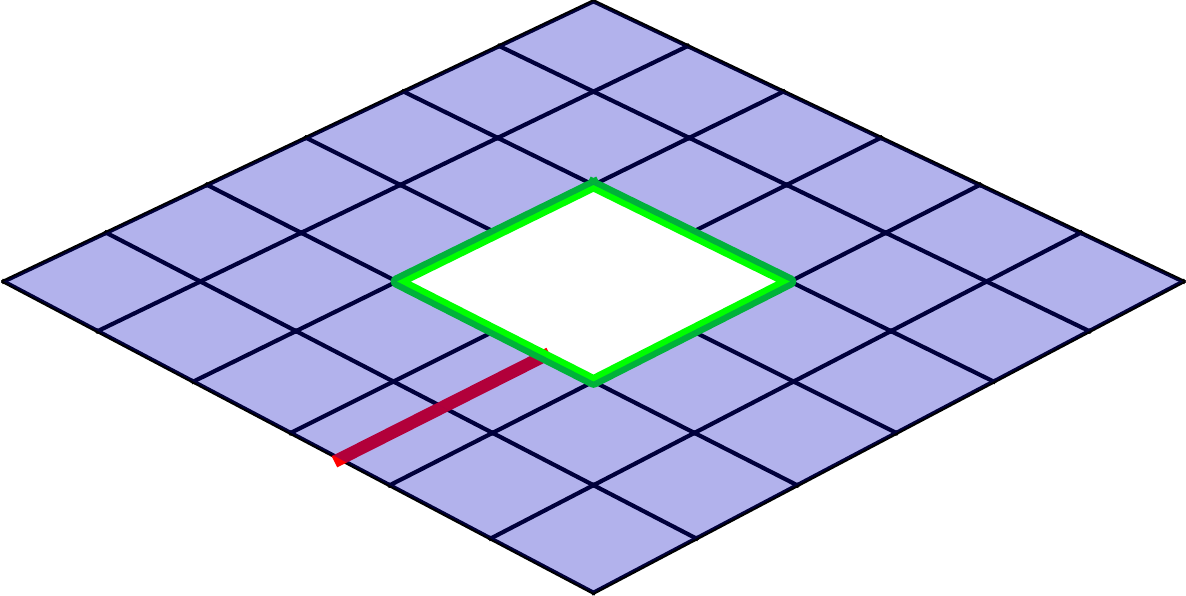}
	\caption{Surface code with one boundary type with a defect with the same boundary type. One of the logical operators encircles the defect.}
	\label{fig:surface-defect}
	\end{subfigure}
	\hspace{0.03\textwidth}
	\begin{subfigure}[b]{0.30\textwidth}
	\includegraphics[width=\textwidth]{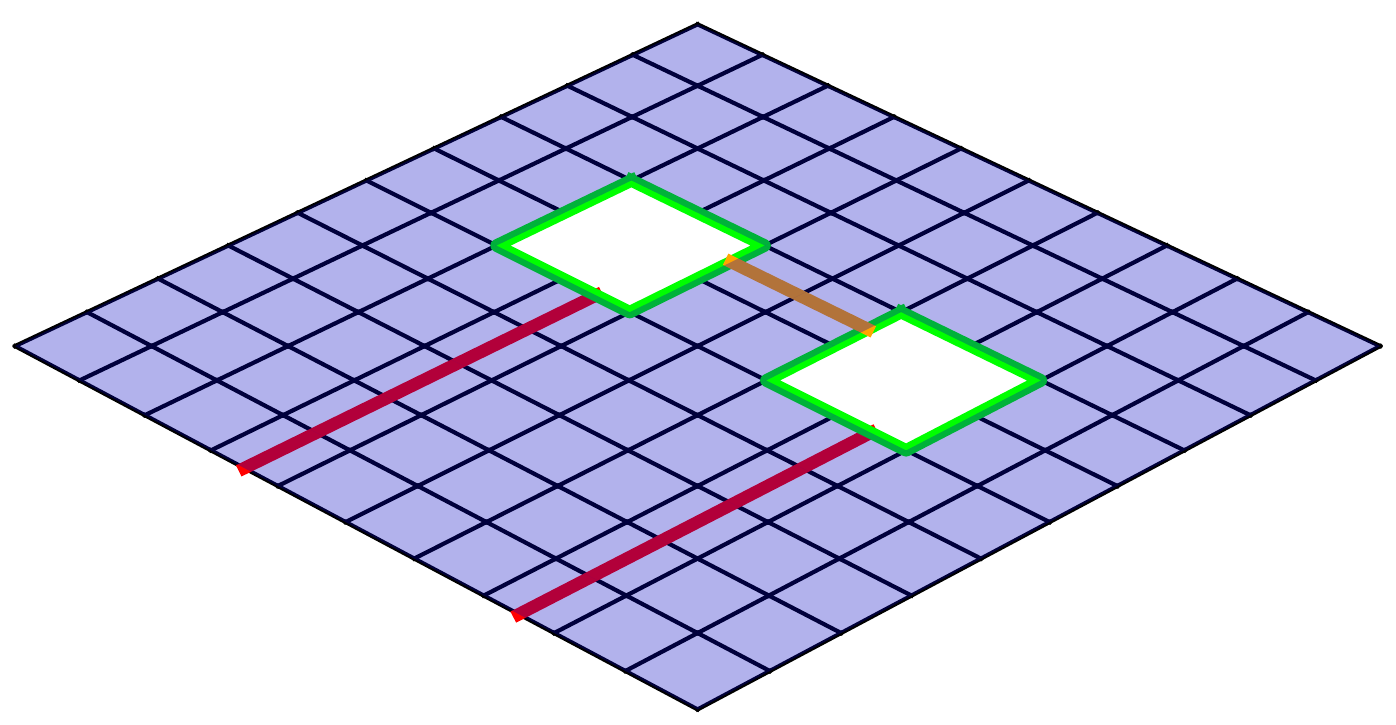}
	\caption{Surface code with one boundary type with two defects with the same boundary types. The red logical operators are together equivalent to the orange operator.}
	\label{fig:braiding}
	\end{subfigure}
	\caption{We can introduce defects and modify boundaries in a surface code to change the observables encoded in the code. The observables (or also called logical operators or errors) are supported on the boundaries and defects. The lattices and logical operators in each figure encode a single logical qubit.}
	\label{fig:surface-codes}
\end{figure}

A surface code lives on a square lattice whose edges carry physical qubits, see Figure~\ref{fig:surface-codes} \cite{bravyi1998quantumcodeslatticeboundary,Dennis-2002,raussendorf2006fault, Fowler2009,Austin2D}. The code space is the \(+1\)-eigenstate subspace of all stabiliser operators. Two families of commuting stabilisers act locally: vertex (\emph{star}) operators
\(A_v=\prod_{e\in\mathrm{star}(v)}\!X_e\)
and plaquette operators
\(B_p=\prod_{e\in\partial p}\!Z_e\).
Commutation ensures a consistent code space, with
\begin{equation*}
[A_v, A_{v'}] = [B_p, B_{p'}] = [A_v, B_p] = 0 \text{ for all } v, v', p, p'.
\end{equation*}
The code space is thus the subspace of the Hilbert space \(\mathcal{H}\) where all stabilisers satisfy
\begin{equation*}
A_v |\psi\rangle = |\psi\rangle, \quad B_p |\psi\rangle = |\psi\rangle  \text{ for all } v, p.
\end{equation*}
Error detection is given by changes of these stabiliser measurements over time.

\textbf{Defects} are regions of the lattice where stabilisers are deliberately not measured, creating internal boundaries. This creates spare degrees of freedom in the lattice which are used to encode logical qubits. Smooth defects (or dual defects) correspond to missing \(A_v\) stabilisers, while rough defects (or primal defects) correspond to missing \(B_p\) stabilisers. Stabiliser patterns may change from a one measurement round to the next, so defects can appear, disappear, merge, divide, or move.

\textbf{Logical operators} are defined as paths (connecting boundaries) or loops of Pauli operators on the lattice (see Figure~\ref{fig:surface-codes}). For smooth defects, logical operators are
\( X_L = \prod_{e \in P_X} X_e, \quad Z_L = \prod_{e \in P_Z} Z_e \),
where \(P_X\) is a path of \(X\)-operators connecting two smooth defects, and \(P_Z\) is a loop of \(Z\)-operators encircling a smooth defect. For rough defects, the roles of \(X\) and \(Z\) are reversed: \(X_L\) encircles a rough defect, and \(Z_L\) connects two rough defects. Logical operators correspond to chains of errors undetectable by the code. Boundaries also can be rough(smooth), supporting undetected \(Z(X)\)-chains.

The \textbf{dual lattice} has a vertex for each plaquette of the primal lattice and an edge for each primal edge. Stabilisers \(A_v\) on the primal lattice thus correspond to stabilisers \(B_p\) on the dual lattice, and logical operators map between the primal and dual lattices. This duality ensures symmetry between \(X\)- and \(Z\)-type errors, reinforcing the fault tolerance of the surface code.

Visualising \textbf{braiding} of logical operators is easiest in a spacetime view (although see also~\cite[\S V]{Fowler2009} for clear step-by-step diagrams of a logical CNOT in the 2D view). We construct a 3D lattice, \(\mathcal{L}\), with physical qubits placed on both the edges and faces of the lattice. This can be viewed as a \textit{measurement-based quantum computation (MBQC)} simulation of the 2D surface code~\cite{raussendorf2006fault,austingoyal2009}, with the third lattice dimension as simulated time. The 3D cluster state is initialised by preparing qubits on the edges and faces of \(\mathcal{L}\) in the \(|+\rangle\) state and applying controlled-X gates between face qubits and edge qubits along their boundaries. Stabilisers for this state are given by:

\begin{equation*}
	K(c_2) = X(c_2) Z(\partial c_2) \quad  \text{ and } \quad  K(\bar{c}_2) = X(\bar{c}_2) Z(\partial \bar{c}_2),
\end{equation*}

where \(c_2 \in C_2\) corresponds to primal faces, and \(\bar{c}_2 \in \bar{C}_2\) corresponds to dual faces. Logical qubits are encoded in defects that extend through the 3D lattice, and logical operators are defined in terms of these defects. For primal defects, the logical \(Z_L\) operator is the boundary of a dual 2-chain, \(Z_L = Z(\partial \bar{c}_2)\), and the logical \(X_L\) operator is defined along a primal 1-chain \(c_1\) as \(X_L = X(c_1)\).

The 3D cluster is divided into three regions. The \textit{defect region} (\(D = D_p \cup D_d\)) includes both primal defects (\(D_p\)) and dual defects (\(D_d\)) and represents areas where stabilisers are deliberately removed. The \textit{vacuum region} (\(V = V_p \cup V_d\)) contains the rest of the lattice where stabilisers are intact, including primal (\(V_p\)) and dual (\(V_d\)) vacuum regions. Finally, the \textit{singular qubit region} (\(S\)) includes qubits at the interface of defects and vacuum. A specific measurement-based computation is performed based on four cluster areas.
Qubits along defect chains (\(d \subset D\)) are measured in the \(Z\)-basis. Qubits on defect faces whose boundaries intersect \(d\) are measured in the \(X\)-basis.
Qubits in the vacuum region (\(V\)) are measured in the \(X\)-basis. Finally, qubits in the singular region (\(S\)) are measured in the \(\frac{X \pm Y}{\sqrt{2}}\)-basis.

To ensure compatibility between the stabiliser structure and the measurement pattern, the stabilisers must commute with the measurements:
\begin{equation*}
	\left[K(c_2) K(\bar{c}_2), X_a\right] = 0 \text{ for all } a \in V, \quad  \left[K(c_2) K(\bar{c}_2), Z_b\right] = 0 \text{ for all } b \in D.
\end{equation*}
These conditions imply that for any primal 2-chain \(c_2 \in C_2\) and dual 2-chain \(\bar{c}_2 \in \bar{C}_2\):
\begin{equation*}
 \{c_2\} \subset V_p, \quad \{\partial c_2\} \subset D_p, \quad \{\bar{c}_2\} \subset V_d, \quad \{\partial \bar{c}_2\} \subset D_d.
\end{equation*}

Within this 3D description, a primal defect carries
\(
Z_L=Z(\partial\bar c_2)
\)
and
\(
X_L=X(c_1),
\)
where \(\partial\bar c_2\) encircles the defect and \(c_1\) connects time-separated cross-sections. Braiding a smooth defect (dual defect) around a rough defect (primal defect) induces transformations of the logical operators as the stabilisers of the correlation surfaces are multiplied at the point of braiding and then separated again (analogous to the merge and splitting operations of lattice surgery):
\begin{equation*}
	X_L^\text{(smooth)} \to X_L^\text{(smooth)} X_L^\text{(rough)} \quad \text{ and } \quad Z_L^\text{(rough)} \to Z_L^\text{(smooth)} Z_L^\text{(rough)}.
\end{equation*}
This corresponds to the application of a \gatename{CNOT} gate, where the smooth defect is the control qubit and the rough defect is the target qubit. Gates can be induced between logical qubits of the same type by introducing intermediate braids.\footnote{As every operation to merge, split, move, introduce, or remove defects involve changes to the stabilisers being measured, there will also be uncontrollable but \emph{heralded} byproduct operations, of the sort familiar from MBQC~\cite{rossmbqc, will-simmons-flow} and lattice surgery~\cite{deBeaudrap2020zxcalculusis, bombin2023unifying}.
Strictly speaking, these must be accounted for to describe general braiding operations as CPTP maps.
However, we follow convention in acknowledging and then dropping explicit mention of them, as a first step in the development of a workflow for a more comprehensive and systematic approach to analysing defect braiding procedures.}

Informal `topological calculi' diagrams have been used since the beginning to illustrate the complex procedures of braiding, and to aid calculation of stabilisers~\cite{raussendorf2006fault,raussendorf2007topological}. Later work used, again informally, computer-aided 3D diagram generation~\cite{Gidney2020Video,paler2018specification,paler2017fault,fowler2012bridge}. However, reasoning about braids has remained cumbersome, and lacks complete formal tools for circuit design and compilation.

\vspace{-0.4cm}

\section{Braiding of defects as a graphical calculus: the category $\catname{KNOT}$}
\label{sec:braiding}

We represent maps in $\catname{KNOT}$ by diagrams, composed from generators as below.
These depict braiding of defects in a surface code, in the form of a 2D diagram with additional information for wires crossing over or under other wires, in the manner of knot diagrams.
\begin{definition}
	The category $\catname{KNOT}$ (also called the \textit{category of defects}) is a monoidal category whose objects are bit-strings, tensor product is given by the concatenation of bit-strings,  morphisms are generated by the below diagrams and for domains and codomains of morphisms $0$ is represented by a grey wire and $1$ by a black wire.

    \begin{center}
		\noindent
			\hspace*{3.5em}
			\begin{tabular}{r@{\qquad}l}
			\begin{tabular}{c}
				\textbf{Identity wires} \\
				\begin{tabular}{cc}
					$\tikzfig{knot-generators-tikzit/1}$
					&
					$\tikzfig{knot-generators-tikzit/2}$
				\end{tabular}
			\end{tabular}
			&
			\begin{tabular}{c}
				\textbf{Empty bit-strings} \\
				\begin{tabular}{cccc}
					$\tikzfig{knot-generators-tikzit/3}$
					\!\!&\!\!
					$\tikzfig{knot-generators-tikzit/4}$
					&
					$\tikzfig{knot-generators-tikzit/5}$
					\!\!&\!\!
					$\tikzfig{knot-generators-tikzit/6}$
				\end{tabular}
			\end{tabular}
			\end{tabular}
			\\[2ex]
		\noindent
			\hspace*{-1.5em}
			\begin{tabular}{c@{\,\,}c@{\,\,}c@{\,\,}c}
				\begin{tabular}{c}
					\textbf{Same-type crossings} \\[5pt]
					\begin{tabular}{c@{\;\;}c@{\;\;}c}
						$\tikzfig{knot-generators-tikzit/7}$
						\!&\!
						$\tikzfig{knot-generators-tikzit/8}$
						\!&\!
						$\tikzfig{knot-generators-tikzit/9}$
					\\[1ex]
						$\tikzfig{knot-generators-tikzit/10}$
						\!&\!
						$\tikzfig{knot-generators-tikzit/11}$
						\!&\!
						$\tikzfig{knot-generators-tikzit/12}$
					\end{tabular}
				\end{tabular}
				&
				\begin{tabular}{c}
					\textbf{Opposite-type crossings} \\[5pt]
					\begin{tabular}{c@{\;\;}c}
						$\tikzfig{knot-generators-tikzit/13} = \tikzfig{knot-generators-tikzit/14}$
					\\[1ex]
						$\tikzfig{knot-generators-tikzit/15} = \tikzfig{knot-generators-tikzit/16}$
					\end{tabular}
				\end{tabular}
				&
				\begin{tabular}{c}
					\textbf{Bifurcation maps} \\[5pt]
					\begin{tabular}{c}
						$\tikzfig{knot-generators-tikzit/19}$
				  \\[1ex]
						$\tikzfig{knot-generators-tikzit/21}$
					\end{tabular}
				\end{tabular}
				&
				\begin{tabular}{c}
					\textbf{Pauli decorators}  \\[5pt]
					\begin{tabular}{ccc}
						$\tikzfig{knot-generators-tikzit/27}$
						\!&\!
						$\tikzfig{knot-generators-tikzit/K19-R}$
						\!&\!
						$\tikzfig{knot-generators-tikzit/29}$
					\\[1ex]
						$\tikzfig{knot-generators-tikzit/28}$
						\!&\!
						$\tikzfig{knot-generators-tikzit/31}$
						\!&\!
						$\tikzfig{knot-generators-tikzit/30}$
					\end{tabular}
				\end{tabular}
			\end{tabular}
	  \end{center}

    \smallskip
	\noindent
	Diagrams are read from left to right, representing a progression from the input to the output.
	A grey or a black dot on the right (left) end of a diagram represents an empty bit-string as a domain (codomain); similarly, cups and caps have empty bit-strings as domains or codomains. The generators satisfy the equivalences shown in Table~\ref{table:KNOT}.

  \begin{table}[p]
    \centering
    \renewcommand{\arraystretch}{1.5}
    \noindent
	\begin{minipage}{0.975\textwidth}\raggedright
		Planar isotopy rules:
	\end{minipage}
	\vspace*{4mm}
	\\
    \begin{minipage}{0.575\textwidth}\raggedleft
        \begin{minipage}[t]{0.6\textwidth}
            \begin{align}
                \tag{K1} \tikzfig{knot-generators-tikzit/7} = \tikzfig{knot-generators-tikzit/8} &= \tikzfig{knot-generators-tikzit/9} \label{eq:K1}
            \\[2ex]
                \tag{K2} \tikzfig{knot-generators-tikzit/K3-l} &= \tikzfig{knot-generators-tikzit/K3-r} \label{eq:K3}
            \end{align}
        \end{minipage}
        \hfill
        \begin{minipage}[t]{0.375\textwidth}
            \begin{align}
                \tag{K3} \tikzfig{knot-generators-tikzit/K2-l} &= \tikzfig{knot-generators-tikzit/K2-r} \label{eq:K2}
			\\[3ex]
				\tag{K4} \tikzfig{knot-generators-tikzit/K12-r} &= \tikzfig{knot-generators-tikzit/K12-l} \label{eq:K12}
            \end{align}
        \end{minipage}
		\\
		\begin{minipage}{1.0\textwidth}\raggedright
			\begin{equation}
				\tag{K5}
				\tikzfig{knot-generators-tikzit/K24-L} = \tikzfig{knot-generators-tikzit/K24-R}
				\label{eq:loop-D-commute}
			\end{equation}
		\end{minipage}
    \end{minipage}
    \hfill
    \begin{minipage}{0.4\textwidth}\raggedright
            \begin{align}
				\tag{K6} \tikzfig{knot-generators-tikzit/K21-L} &=
					\tikzfig{knot-generators-tikzit/K21-R} \label{eq:overcross-kill}
			\\[2ex]
                \tag{K7} \tikzfig{knot-generators-tikzit/K4-l} &= \tikzfig{knot-generators-tikzit/K4-r} \label{eq:K4}
			\\[3ex]
				\tag{K8} \tikzfig{knot-generators-tikzit/K7-l} &= \tikzfig{frob/48} \label{eq:K5}
            \end{align}
    \end{minipage}
	\\[4ex]
	\vspace*{4mm}
	\begin{minipage}{0.975\textwidth}\raggedright
		Base rules:
	\end{minipage}
	\\
    \begin{minipage}{0.575\textwidth}\raggedleft
		\begin{align}
			\tag{K9} \tikzfig{knot-generators-tikzit/K8-l} &= \tikzfig{knot-generators-tikzit/K8-r} \label{eq:loop-around-one-wire}
	\\[1ex]
			\tag{K10} \tikzfig{knot-generators-tikzit/K5-l} &= \tikzfig{knot-generators-tikzit/K5-r} \label{eq:K6}
		\end{align}
    \end{minipage}
    \hfill
    \begin{minipage}{0.4\textwidth}\raggedright
            \begin{align}
				\tag{K11} \tikzfig{knot-generators-tikzit/K14-L} &= \tikzfig{knot-generators-tikzit/K14-R} \label{eq:K-spider}
			\\[0ex]
				\tag{K12} \tikzfig{knot-generators-tikzit/K22-L} &= \tikzfig{knot-generators-tikzit/K22-R} \label{eq:spider-disappears}
            \end{align}
    \end{minipage}
	\vspace*{4mm}
	\\[0.5ex]
	\begin{minipage}{0.975\textwidth}\raggedright
		Decorator rules:
	\end{minipage}
	\vspace*{4mm}
	\\
	\begin{minipage}{0.6\textwidth}\raggedleft
		\begin{minipage}[t]{0.55\textwidth}
			\begin{align}
				\tag{K13} \tikzfig{knot-generators-tikzit/K10-l} &= \tikzfig{knot-generators-tikzit/K10-r} \label{eq:K9}
				\vspace*{5mm}
			\\[3ex]
				\tag{K14} \tikzfig{knot-generators-tikzit/K6-l} &= \tikzfig{knot-generators-tikzit/K6-r} \label{eq:K10}
			\end{align}
		\end{minipage}
		\hfill
		\begin{minipage}[t]{0.375\textwidth}
			\begin{align}
				\tag{K15} \tikzfig{knot-generators-tikzit/K15-L} &= \tikzfig{knot-generators-tikzit/K15-R} \label{eq:K-bar-slide}
				\vspace*{3mm}
			\\[2.5ex]
				\tag{K16} \tikzfig{knot-generators-tikzit/K16-L} &= \tikzfig{knot-generators-tikzit/K16-R} \label{eq:K-circle-slide}
			\end{align}
		\end{minipage}
		\\[0.1ex]
		\begin{minipage}{0.95\textwidth}
			\begin{align}
				\tag{K17} \tikzfig{knot-generators-tikzit/K17-L} = \tikzfig{knot-generators-tikzit/K17-M} &= \tikzfig{knot-generators-tikzit/K17-R} = \dots \label{eq:K-bar-spider}
			\\[0.3ex]
				\tag{K18} \tikzfig{knot-generators-tikzit/K9-L} &= \tikzfig{knot-generators-tikzit/K9-R} \label{eq:K7}
			\\ \notag
			\end{align}
		\end{minipage}
	\end{minipage}
	\hfill
	\begin{minipage}{0.375\textwidth}\raggedright
			\begin{align}
				\tag{K19} \tikzfig{knot-generators-tikzit/K18-L} &= \tikzfig{knot-generators-tikzit/K18-R} \label{eq:K-circle-spider}
			\\[0ex]
				\tag{K20} \tikzfig{knot-generators-tikzit/K19-L} &= \tikzfig{knot-generators-tikzit/K19-R} \label{eq:K-bar-upside}
			\\[0ex]
				\tag{K21} \tikzfig{knot-generators-tikzit/K20-L} &= \tikzfig{knot-generators-tikzit/K20-R} \label{eq:K-circle-falls}
			\\[0ex]
				\tag{K22} \tikzfig{knot-generators-tikzit/K23-L} &= \tikzfig{knot-generators-tikzit/K23-R} \label{eq:K-circle-bar-commute}
			\\ \notag
			\end{align}
	\end{minipage}
	\caption{%
	\label{table:KNOT}%
	The rules of the $\catname{KNOT}$ calculus. \ref{eq:loop-D-commute}: $D$ can be any diagram in $\catname{KNOT}$ without any red decorators and have arbitrary input and output wires of the same colour as the loops around the wires. We also require $p, q \in \{0, 1, 2\}$.
	\textit{The planar isotopy rules} encode (some of) the rules of planar isotopy for opposite-type wires only. Same-type wires do not interact with each other. \textit{The base rules} encode the equalities which are not covered by planar isotopy.
	}
	\vspace*{3ex}
	\end{table}

\end{definition}

Diagrams of $\catname{KNOT}$ represent the braiding of defects in a surface code, and correlation surfaces (spacetime stabilisers) supported on those defects.%
    \footnote{%
        The effect of those operations, supporting the equivalences of Table~\ref{table:KNOT}, are standard in the literature on defect braiding~\cite{raussendorf2007topological}.
        For the sake of brevity, we defer any account of these effects to our presentation of a map from $\catname{KNOT}$ into ZX-diagrams.}

\vspace{-0.2cm}
\paragraph{Correlation surfaces.}

In the literature, the red decorators above are often omitted as they represent operations unachievable by braiding and can be seen as undetectable errors. However, logical transformations from braiding can be described by rewriting diagrams to ``move" red generators from input to output. Correlation surfaces in the surface code~\cite{raussendorf2006fault,raussendorf2007topological} track the evolution of logical operators or errors as the code’s topology changes. They geometrically encode how logical operators—initially represented by chains or loops—are deformed by boundary shifts, defect braiding, or lattice manipulations. This tracking constructs space-time stabilisers, summarizing logical transformations and showing how an initial operator, such as \( X_{\text{in\;$1$}} \), evolves into \( X_{\text{out\;$1$}} X_{\text{out\;$2$}} \) via a logical map \( M \) (the action of a \gatename{CNOT} gate).

\vspace{-0.2cm}
\paragraph{Soundness.}

A basic property of a rewrite system with a semantic map should be the soundness for its interpretation. The informal diagrams given in e.g. Ref.~\cite{raussendorf2007topological} can be interpreted as describing a specific family of surface code computation procedures described by a braiding pattern. The goal of such a procedure is to perform a map on the logical operators (or non-correctable errors) from the start to the end of the computation (to define a so-called space-time stabiliser). Thus, one can instead give an interpretation in the maps on the space of logical operators. This is what we also do in this paper. We are not concerned with the physical implementation of a specific braiding pattern as a set of operations on the surface code. We are instead interested in the space-time stabilisers that could be implemented by such procedures. We discuss this more in the Appendix~\ref{section:proof-soundness}.

\begin{theorem}
	\label{soundness-maps}
	$\catname{KNOT}$ is sound for the logical maps implemented by the braided surface code error correction procedure up to classical byproducts.

	\hfill
	\textup{[Proof in Appendix \ref{section:proof-soundness}.]}
\end{theorem}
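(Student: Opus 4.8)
The plan is to establish soundness as the functoriality of a semantic interpretation. Because $\catname{KNOT}$ is presented as a monoidal category freely generated by the generators in the definition of $\catname{KNOT}$ modulo the equivalences of Table~\ref{table:KNOT}, a map out of $\catname{KNOT}$ into any monoidal category is well defined precisely when it respects composition and tensor and sends both sides of each defining relation to equal morphisms. Soundness for the braided surface code therefore reduces to two tasks: first, fixing a semantic interpretation $\llbracket\,\cdot\,\rrbracket$ that sends each generator to the logical map on correlation surfaces that the corresponding braiding event implements, extended monoidally; and second, checking that every rule (K1)--(K22) is preserved by $\llbracket\,\cdot\,\rrbracket$ up to a Pauli byproduct.

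For the first task I would read off the action of each generator from the standard correlation-surface calculus recalled in Section~\ref{sec:scbraids}: a black (grey) wire is the worldline of a primal (dual) defect; an opposite-type crossing is a braiding event, whose effect on logical operators is the \gatename{CNOT}-like transformation $X_L^{\mathrm{(smooth)}}\!\to X_L^{\mathrm{(smooth)}}X_L^{\mathrm{(rough)}}$ and $Z_L^{\mathrm{(rough)}}\!\to Z_L^{\mathrm{(smooth)}}Z_L^{\mathrm{(rough)}}$; bifurcation maps are the merge and split of a defect; and Pauli decorators mark the endpoints of a correlation surface, i.e.\ the logical Pauli operators themselves. Since each braiding pattern determines a single space-time stabiliser, stacking patterns in time realises composition and placing them side by side realises the tensor, so the assignment is automatically functorial and monoidal; the physical semantics is thus a candidate monoidal functor, and only the relations remain to be verified.

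The second task I would organise along the three blocks of Table~\ref{table:KNOT}. The planar isotopy rules (K1)--(K8) are the easiest: they record deformations of defect worldlines that do not alter the crossing structure (and the fact, encoded in (K1), that same-type defects do not braid), so topological invariance of the braiding procedure shows that both sides implement the same correlation surface. The base rules (K9)--(K12) and decorator rules (K13)--(K22) require genuine computation: for each I would compute the induced transformation of the logical Pauli group on both sides---contracting trivial loops, merging and splitting defects, and propagating a decorator through a crossing, loop, or bifurcation---and check that the two maps agree modulo a heralded Pauli correction, which is exactly the ``up to classical byproducts'' slack in the statement.

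The main obstacle will be twofold. First, the physical semantics is geometric and largely informal in the literature, so the real work is to pin $\llbracket\,\cdot\,\rrbracket$ down precisely enough that each rule becomes a finite, checkable identity rather than an appeal to intuition. Second, the decorator rules that encode how Pauli operators commute past braids are delicate, because this is precisely where the byproduct bookkeeping enters and where an error would break soundness. To tame both, I would factor the interpretation through the map $\catname{KNOT}\to\catname{ZX}$ constructed in Section~\ref{sec:interpretation}: once each generator's ZX image is shown to reproduce the correlation-surface action on logical operators, the soundness of the ($0$,$\pi$)-fragment of ZX for Clifford semantics reduces the verification of (K1)--(K22) to checking equalities of ZX diagrams, which can be discharged by ZX rewriting rather than by hand-tracking stabilisers on the lattice.
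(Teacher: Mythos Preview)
Your overall architecture is right: soundness is functoriality of a semantic map, and the work is (i) pinning down the target semantics and (ii) checking (K1)--(K22). But the route you propose diverges from the paper's in an important way, and your factoring through $Z:\catname{KNOT}\to\catname{ZX}$ does not by itself close the argument.

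The paper does not prove Theorem~\ref{soundness-maps} by going through the ZX functor of Section~\ref{sec:interpretation}. Instead it builds a separate semantic target. It first argues, using chain complexes and Lefschetz duality, that the space of logical operators at a time-slice is a symplectic $\mathbb{F}_2$-vector space (first homology together with its dual), and that the evolution of logical operators under a braiding procedure is a \emph{Lagrangian relation} between these symplectic spaces; adding the red decorators (logical errors) promotes this to an \emph{affine} Lagrangian relation. It then defines an explicit functor $G:\catname{KNOT}\to\catname{AffLagRel}_{\mathbb{F}_2}$ (first on $\widehat{\catname{KNOT}}$ without decorators via a map $f$ into $\catname{AffRel}_{\mathbb{F}_2}$, then doubled into $\catname{LagRel}_{\mathbb{F}_2}$, then extended by affine shifts for the decorators), and verifies the $\catname{KNOT}$ axioms there. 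Soundness for surface-code logical maps is then a corollary, because $\catname{AffLagRel}_{\mathbb{F}_2}$ has been identified as exactly the category in which those logical-operator evolutions live (and is equivalent to $\catname{Spek}_2$, whose decorator-free image sits inside $\catname{Stab}_2$).

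Your plan, by contrast, treats the ZX functor $Z$ as the semantic interpretation. That is fine for proving ``$\catname{KNOT}$ is sound for the $(0,\pi)$-fragment of ZX'' (which is Proposition~\ref{prop:z-functor-zx} and its corollary), but Theorem~\ref{soundness-maps} asks for soundness with respect to the \emph{physical} logical maps. The sentence ``once each generator's ZX image is shown to reproduce the correlation-surface action on logical operators'' is precisely the missing bridge, and it is not a triviality: it is the content of the paper's homology/symplectic analysis. Without an independent argument that the $(0,\pi)$-ZX semantics of each generator coincides with the Lagrangian relation induced on first homology by the corresponding defect operation, your factoring through ZX assumes what needs to be shown. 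The paper's approach buys you that bridge for free by choosing $\catname{AffLagRel}_{\mathbb{F}_2}$ as the target from the outset; your approach would still work, but you would need to supply that identification separately.
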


Different boundary conditions lead to different sets of equivalences for braiding. For all defects, we allow the surfaces to connect a defect to a boundary, encircle the defects, and connect two defects of the same type. A correlation surface connecting two defects of the same type, is equivalent to two independent correlation surfaces connecting the defects to a common boundary of the appropriate type; we freely use this equivalence to simplify our analysis.

\begin{example}
	This example is an analogue of post-selected teleportation. In the next sections, we will explain how the generators can be associated with Hilbert spaces.
	\begin{allowdisplaybreaks}
			\begin{align*}
				\tikzfig{teleportation/1} &\overset{\eqref{eq:K7}}{=}
				\tikzfig{teleportation/2} \overset{\substack{\eqref{eq:spider-disappears} \\ \eqref{eq:K-circle-spider}\\ \eqref{eq:K-bar-spider}}}{=}
				\tikzfig{teleportation/3} \overset{\substack{\eqref{eq:K9} \\ \eqref{eq:K7}}}{=} \\[20pt]
				\tikzfig{teleportation/4} &\overset{\substack{\eqref{eq:spider-disappears} \\ \eqref{eq:K-circle-spider}\\ \eqref{eq:K-bar-spider}}}{=}
				\tikzfig{teleportation/5} \overset{\substack{\eqref{eq:K9} \\ \eqref{eq:K10}}}{=}
				\tikzfig{teleportation/6} \overset{\substack{\eqref{eq:overcross-kill} \\ \eqref{eq:K5}}}{=} \\[20pt]
				\tikzfig{teleportation/7} & =
				\tikzfig{teleportation/8} =
				\tikzfig{teleportation/9} \overset{\eqref{lemma:ring-around}}{=}
				\tikzfig{teleportation/10}
			\end{align*}
		\end{allowdisplaybreaks}
\end{example}

\vspace{-0.4cm}

\section{Interpreting the diagrams}
\label{sec:interpretation}
In this section, we study the meaning of the $\catname{KNOT}$ diagrams and provide interpretations. We also show how, by applying a doubling construction we may recover the standard defect-pair encoding: in this setting, we describe two sublanguages generated by commonplace braiding techniques, and show that they are each sound and complete for the (0, $\pi$)-fragment of ZX.

\vspace{-0.2cm}
\subsection{A partial logical semantics of $\catname{KNOT}$ via the ZX-calculus}
Thus far, we have only indicated that $\catname{KNOT}$ diagrams are intended to denote the logical effects of the corresponding braiding operations ignoring the byproduct operations.
We have not said what these effects are, or proven the soundness of the rewrites (K1)--(K22): this sort of denotation is common in the literature, but relies on knowledge of what those effects are.
One of our contributions is to describe the logical effect of braiding operations, by providing semantics for $\catname{KNOT}$ via ZX-diagrams.
We prove the correctness of these semantics in the Appendix, but for now they serve as exposition.

We proceed by defining a map $z$ from the generators of $\catname{KNOT}$ to ZX-diagrams  as follows (where we group together generators which have the same semantics for brevity): we present this in Table~\ref{table:KNOT-ZX-semantics}.

\begin{table}[h]
    \centering
    \vspace*{-3ex}
\noindent
~\hfill
\begin{minipage}{0.2\textwidth}
\begin{equation*}
    \left.
    \begin{aligned}
        \tikzfig{knot-generators-tikzit/1}
        \\[1ex]
        \tikzfig{knot-generators-tikzit/2}
    \end{aligned}
    \;\right\}
    \,\xmapsto{z}\;
    \tikz[tikzfig]{ \draw (0,0) -- (3,0); }
\end{equation*}
\end{minipage}
~\hfill~
\begin{minipage}{0.15\textwidth}
\begin{equation*}
    \left.
    \begin{aligned}
        \tikzfig{knot-generators-tikzit/23}
        \\[1ex]
        \tikzfig{knot-generators-tikzit/26}
    \end{aligned}
    \;\right\}
    \,\xmapsto{z}
    \tikzfig{zx/cup}
\end{equation*}
\end{minipage}
~\hfill~
\begin{minipage}{0.15\textwidth}
\begin{equation*}
    \left.
    \begin{aligned}
        \tikzfig{knot-generators-tikzit/24}
        \\[1ex]
        \tikzfig{knot-generators-tikzit/25}
    \end{aligned}
    \;\right\}
    \;\xmapsto{z}
    \tikzfig{zx/cap}
\end{equation*}
\end{minipage}
~\hfill~
\begin{minipage}{0.4\textwidth}
\begin{equation*}
    \left.
    \begin{gathered}
        \tikzfig{knot-generators-tikzit/7}  \,,
        \;
        \tikzfig{knot-generators-tikzit/8}  \,,
        \;
        \tikzfig{knot-generators-tikzit/9}
        \\[1ex]
        \tikzfig{knot-generators-tikzit/10}  \,,
        \;
        \tikzfig{knot-generators-tikzit/11}  \,,
        \;
        \tikzfig{knot-generators-tikzit/12}
    \end{gathered}
    \;\right\}
    \,\xmapsto{z}
    \tikzfig{zx/swap}
\end{equation*}
\end{minipage}
\hfill~
\vspace*{-1ex}

\begin{align*}
	\tikzfig{knot-generators-tikzit/3} \;\xmapsto{z}{}&\, \tikzfig{zx/green-effect}
	\qquad&
	\tikzfig{knot-generators-tikzit/4} \;\xmapsto{z}{}&\, \tikzfig{zx/green-state}
	\qquad&
	\tikzfig{knot-generators-tikzit/5} \;\xmapsto{z}{}&\, \tikzfig{zx/red-effect}
\\[1ex]
	\tikzfig{knot-generators-tikzit/19} \;\xmapsto{z}{}& \tikzfig{zx/red-merge}
	\qquad&
	\tikzfig{knot-generators-tikzit/21} \;\xmapsto{z}{}& \tikzfig{zx/green-merge}
	\qquad&
	\tikzfig{knot-generators-tikzit/6}  \;\xmapsto{z}{}&\, \tikzfig{zx/red-state}
\end{align*}

\vspace*{-2ex}

\noindent
~
\hfill
\begin{minipage}{0.2\textwidth}
\begin{equation*}
    \left.
    \begin{aligned}
        \tikzfig{knot-generators-tikzit/27}
        \\[1ex]
        \tikzfig{knot-generators-tikzit/29}
    \end{aligned}
    \;\right\}
    \,\xmapsto{z}
    \tikzfig{zx/red-pi}
\end{equation*}
\end{minipage}
~\hfill~
\begin{minipage}{0.2\textwidth}
\begin{equation*}
    \left.
    \begin{aligned}
        \tikzfig{knot-generators-tikzit/30}
        \\[1ex]
        \tikzfig{knot-generators-tikzit/28}
    \end{aligned}
    \;\right\}
    \,\xmapsto{z}
    \tikzfig{zx/green-pi}
\end{equation*}
\end{minipage}
~\hfill~
\begin{minipage}{0.25\textwidth}
\begin{equation*}
    \left.
    \begin{aligned}
        \tikzfig{knot-generators-tikzit/13}
        \\[1ex]
        \tikzfig{knot-generators-tikzit/14}
    \end{aligned}
    \;\right\}
    \,\xmapsto{z}
    \tikzfig{zx/z-cnot}
\end{equation*}
\end{minipage}
~\hfill~
\begin{minipage}{0.25\textwidth}
\begin{equation*}
    \left.
    \begin{aligned}
        \tikzfig{knot-generators-tikzit/15}
        \\[1ex]
        \tikzfig{knot-generators-tikzit/16}
    \end{aligned}
    \;\right\}
    \,\xmapsto{z}
    \tikzfig{zx/x-cnot}
\end{equation*}
\end{minipage}
\hfill
\hfill
~
    \caption{%
        \label{table:KNOT-ZX-semantics}%
		Semantics for $\catname{KNOT}$ diagram generators, in terms of ZX diagrams. These semantics represent the effect of the corresponding braiding procedures (in the absence of byproduct operations), on qubits encoded in correlation surfaces. Note that, as a consequence, these maps are not necessarily norm-preserving.}
    \vspace*{-1ex}
    \end{table}

\begin{proposition}
  \label{prop:z-functor-zx}
	The map $z$ lifts to a functor $Z: \catname{KNOT} \rightarrow \catname{ZX}$.

	\hfill
	\textup{[Proof in Appendix \ref{proof:z-functor-zx}]}
\end{proposition}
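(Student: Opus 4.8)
The plan is to invoke the universal property of $\catname{KNOT}$ as a monoidal category presented by generators and relations. Since $\catname{KNOT}$ is freely generated, as a monoidal category, by the generators of the definition subject only to the equivalences (K1)--(K22) of Table~\ref{table:KNOT}, a strict monoidal functor $Z\colon\catname{KNOT}\to\catname{ZX}$ is determined uniquely by its action on objects and on generating morphisms, provided two conditions hold: the assigned images have the correct types (matching domains and codomains), and the image of each defining relation is a valid equality in $\catname{ZX}$. Preservation of composition and of the tensor product (concatenation of bit-strings versus the monoidal product of ZX wires) is then automatic. So the argument splits into an object assignment, a typing check, and a soundness check for the relations; the last of these is the real content.

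First I would fix the object map: send each bit, whether the grey $0$-wire or the black $1$-wire, to the single-qubit generating object of $\catname{ZX}$, and extend to bit-strings by sending concatenation to the monoidal product, so a bit-string of length $n$ maps to the $n$-fold product of the qubit object regardless of its colours. With this convention the typing check against Table~\ref{table:KNOT-ZX-semantics} is routine: identity wires map to a single ZX wire; same-type crossings map to the swap; cups and caps map to the ZX cup and cap; the empty-bit-string generators map to the appropriate green/red states and effects; bifurcation maps map to one- and two-legged spiders; the Pauli decorators map to the red and green $\pi$-phases; and the opposite-type crossings map to the two CNOT diagrams. In each case the image carries exactly the domain and codomain demanded by the generator, so $z$ respects arities.

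The substance is then checking that $z$ carries both sides of each relation to equal ZX morphisms, which I would organise along the blocks of Table~\ref{table:KNOT}. The structural planar-isotopy rules \eqref{eq:K1}, \eqref{eq:K3}, \eqref{eq:K2}, \eqref{eq:K12}, \eqref{eq:overcross-kill}, \eqref{eq:K4}, \eqref{eq:K5} become instances of the symmetric, compact-closed structure of $\catname{ZX}$: their images reduce to naturality of the swap, the yanking (snake) equations for the ZX unit and counit, and the fact that the swap squares to the identity, all of which hold since the relevant images are literal swaps and ZX (co)units. The base rules \eqref{eq:loop-around-one-wire}--\eqref{eq:spider-disappears} become spider fusion together with the identity/copy laws: a bifurcation composed with its reverse fuses into a single spider, and a spider with matching legs collapses, which are precisely the ZX spider and identity rules.

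The decorator rules \eqref{eq:K9}--\eqref{eq:K-circle-bar-commute} are where the genuine ZX computation lives, and I expect them to be the main obstacle. Here one must verify how the $\pi$-decorators propagate through the images of the generators: sliding a red or green $\pi$ past a CNOT, past a spider, or around a loop, using the $\pi$-copy rule, the state-copy rule, and colour change. In particular the rules pushing a decorator through a CNOT image encode the Pauli-conjugation identities for the CNOT (such as $\mathrm{CNOT}\,(X\otimes I)=(X\otimes X)\,\mathrm{CNOT}$ and its counterparts), and these are the equalities needing honest ZX derivations rather than formal structure. I would also single out \eqref{eq:loop-D-commute}, whose statement quantifies over an arbitrary red-decorator-free diagram $D$ and over $p,q\in\{0,1,2\}$: its image must be discharged by induction on the structure of $D$, showing that an encircling loop commutes past $z(D)$ because $z(D)$ is built from single-colour spiders and CNOTs through which the loop can be transported, with the $p,q$ counts tracked as $\pi$-phases. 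Once every relation of Table~\ref{table:KNOT} is verified in this manner, the universal property delivers the monoidal functor $Z$ extending $z$, completing the proof.
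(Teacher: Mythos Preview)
Your strategy is exactly the paper's: verify that each relation (K1)--(K22) holds in $\catname{ZX}$ under $z$, after which the universal property of the presentation yields the functor. The paper frames this less formally, simply stating that one must ``show that the rules of $\catname{KNOT}$ hold in ZX-calculus via the interpretation map $z$,'' and then computes only those rules that need more than one ZX step.

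One concrete correction to your triage. You classify \eqref{eq:overcross-kill} among the purely structural planar-isotopy rules that ``reduce to naturality of the swap, the yanking (snake) equations \ldots.'' That does not work: opposite-type crossings are sent by $z$ to \gatename{CNOT} diagrams, not swaps, so \eqref{eq:overcross-kill} under $z$ is a statement about capping one leg of a \gatename{CNOT} with a state/effect, and its verification needs the ZX copy rule and spider fusion, not just compact-closed structure. The paper carries this out explicitly. More generally, your expectation that the decorator rules are ``where the genuine ZX computation lives'' is somewhat inverted: in the paper, the multi-step derivations are concentrated in the base rules \eqref{eq:loop-around-one-wire} and \eqref{eq:K6}, the planar-isotopy rule \eqref{eq:overcross-kill}, and the single decorator rule \eqref{eq:K7}, with the remaining decorator rules disposed of by one ZX rule each. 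Your treatment of \eqref{eq:loop-D-commute} by induction on the structure of $D$ matches the paper's argument.
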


\begin{corollary}
	$\catname{KNOT}$ is sound for the ($0$, $\pi$)-fragment of ZX-calculus via the functor $Z$.
\end{corollary}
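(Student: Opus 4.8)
The plan is to observe that soundness is, essentially by definition, a restatement of functoriality, so the bulk of the work has already been discharged in Proposition~\ref{prop:z-functor-zx}. Recall that $\catname{KNOT}$ is the category whose morphisms are diagrams built from the generators, taken modulo the equivalences (K1)--(K22) of Table~\ref{table:KNOT}. To say that $\catname{KNOT}$ is sound for (a fragment of) ZX-calculus via $Z$ means precisely that whenever two $\catname{KNOT}$ diagrams $D_1$ and $D_2$ denote the same morphism of $\catname{KNOT}$ (i.e.\ are provably equal using the rules of Table~\ref{table:KNOT}), their images $Z(D_1)$ and $Z(D_2)$ are equal as ZX-diagrams.

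First I would note that this implication is exactly the well-definedness of $Z$ on equivalence classes of diagrams, which is guaranteed by Proposition~\ref{prop:z-functor-zx}: a functor out of $\catname{KNOT}$ must send equal morphisms to equal morphisms, and equality of morphisms in $\catname{KNOT}$ is generated by the relations (K1)--(K22). Hence soundness with respect to the target category $\catname{ZX}$ is immediate, with no further verification of the individual rules required at this stage.

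Second, I would corestrict the codomain to the $(0,\pi)$-fragment. This requires only inspecting the generator assignments in Table~\ref{table:KNOT-ZX-semantics}: every generator of $\catname{KNOT}$ is sent to a ZX-diagram built from spiders whose phases are either $0$ or $\pi$ (the Pauli decorators produce $X$- and $Z$-spiders of phase $\pi$, while all remaining generators produce phase-free spiders together with swaps, cups, and caps). Since the $(0,\pi)$-fragment is closed under sequential composition and tensor product, the image of any composite diagram again lies in this fragment, so $Z$ factors through the inclusion of the $(0,\pi)$-fragment into $\catname{ZX}$; composing this factorisation with the soundness observation of the previous paragraph yields soundness for the $(0,\pi)$-fragment specifically.

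I do not expect a genuine obstacle here: all the substantive content lies in verifying the rules (K1)--(K22) under $z$, which is precisely the statement of the preceding Proposition and is deferred to its appendix proof. The only point requiring care in the corollary itself is confirming that no generator image introduces a non-Pauli phase, and this is settled by direct inspection of the semantics table.
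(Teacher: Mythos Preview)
Your proposal is correct and matches the paper's approach: the paper states this as an immediate corollary of Proposition~\ref{prop:z-functor-zx} with no separate proof, and your argument spells out exactly why---functoriality gives soundness, and inspection of Table~\ref{table:KNOT-ZX-semantics} shows the image lands in the $(0,\pi)$-fragment.
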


\noindent
Thus, we may interpret and analyse $\catname{KNOT}$ with reference to (a fragment of) the ZX-calculus, via the functor $Z$.
Mappings similar to ours have appeared in the literature (e.g., \cite{bombin2023unifying, Horsman-2011,paler2017fault, hanks2020effective}), but we provide a formal proof that these rules are sound with respect to the maps they implement.

\vspace*{-2ex}
\paragraph{Remark.} Similar expositions are often based on the assumption of an infinite surface code patch, which lacks boundaries (e.g., see \cite{raussendorf2006fault,raussendorf2007topological}).
In contrast, we assume that the patch includes both boundary types at every time slice of the error correction procedure. This assumption introduces some changes to the semantics and syntax of the diagrams if we treat one defect as representing a qubit. If we treat a defect pair as representing a qubit, the usual semantics of the braided diagrams from the literature can be recovered as we will see below.

\vspace{-0.2cm}
\subsection{Double defects}
As we describe below, braiding individual defects to encode logical operations has some practical limitations.
A standard remedy is to use a further `defect pair' encoding for logical qubits~\cite{Horsman-2011,raussendorf2006fault,raussendorf2007topological,paler2016synthesis}. With a single wire representing a qubit, a \gatename{CNOT} gate can only be performed with a black `primal' corresponding to the control qubit, and a gray `dual' wire corresponding to the target, using the mapping $z$ defined as in Table~\ref{table:KNOT-ZX-semantics} (see Figure~\ref{fig:cnots-singls}).

\begin{figure}[!t]
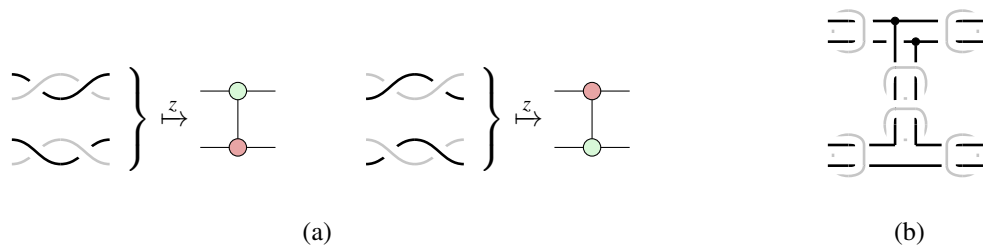

	\centering
	\begin{subfigure}[b]{0.65\textwidth}
		\begin{align*}
			\left.
			\begin{aligned}
				\tikzfig{knot-generators-tikzit/13}
				\\[1ex]
				\tikzfig{knot-generators-tikzit/14}
			\end{aligned}
			\;\right\}
			\,\xmapsto{z}{}&
			\tikzfig{zx/z-cnot}
			&
			\left.
			\begin{aligned}
				\tikzfig{knot-generators-tikzit/15}
				\\[1ex]
				\tikzfig{knot-generators-tikzit/16}
			\end{aligned}
			\;\right\}
			\,\xmapsto{z}{}&
			\tikzfig{zx/x-cnot}
		\end{align*}
	\caption{}
	\label{fig:cnots-singls}
	\end{subfigure}
	\hspace{0.03\textwidth}
	\begin{subfigure}[b]{0.25\textwidth}
		\begin{equation*}
			\tikzfig{zx-knot/cnot-doubled}.
		\end{equation*}
	\caption{}
	\label{fig:cnots-doubled}
	\end{subfigure}
	\caption{Two ways of performing a \gatename{CNOT} gate. (a) Using a single defect, the control is always a black wire and the target is always a gray wire. (b) A \gatename{CNOT} gate using a pair of defects -- the gate can be performed in either direction.}
	\label{fig:cnots-example}
\end{figure}

A solution to this is offered in Ref.~\cite{raussendorf2007topological} (albeit motivated in a somewhat different context).
This solution is to encode logical qubits in the logical operators of a pair of defects, and to realise \gatename{CNOT} operations in this encoding with a braiding procedure that we may denote in $\catname{KNOT}$ as shown in Figure~\ref{fig:cnots-doubled}.

This construction enables \gatename{CNOT} gates to be performed in both directions, as the roles of control and target are encoded in the geometry of the procedure rather than the defect type.
This technique requires a different encoding from what is described in Table~\ref{table:KNOT-ZX-semantics}, but which we may describe in relation to the semantics provided by the functor $Z$ described above.

We now formalise and generalise this construction, in the setting of a finite lattice with external boundaries, by providing semantics for the doubled construction in the ZX-calculus in which pairs of defects correspond to a single wire. The category is similar to $\text{CPM}[\catname{KNOT}]$ with the addition of cross-caps for arbitrary wires.
In our subcategory, the black wires will be treated as `qubit' wires, and the grey wires as mediating the interactions between the black wires. This corresponds to a convention in compiling procedures to defect braiding~\cite{paler2014design,paler2016synthesis,paler2017fault,paler2018specification,paler2019surfbraid}, in which `primal' defects are the main defects and the dual defects serve only to help realise logical transformations.

We refer to the structures of eqs. \eqref{duality-1} and \eqref{duality-2} on the LHS to be ribbon-like and the structures on the RHS to be tangle-like. Below, we describe two subtheories in which these different types of structures are generators, and show that each are complete for the ($0$, $\pi$)-fragment of ZX.

\vspace{-0.2cm}
\subsubsection{Tangle-like and ribbon-like structures}

\begin{definition}
	Define $\catname{KNOT}_{\text{doubled}}$ to be a subcategory of $\catname{KNOT}$ with morphisms generated by the following doubled diagrams:

	\hspace*{-16.5ex}
	\begin{equation}
		\label{duality-1}
		\tikzfig{lemmas/lemma2-r-b-copy} \overset{\eqref{lemma-duality}}{=} \tikzfig{lemmas/lemma2-l-copy}
	\end{equation}
	\begin{equation}
		\label{duality-2}
		\tikzfig{lemmas/lemma-3-r-copy} \overset{\eqref{lemma-duality}}{=} \tikzfig{lemmas/lemma-3-l-copy}
	\end{equation}
	\vspace*{-10ex}
	\\
	\begin{center}
	\begin{equation}
		\label{swap-duality}\tikzfig{knot-doubled/swap} \overset{\eqref{lemma:ring-around}}{=} \tikzfig{knot-doubled/swap-b}
	\end{equation}
	\end{center}
	where the red decorators are optional and we also quotient by the precise placement of the red decorators on inner loops (whose exact position is negotiable through the rule~\eqref{eq:K7}). The numerical indices for the wires and the dashed lines are meant only to help the reader and are not a part of the theory. Two diagrams are considered equal if they can be transformed into one another by the rules of $\catname{KNOT}$ (possibly passing through immediate steps which are not in $\catname{KNOT}_{\text{doubled}}$).
\end{definition}
\noindent

All six of these generators are generalisations of patterns that are used to describe braiding procedures \cite{raussendorf2007topological, paler2014design}. However, it is possible to show (see Lemma~\ref{lemma-duality}) using only rewrites of $\catname{KNOT}$ that the left-hand side generators can be rewritten into the right-hand side generators. By the soundness of $\catname{KNOT}$ for $\catname{ZX}$ through $Z$, we have that the ZX-diagrams corresponding to the diagrams on the right-hand side can be rewritten to the ZX-diagrams corresponding to the diagrams on the right-hand side. We can therefore reduce our analysis of this category to a treatment of the right-hand side generators; we do so below.

A number of interesting identities hold in $\catname{KNOT}_{\text{doubled}}$. For example, we have the following:

\begin{allowdisplaybreaks}
\hspace{-1.3cm}
\hspace*{-10ex}
\begin{equation}
	\label{spider-1-knot}
	\tikzfig{zx-knot/spider1-example} = \tikzfig{zx-knot/lemma2-l-copy} \end{equation}

\begin{equation}
	\label{spider-2-knot}
	\tikzfig{zx-knot/spider2-example} = \tikzfig{zx-knot/lemma-3-l-copy}\end{equation}

\end{allowdisplaybreaks}
\noindent
If we translate the LHS of eq. \eqref{spider-1-knot} via the functor $Z$, we obtain the following equations in the ZX-calculus. The one below corresponds to the spider fusion law:
\begin{allowdisplaybreaks}
\begin{align}
	\tikzfig{zx-knot-spider/fusion1} &= \tikzfig{zx-knot-spider/fusion1-1} =
	\tikzfig{zx-knot-spider/fusion1-2} = \\ = \tikzfig{zx-knot-spider/fusion1-4} &=
	\tikzfig{zx-knot-spider/fusion1-5}
\end{align}
\end{allowdisplaybreaks}
\noindent
This is the equivalent of bialgebra law (for the $\catname{KNOT}$ diagram see eq.~\eqref{bialgebra}):
\begin{allowdisplaybreaks}
	\begin{align}
		\tikzfig{zx-knot-spider/bialgebra} &= \tikzfig{zx-knot-spider/bialgebra-1} =
		\tikzfig{zx-knot-spider/bialgebra-2} = \\ = \tikzfig{zx-knot-spider/bialgebra-2-1} &= \tikzfig{zx-knot-spider/bialgebra-3}
	\end{align}
	\end{allowdisplaybreaks}
\noindent

We give the full list of these identities in the Appendix (see Lemmas~\ref{lemma:green-spider-zx-knot}-\ref{lemma:circle-copies}). These lemmas might seems arbitrary but are crucial in observing the relationship between the ZX-calculus and $\catname{KNOT}_{\text{doubled}}$. In fact, eq. \eqref{bialgebra} is akin to the bialgebra law in the ZX-calculus, and the other equations are the spider and copy equations. The right-hand side of \eqref{green-spider-zx-knot} is the green spider of ZX-calculus and the right-hand side of \eqref{red-spider-zx-knot} is the red spider of ZX-calculus.

\begin{definition}
	Let $\catname{ZX}_{\text{doubled}}$ be the category of the ZX diagrams in the image of the functor $Z$ restricted to $\catname{KNOT}_{\text{doubled}}$.
\end{definition}

\begin{lemma}
	\label{lemma:braided-zx-gens}
	Explicitly, the morphisms of $\catname{ZX}_{\text{doubled}}$ are generated by:
	\vspace*{-1ex}
	\begin{equation*}
		\tikzfig{zx-knot-spider/red-spider-exploded} = \tikzfig{zx/doubled/red-spider} \qquad \qquad \tikzfig{zx-knot-spider/green-spider-exploded} = \tikzfig{zx/doubled/green-spider}
	\end{equation*}
	\begin{equation*}
		\tikzfig{zx/doubled/swap}
	\end{equation*}
\end{lemma}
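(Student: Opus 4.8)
The plan is to exploit the fact that, by definition, $\catname{ZX}_{\text{doubled}}$ is the image $Z(\catname{KNOT}_{\text{doubled}})$ of a generated subcategory under the strong monoidal functor of \autoref{prop:z-functor-zx}. Since a functor preserves composition and tensor product, the image of a subcategory generated by a set $G$ of morphisms is exactly the subcategory generated by $Z(G)$. Hence it suffices to compute $Z$ on each generator of $\catname{KNOT}_{\text{doubled}}$ and simplify the resulting ZX-diagram within the $(0,\pi)$-fragment; the claim is that, after simplification, these images are precisely the doubled red spider, the doubled green spider, and the swap displayed in the statement.

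First I would enumerate the generators. By definition these are the ribbon-/tangle-like diagrams of \eqref{duality-1} and \eqref{duality-2} (each in its red-decorated and undecorated form) together with the doubled swap of \eqref{swap-duality}. Because the two sides of \eqref{duality-1} and \eqref{duality-2} are already equal in $\catname{KNOT}$ by Lemma~\ref{lemma-duality}, I may push whichever representative is more convenient through $Z$; the tangle-like right-hand sides are the natural choice, as their building blocks (opposite-type crossings, bifurcation maps, cups/caps, and Pauli decorators) all have explicit images listed in Table~\ref{table:KNOT-ZX-semantics}.

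Next I would carry out the translation generator by generator, using functoriality to compute $Z$ of a composite as the composite of the $Z$-images of its pieces, then reduce. Each undecorated tangle-like generator unfolds, via Table~\ref{table:KNOT-ZX-semantics}, into the ``exploded'' network of merge/copy nodes already exhibited in the fusion and bialgebra computations preceding the statement (cf.\ \eqref{bialgebra}); applying ZX spider fusion collapses this network to a single spider, namely the green spider of \eqref{green-spider-zx-knot} for \eqref{duality-1} and the red spider of \eqref{red-spider-zx-knot} for \eqref{duality-2}, which is exactly the asserted equality between the exploded and compact forms. An optional red decorator contributes a $\pi$ phase through its Pauli-decorator image, keeping the result inside the $(0,\pi)$-fragment, while the doubled swap of \eqref{swap-duality} maps to a pair of ZX swaps, i.e.\ the swap generator claimed. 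These calculations are precisely those assembled in Lemmas~\ref{lemma:green-spider-zx-knot}--\ref{lemma:circle-copies}, which I would cite rather than reprove. Since each claimed generator arises as $Z$ of a $\catname{KNOT}_{\text{doubled}}$ generator, and conversely every such $Z$-image reduces to one of them, the two generating sets coincide.

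The main obstacle I anticipate is bookkeeping in the simplification step rather than any conceptual difficulty: one must check that the exploded network obtained from each tangle-like diagram fuses to a single spider of the correct colour and of \emph{arbitrary} arity — so that the full spider family, not merely a fixed-arity instance, is generated and does so uniformly for both colours — and that the scalar and identity-wire factors introduced by cups and caps are correctly discarded. One must also confirm that the optional red decorators consistently yield $\pi$ phases under every rewrite used to normalise their positions on inner loops via \eqref{eq:K7}. Verifying that spider fusion on the exploded forms genuinely produces every arity is the step requiring the most care.
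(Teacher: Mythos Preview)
Your approach is correct and is essentially the same as the paper's: the paper's proof is the one-liner ``Using the definition of the functor $Z$ and by applying the spider fusion law of the ZX-calculus,'' which is exactly the plan you have spelled out in detail (push the $\catname{KNOT}_{\text{doubled}}$ generators through $Z$ via Table~\ref{table:KNOT-ZX-semantics}, then fuse). Your additional care about arbitrary arity and decorator placement is prudent but not strictly needed, since the $\catname{KNOT}_{\text{doubled}}$ generators in \eqref{duality-1}--\eqref{swap-duality} are already parametrised by $m,n$ and by the optional decorator.
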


\begin{proof}
	Using the definition of the functor $Z$ and by applying the spider fusion law of the ZX-calculus.
\end{proof}

\begin{definition}
	\label{def:K}
	Let $K: \catname{ZX}_{\text{doubled}} \rightarrow \catname{ZX}$ be the functor which acts as division by 2 on the objects and on morphisms it is defined by a map $k$:
	\vspace*{-1ex}
	\begin{equation*}
		\tikzfig{zx/doubled/red-spider} \xmapsto{k} \tikzfig{zx/doubled/undoubled/red-spider}, \qquad
		\tikzfig{zx/doubled/green-spider} \xmapsto{k} \tikzfig{zx/doubled/undoubled/green-spider},
	\end{equation*}
	\begin{equation*}
		\tikzfig{zx/doubled/swap} \xmapsto{k} \tikzfig{zx/doubled/undoubled/swap}
	\end{equation*}
	The identity, cups, and caps can be obtained from a 2-legged spider with appropriate domain/codomain.
    \vspace*{-1ex}
\end{definition}

The functor $K$ can be seen as a projection of a 2-qubit space, where the $\ket{+}$ is encoded by a Bell state, to a 1-qubit space. The ``physical'' maps acting on the 2-qubit space are each interpreted as maps acting on the ``logical'' 1-qubit space. We are simply using a 4-dimensional space to represent a qubit. This is the difference between the functor $K \circ Z$ (depicted in Table~\ref{table:KNOT-doubled-ZX-semantics}) and the functor $Z$ -- in effect a further level of encoding in which logical qubits are represented by (the correlation surfaces on) pairs of defects rather than single defects.

\begin{table}[t]
    \centering
    \vspace*{-3ex}
\noindent
	\begin{tabular}{c}
		\begin{tabular}{cc}
			$\tikzfig{zx-knot/spider1-r-def} \xmapsto{k \circ z} \tikzfig{zx/doubled/undoubled/red-spider}$
			&
			$\tikzfig{zx-knot/spider2-r-def} \xmapsto{k \circ z} \tikzfig{zx/doubled/undoubled/green-spider}$
		\end{tabular}\\
		\begin{tabular}{c}
			$\tikzfig{knot-doubled/swap}\xmapsto{k \circ z} \tikzfig{zx/doubled/undoubled/swap}$
		\end{tabular}
	\end{tabular}
	~
    \caption{%
        \label{table:KNOT-doubled-ZX-semantics}%
        We modify the semantics specified by the functor $Z$ to account for the doubled structure of $\catname{KNOT}_{\text{doubled}}$. By binary $l$ we done the presence or an absence of an operator. We make use of the functor $K$ (acting on maps as $k$) to lift the semantics of $\catname{KNOT}$ to $\catname{KNOT}_{\text{doubled}}$.}
    \vspace*{-1ex}
    \end{table}

\begin{proposition}
\label{prop:zx-analogue-knot}
We have the following correspondence between $\catname{KNOT}_{\text{doubled}}$ and ($0$, $\pi$)-ZX-calculus: \ref{lemma:green-spider-zx-knot}, \ref{lemma:red-spider-zx-knot}, \ref{lemma:red-spider-zx-knot-state} and \ref{lemma:green-spider-zx-knot-state} are the spider rules for both colours, eq. \ref{lemma:red-spider-copy-thr-green-zx-knot} and \ref{lemma:green-spider-copy-thr-red-zx-knot} are the copy rules for both colours, eq. \ref{lemma:bialgebra} is the bialgebra law, eq. \ref{lemma:bar-copies}, \ref{lemma:circle-passes}, \ref{lemma:bar-passes} and \ref{lemma:circle-copies} are the $\pi$-copy rules.
\end{proposition}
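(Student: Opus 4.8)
The plan is to prove the correspondence by direct translation, exploiting the fact that each listed identity has already been established as an equality of $\catname{KNOT}_{\text{doubled}}$ diagrams (Lemmas~\ref{lemma:green-spider-zx-knot}--\ref{lemma:circle-copies}). Since $Z\colon\catname{KNOT}\to\catname{ZX}$ is a functor (Proposition~\ref{prop:z-functor-zx}) and $K$ is the functor of Definition~\ref{def:K}, applying the composite $K\circ Z$ to both sides of each lemma automatically yields a valid equation of $\catname{ZX}$; the content of the proposition is therefore not to re-derive these equalities, but to \emph{identify} each resulting ZX equation with the named rule of the $(0,\pi)$-fragment. I would organise the verification by the four groups named in the statement.

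First I would treat the spider rules \ref{lemma:green-spider-zx-knot}, \ref{lemma:red-spider-zx-knot}, \ref{lemma:red-spider-zx-knot-state} and \ref{lemma:green-spider-zx-knot-state}. By Lemma~\ref{lemma:braided-zx-gens}, each doubled generator maps under $Z$ to a pair of parallel ZX spiders of a single colour, and $k$ (Definition~\ref{def:K}) collapses such a doubled pair to one spider of the same colour. The spider-fusion computation already displayed in the text for \eqref{spider-1-knot} is the template: two adjacent doubled spiders fuse in $\catname{ZX}_{\text{doubled}}$, and under $k$ this becomes exactly spider fusion for a single colour. The state and effect variants follow by the same argument after capping off the relevant legs with the cups and caps of Table~\ref{table:KNOT-ZX-semantics}.

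Next I would handle the copy rules \ref{lemma:red-spider-copy-thr-green-zx-knot} and \ref{lemma:green-spider-copy-thr-red-zx-knot} together with the $\pi$-copy rules \ref{lemma:bar-copies}, \ref{lemma:circle-passes}, \ref{lemma:bar-passes} and \ref{lemma:circle-copies}. Here the Pauli decorators of $\catname{KNOT}$ map under $Z$ to $\pi$-phases (Table~\ref{table:KNOT-ZX-semantics}), and the decorator rewrites \eqref{eq:K-bar-slide}--\eqref{eq:K-circle-bar-commute} encode how these phases commute past, or are duplicated by, the bifurcation structure. Translating each identity via $K\circ Z$ and simplifying with spider fusion, I expect to recover the standard ``$\pi$ copies through an opposite-coloured spider'' rule and the state-copy rule; these are largely routine once the decorator semantics is fixed.

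The main obstacle will be the bialgebra law \ref{lemma:bialgebra}. Unlike the spider and copy cases, its doubled form has four wires meeting two green and two red nodes with an internal crossing, so confirming that $K\circ Z$ sends it to the ZX bialgebra rule requires careful bookkeeping of which wires the doubling identifies and of how the opposite-type crossing of $\catname{KNOT}$ (whose $Z$-image is a \gatename{CNOT}, cf.\ Table~\ref{table:KNOT-ZX-semantics}) generates the complete bipartite connectivity characteristic of bialgebra; the multi-step ZX derivation already sketched for \eqref{bialgebra} supplies the backbone. A secondary but genuine point to check throughout is that the scalar factors introduced by undoubling are trivial in the $(0,\pi)$-fragment, so that each equation matches its named rule on the nose rather than only up to a nonzero scalar.
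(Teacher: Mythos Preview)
Your approach is essentially the same as the paper's: the paper's proof is a single line, ``Convert to ZX-calculus using the functor $Z$ and apply the spider fusion rule,'' which is exactly the translation-and-simplification strategy you describe. Your organisation into four groups and the explicit case analysis are more detailed than what the paper provides, but the underlying mechanism is identical.

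Two minor remarks. First, the paper phrases the translation via $Z$ alone rather than $K\circ Z$; the worked examples preceding the proposition (spider fusion and bialgebra) stay in $\catname{ZX}_{\text{doubled}}$ and recognise the rules there, so applying $K$ is optional for the identification. Second, your concern about scalar factors is unnecessary: the $(0,\pi)$-fragment in this paper is explicitly taken modulo scalars, so there is nothing to check.
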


\begin{proof}
Convert to ZX-calculus using the functor $Z$ and apply the spider fusion rule.
\end{proof}

\begin{theorem}
	$\catname{KNOT}_{\text{doubled}}$ is sound and complete for the ($0$, $\pi$)-fragment of ZX-calculus.

	\hfill
	\textup{[Proof in Appendix \ref{appendix-prove-soundness}].}
\end{theorem}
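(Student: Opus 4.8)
The plan is to show that the composite semantic functor $K \circ Z$, with $Z$ as in \Cref{prop:z-functor-zx} and $K$ as in \Cref{def:K}, restricts on $\catname{KNOT}_{\text{doubled}}$ to a full functor onto the $(0,\pi)$-fragment of $\catname{ZX}$ that is faithful with respect to the equational theory: that is, $d_1 = d_2$ in $\catname{KNOT}_{\text{doubled}}$ if and only if $(K\circ Z)(d_1) = (K\circ Z)(d_2)$ in the $(0,\pi)$-fragment. The forward implication is soundness and the converse is completeness, so establishing this biconditional proves the theorem.

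Soundness I would dispatch first, as it is essentially immediate. Since $Z$ is a functor on all of $\catname{KNOT}$ (\Cref{prop:z-functor-zx}), it sends every equality provable by the rules of $\catname{KNOT}$ — including equalities between doubled diagrams obtained via intermediate non-doubled steps, as permitted in the definition of $\catname{KNOT}_{\text{doubled}}$ — to an equality of $\catname{ZX}$-diagrams; restricted to $\catname{KNOT}_{\text{doubled}}$ these land in $\catname{ZX}_{\text{doubled}}$ by \Cref{lemma:braided-zx-gens}. Post-composing with the functor $K$ preserves equalities, and by \Cref{table:KNOT-doubled-ZX-semantics} each generator is sent to a $(0,\pi)$-spider or a swap, so the image lies in the $(0,\pi)$-fragment. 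Hence provable equalities in $\catname{KNOT}_{\text{doubled}}$ are sound for that fragment.

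For completeness I would proceed in three steps. First, fullness: by \Cref{lemma:braided-zx-gens} the generators of $\catname{ZX}_{\text{doubled}}$ are the doubled green/red spiders and the doubled swap, and by \Cref{def:K} the map $k$ sends these exactly to the green spider, red spider, and swap generators of the $(0,\pi)$-fragment; composing preimages of these generators then realises every $(0,\pi)$-diagram, so $K\circ Z$ is surjective on morphisms. Second, I fix a complete axiomatisation of the $(0,\pi)$-fragment whose rules are the spider-fusion laws for both colours, identity removal, the bialgebra law, the state-copy rules, and the $\pi$-copy (Pauli-commutation) rules; the point of \Cref{prop:zx-analogue-knot} is precisely that each of these axioms is mirrored by an equality provable in $\catname{KNOT}_{\text{doubled}}$ using the rules of $\catname{KNOT}$. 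Third, given any two $(0,\pi)$-diagrams with $(K\circ Z)(d_1)=(K\circ Z)(d_2)$, I take a derivation of this equality in the chosen axiomatisation and replay it step by step in $\catname{KNOT}_{\text{doubled}}$: each local application of a ZX axiom is matched, through the dictionary of \Cref{prop:zx-analogue-knot}, by the corresponding provable $\catname{KNOT}$-rewrite on a chosen preimage, yielding $d_1 = d_2$. Equivalently, one may route both diagrams to a common spider normal form and observe that every normalising rewrite lifts.

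The main obstacle is the second step: assembling a genuinely complete rule set for the $(0,\pi)$-fragment and verifying that every one of its axioms lifts, not merely the headline spider, bialgebra, and copy laws. Two points need care. First, the $\catname{KNOT}_{\text{doubled}}$ generators carry optional red decorators, taken up to the placement quotient of \eqref{eq:K7}, so the $\pi$-copy and Pauli-commutation cases of \Cref{prop:zx-analogue-knot} must be checked to cover every sign and placement ambiguity a ZX derivation can introduce. Second, since the semantics of \Cref{table:KNOT-ZX-semantics} are not norm-preserving, the whole correspondence must be read in the scalar-free (or scalar-up-to-nonzero) equational theory, and one should confirm both that the chosen ZX axiomatisation is complete at that scalar level and that the undoubling $K$ does not identify distinct $(0,\pi)$-morphisms on the image $\catname{ZX}_{\text{doubled}}$, so that the target is exactly the $(0,\pi)$-fragment and not a proper quotient of it. With these in hand, fullness together with the lifted relations yields faithfulness, and $K\circ Z$ presents $\catname{KNOT}_{\text{doubled}}$ as equivalent to the $(0,\pi)$-fragment, establishing both soundness and completeness.
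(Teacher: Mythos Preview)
Your proposal is correct and takes essentially the same approach as the paper. The paper likewise routes through the composite $K\circ Z$, obtains soundness from functoriality of $Z$ on all of $\catname{KNOT}$, and proves completeness by lifting each axiom of the $(0,\pi)$-fragment into $\catname{KNOT}_{\text{doubled}}$ via Proposition~\ref{prop:zx-analogue-knot} and then appealing to the unique normal form for $(0,\pi)$-ZX; your ``replay the derivation'' argument and your ``route to a common normal form'' alternative match the paper's two presentations of the completeness step, and the caveats you flag (scalar-free reasoning, decorator placement, and $K$ not over-identifying on $\catname{ZX}_{\text{doubled}}$) are exactly the ones the paper handles, the last being the content of its claim that $K$ and the restricted $\widetilde Z$ are full and faithful.
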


\vspace{-0.7cm}
\section{Conclusions}
	In this work, we introduced $\catname{KNOT}$, a graphical calculus for defect-based surface code computing. By leveraging this formalism, we provided a systematic framework for reasoning about defect braiding, bridging the gap between informal graphical approaches and algebraic descriptions. Furthermore, we extended this framework via a doubling construction that captures conventional encoding techniques in defect-based quantum computation, and demonstrated its equivalence to the ($0$, $\pi$)-fragment of ZX-calculus. We established soundness and completeness results, showing that both the tangle-like and ribbon-like sublanguages of $\catname{KNOT}$ faithfully represent ($0$, $\pi$)-ZX-diagrams. This formalisation enables automated verification and optimisation of defect braiding protocols, facilitating a deeper understanding of topological quantum computation. This also provides a formal way of understanding the semantics of braided surface codes.

	We currently have partial results demonstrating how this framework can be used to reason about correlation surfaces in surface codes. In particular, we are working towards a more general diagrammatic calculus capturing the equivalences of correlation surfaces. The usual boundary conditions (an infinite lattice) yield a structure of a Hopf-Frobenius algebra where the correlation surface between a pair of defects satisfies the relations of a fragment of ZX-calculus.

	Future work includes extending $\catname{KNOT}$ to incorporate byproduct operations and generalising its applicability beyond defect braiding with specific boundary conditions to broader classes of topological codes. The study of the connections between the homology of surface codes and different graphical calculi is also a promising direction for future research. Working towards capuring the homological equivalences for surfaces code for arbitrary topologies and boundaries using diagrammatics could help to build tools for compilation and verification for a broader class of surfaces codes. We also plan to explore the use of $\catname{KNOT}$ for optimising defect braiding protocols in practical quantum computing scenarios, including the development of automated tools for verifying and optimising defect-based quantum circuits.

\paragraph{Acknowledgements.} We would like to thank Robert Booth for the feedback on an earlier draft of this paper. We would also like to thank Alexandru Paler for useful discussions on braided surface codes.

\nocite{*}
\bibliographystyle{eptcs}
\bibliography{generic}

\begin{thebibliography}{10}
\providecommand{\bibitemdeclare}[2]{}
\providecommand{\surnamestart}{}
\providecommand{\surnameend}{}
\providecommand{\urlprefix}{Available at }
\providecommand{\url}[1]{\texttt{#1}}
\providecommand{\href}[2]{\texttt{#2}}
\providecommand{\urlalt}[2]{\href{#1}{#2}}
\providecommand{\doi}[1]{doi:\urlalt{https://doi.org/#1}{#1}}
\providecommand{\eprint}[1]{arXiv:\urlalt{https://arxiv.org/abs/#1}{#1}}
\providecommand{\bibinfo}[2]{#2}

\bibitemdeclare{article}{backens2016complete}
\bibitem{backens2016complete}
\bibinfo{author}{Miriam \surnamestart Backens\surnameend} \&
  \bibinfo{author}{Ali~Nabi \surnamestart Duman\surnameend}
  (\bibinfo{year}{2016}): \emph{\bibinfo{title}{A complete graphical calculus
  for Spekkens’ toy bit theory}}.
\newblock {\slshape \bibinfo{journal}{Foundations of Physics}}
  \bibinfo{volume}{46}(\bibinfo{number}{1}), pp. \bibinfo{pages}{70--103},
  \doi{10.1007/s10701-015-9957-7}.

\bibitemdeclare{inproceedings}{simplified-stab-backens}
\bibitem{simplified-stab-backens}
\bibinfo{author}{Miriam \surnamestart Backens\surnameend},
  \bibinfo{author}{Simon \surnamestart Perdrix\surnameend} \&
  \bibinfo{author}{Quanlong \surnamestart Wang\surnameend}
  (\bibinfo{year}{2017}): \emph{\bibinfo{title}{A Simplified Stabilizer
  ZX-calculus}}.
\newblock In \bibinfo{editor}{Ross \surnamestart Duncan\surnameend} \&
  \bibinfo{editor}{Chris \surnamestart Heunen\surnameend}, editors: {\slshape
  \bibinfo{booktitle}{{\rm Proceedings 13th International Conference on}
  Quantum Physics and Logic, {\rm Glasgow, Scotland, 6-10 June 2016}}},
  {\slshape \bibinfo{series}{Electronic Proceedings in Theoretical Computer
  Science}} \bibinfo{volume}{236}, \bibinfo{publisher}{Open Publishing
  Association}, pp. \bibinfo{pages}{1--20}, \doi{10.4204/EPTCS.236.1}.

\bibitemdeclare{article}{deBeaudrap2020zxcalculusis}
\bibitem{deBeaudrap2020zxcalculusis}
\bibinfo{author}{Niel \surnamestart de~Beaudrap\surnameend} \&
  \bibinfo{author}{Dominic \surnamestart Horsman\surnameend}
  (\bibinfo{year}{2020}): \emph{\bibinfo{title}{The {ZX} calculus is a language
  for surface code lattice surgery}}.
\newblock {\slshape \bibinfo{journal}{{Quantum}}} \bibinfo{volume}{4}, p.
  \bibinfo{pages}{218}, \doi{10.22331/q-2020-01-09-218}.

\bibitemdeclare{article}{bombin2023unifying}
\bibitem{bombin2023unifying}
\bibinfo{author}{Hector \surnamestart Bombin\surnameend},
  \bibinfo{author}{Daniel \surnamestart Litinski\surnameend},
  \bibinfo{author}{Naomi \surnamestart Nickerson\surnameend},
  \bibinfo{author}{Fernando \surnamestart Pastawski\surnameend} \&
  \bibinfo{author}{Sam \surnamestart Roberts\surnameend}
  (\bibinfo{year}{2024}): \emph{\bibinfo{title}{Unifying flavors of fault
  tolerance with the {ZX} calculus}}.
\newblock {\slshape \bibinfo{journal}{{Quantum}}} \bibinfo{volume}{8}, p.
  \bibinfo{pages}{1379}, \doi{10.22331/q-2024-06-18-1379}.

\bibitemdeclare{article}{bonchi2017interacting}
\bibitem{bonchi2017interacting}
\bibinfo{author}{Filippo \surnamestart Bonchi\surnameend},
  \bibinfo{author}{Pawe{\l} \surnamestart Soboci{\'n}ski\surnameend} \&
  \bibinfo{author}{Fabio \surnamestart Zanasi\surnameend}
  (\bibinfo{year}{2017}): \emph{\bibinfo{title}{Interacting Hopf Algebras}}.
\newblock {\slshape \bibinfo{journal}{Journal of Pure and Applied Algebra}}
  \bibinfo{volume}{221}(\bibinfo{number}{1}), pp. \bibinfo{pages}{144--184},
  \doi{10.1016/j.jpaa.2016.06.002}.

\bibitemdeclare{misc}{bravyi1998quantumcodeslatticeboundary}
\bibitem{bravyi1998quantumcodeslatticeboundary}
\bibinfo{author}{S.~B. \surnamestart Bravyi\surnameend} \&
  \bibinfo{author}{A.~Yu. \surnamestart Kitaev\surnameend}
  (\bibinfo{year}{1998}): \emph{\bibinfo{title}{Quantum codes on a lattice with
  boundary}}.
\newblock \urlprefix\url{https://arxiv.org/abs/quant-ph/9811052}.

\bibitemdeclare{inproceedings}{cicala2017categorifying}
\bibitem{cicala2017categorifying}
\bibinfo{author}{Daniel \surnamestart Cicala\surnameend}
  (\bibinfo{year}{2018}): \emph{\bibinfo{title}{Categorifying the
  ZX-calculus}}.
\newblock In \bibinfo{editor}{Bob \surnamestart Coecke\surnameend} \&
  \bibinfo{editor}{Aleks \surnamestart Kissinger\surnameend}, editors:
  {\slshape \bibinfo{booktitle}{{\rm Proceedings 14th International Conference
  on} Quantum Physics and Logic, {\rm Nijmegen, The Netherlands, 3-7 July
  2017}}}, {\slshape \bibinfo{series}{Electronic Proceedings in Theoretical
  Computer Science}} \bibinfo{volume}{266}, \bibinfo{publisher}{Open Publishing
  Association}, pp. \bibinfo{pages}{294--314}, \doi{10.4204/EPTCS.266.19}.

\bibitemdeclare{inproceedings}{cockett2017category}
\bibitem{cockett2017category}
\bibinfo{author}{Robin \surnamestart Cockett\surnameend}, \bibinfo{author}{Cole
  \surnamestart Comfort\surnameend} \& \bibinfo{author}{Priyaa \surnamestart
  Srinivasan\surnameend} (\bibinfo{year}{2018}): \emph{\bibinfo{title}{The
  Category {CNOT}}}.
\newblock In: {\slshape \bibinfo{booktitle}{Electronic Proceedings in
  Theoretical Computer Science}}, \bibinfo{volume}{266},
  \bibinfo{publisher}{Open Publishing Association}, pp.
  \bibinfo{pages}{258--293}, \doi{10.4204/eptcs.266.18}.

\bibitemdeclare{book}{Coecke-Kissinger-2017}
\bibitem{Coecke-Kissinger-2017}
\bibinfo{author}{Bob \surnamestart Coecke\surnameend} \& \bibinfo{author}{Aleks
  \surnamestart Kissinger\surnameend} (\bibinfo{year}{2017}):
  \emph{\bibinfo{title}{Picturing Quantum Processes: A First Course in Quantum
  Theory and Diagrammatic Reasoning}}.
\newblock \bibinfo{publisher}{Cambridge University Press},
  \doi{10.1017/9781316219317}.

\bibitemdeclare{article}{comfort2019circuit}
\bibitem{comfort2019circuit}
\bibinfo{author}{Cole \surnamestart Comfort\surnameend} (\bibinfo{year}{2019}):
  \emph{\bibinfo{title}{Circuit Relations for Real Stabilizers: Towards TOF+
  H}}.
\newblock {\slshape \bibinfo{journal}{arXiv preprint arXiv:1904.10614}},
  \doi{10.48550/arXiv.1904.10614}.

\bibitemdeclare{article}{Comfort:2023tar}
\bibitem{Comfort:2023tar}
\bibinfo{author}{Cole \surnamestart Comfort\surnameend} (\bibinfo{year}{2023}):
  \emph{\bibinfo{title}{The Algebra for Stabilizer Codes}}.
\newblock \doi{10.48550/arXiv.2304.10584}.
\newblock \eprint{2304.10584}.

\bibitemdeclare{inproceedings}{Comfort:2021axr}
\bibitem{Comfort:2021axr}
\bibinfo{author}{Cole \surnamestart Comfort\surnameend} \&
  \bibinfo{author}{Aleks \surnamestart Kissinger\surnameend}
  (\bibinfo{year}{2022}): \emph{\bibinfo{title}{A Graphical Calculus for
  Lagrangian Relations}}.
\newblock In: {\slshape \bibinfo{booktitle}{Quantum Physics and Logic}},
  {\slshape \bibinfo{series}{Electronic Proceedings in Theoretical Computer
  Science}} \bibinfo{volume}{372}, pp. \bibinfo{pages}{338--351},
  \doi{10.4204/EPTCS.372.24}.

\bibitemdeclare{article}{Dennis-2002}
\bibitem{Dennis-2002}
\bibinfo{author}{Eric \surnamestart Dennis\surnameend}, \bibinfo{author}{Alexei
  \surnamestart Kitaev\surnameend}, \bibinfo{author}{Andrew \surnamestart
  Landahl\surnameend} \& \bibinfo{author}{John \surnamestart
  Preskill\surnameend} (\bibinfo{year}{2002}):
  \emph{\bibinfo{title}{Topological quantum memory}}.
\newblock {\slshape \bibinfo{journal}{Journal of Mathematical Physics}}
  \bibinfo{volume}{43}(\bibinfo{number}{9}), p. \bibinfo{pages}{4452–4505},
  \doi{10.1063/1.1499754}.

\bibitemdeclare{inproceedings}{dixon2008extending}
\bibitem{dixon2008extending}
\bibinfo{author}{Lucas \surnamestart Dixon\surnameend} \& \bibinfo{author}{Ross
  \surnamestart Duncan\surnameend} (\bibinfo{year}{2008}):
  \emph{\bibinfo{title}{Extending Graphical Representations for Compact Closed
  Categories with Applications to Symbolic Quantum Computation}}.
\newblock In \bibinfo{editor}{Serge \surnamestart Autexier\surnameend},
  \bibinfo{editor}{John \surnamestart Campbell\surnameend},
  \bibinfo{editor}{Julio \surnamestart Rubio\surnameend},
  \bibinfo{editor}{Volker \surnamestart Sorge\surnameend},
  \bibinfo{editor}{Masakazu \surnamestart Suzuki\surnameend} \&
  \bibinfo{editor}{Freek \surnamestart Wiedijk\surnameend}, editors: {\slshape
  \bibinfo{booktitle}{Intelligent Computer Mathematics}},
  \bibinfo{publisher}{Springer Berlin Heidelberg}, \bibinfo{address}{Berlin,
  Heidelberg}, pp. \bibinfo{pages}{77--92}, \doi{10.1007/978-3-540-85110-3_8}.

\bibitemdeclare{inbook}{rossmbqc}
\bibitem{rossmbqc}
\bibinfo{author}{Ross \surnamestart Duncan\surnameend} (\bibinfo{year}{2013}):
  \emph{\bibinfo{title}{A graphical approach to measurement-based quantum
  computing}}.
\newblock \bibinfo{publisher}{Oxford University Press}.

\bibitemdeclare{article}{Duncan2020graphtheoretic}
\bibitem{Duncan2020graphtheoretic}
\bibinfo{author}{Ross \surnamestart Duncan\surnameend}, \bibinfo{author}{Aleks
  \surnamestart Kissinger\surnameend}, \bibinfo{author}{Simon \surnamestart
  Perdrix\surnameend} \& \bibinfo{author}{John \surnamestart van~de
  Wetering\surnameend} (\bibinfo{year}{2020}):
  \emph{\bibinfo{title}{Graph-theoretic {S}implification of {Q}uantum
  {C}ircuits with the {ZX}-calculus}}.
\newblock {\slshape \bibinfo{journal}{{Quantum}}} \bibinfo{volume}{4}, p.
  \bibinfo{pages}{279}, \doi{10.22331/q-2020-06-04-279}.

\bibitemdeclare{article}{fowler2012bridge}
\bibitem{fowler2012bridge}
\bibinfo{author}{Austin~G \surnamestart Fowler\surnameend} \&
  \bibinfo{author}{Simon~J \surnamestart Devitt\surnameend}
  (\bibinfo{year}{2012}): \emph{\bibinfo{title}{A bridge to lower overhead
  quantum computation}}.
\newblock {\slshape \bibinfo{journal}{arXiv preprint arXiv:1209.0510}},
  \doi{10.48550/arXiv.1209.0510}.

\bibitemdeclare{article}{austingoyal2009}
\bibitem{austingoyal2009}
\bibinfo{author}{Austin~G. \surnamestart Fowler\surnameend} \&
  \bibinfo{author}{Kovid \surnamestart Goyal\surnameend}
  (\bibinfo{year}{2009}): \emph{\bibinfo{title}{Topological cluster state
  quantum computing}}.
\newblock {\slshape \bibinfo{journal}{Quantum Info. Comput.}}
  \bibinfo{volume}{9}(\bibinfo{number}{9}), p. \bibinfo{pages}{721–738},
  \doi{10.26421/QIC9.9-10-1}.

\bibitemdeclare{article}{Austin2D}
\bibitem{Austin2D}
\bibinfo{author}{Austin~G. \surnamestart Fowler\surnameend},
  \bibinfo{author}{Matteo \surnamestart Mariantoni\surnameend},
  \bibinfo{author}{John~M. \surnamestart Martinis\surnameend} \&
  \bibinfo{author}{Andrew~N. \surnamestart Cleland\surnameend}
  (\bibinfo{year}{2012}): \emph{\bibinfo{title}{Surface codes: Towards
  practical large-scale quantum computation}}.
\newblock {\slshape \bibinfo{journal}{Phys. Rev. A}} \bibinfo{volume}{86}, p.
  \bibinfo{pages}{032324}, \doi{10.1103/PhysRevA.86.032324}.

\bibitemdeclare{article}{Fowler2009}
\bibitem{Fowler2009}
\bibinfo{author}{Austin~G. \surnamestart Fowler\surnameend},
  \bibinfo{author}{Ashley~M. \surnamestart Stephens\surnameend} \&
  \bibinfo{author}{Peter \surnamestart Groszkowski\surnameend}
  (\bibinfo{year}{2009}): \emph{\bibinfo{title}{High-threshold universal
  quantum computation on the surface code}}.
\newblock {\slshape \bibinfo{journal}{Physical Review A}}
  \bibinfo{volume}{80}(\bibinfo{number}{5}), \doi{10.1103/physreva.80.052312}.

\bibitemdeclare{book}{fujii2015quantum}
\bibitem{fujii2015quantum}
\bibinfo{author}{Keisuke \surnamestart Fujii\surnameend}
  (\bibinfo{year}{2015}): \emph{\bibinfo{title}{Quantum Computation with
  Topological Codes: from qubit to topological fault-tolerance}}.
\newblock \bibinfo{volume}{8}, \bibinfo{publisher}{Springer},
  \doi{10.1007/978-981-287-996-7_4}.

\bibitemdeclare{misc}{Gidney2020Video}
\bibitem{Gidney2020Video}
\bibinfo{author}{Craig \surnamestart Gidney\surnameend} (\bibinfo{year}{2020}):
  \emph{\bibinfo{title}{{A Rosetta Stone for the ZX Calculus and the Surface
  Code}}}.
\newblock \bibinfo{howpublished}{ZX seminar}.
\newblock \urlprefix\url{https://www.youtube.com/watch?v=1ojXEEm_JiI}.

\bibitemdeclare{article}{Gidney2019efficientmagicstate}
\bibitem{Gidney2019efficientmagicstate}
\bibinfo{author}{Craig \surnamestart Gidney\surnameend} \&
  \bibinfo{author}{Austin~G. \surnamestart Fowler\surnameend}
  (\bibinfo{year}{2019}): \emph{\bibinfo{title}{Efficient magic state factories
  with a catalyzed {$|CCZ\rangle$} to {$2|T\rangle$} transformation}}.
\newblock {\slshape \bibinfo{journal}{{Quantum}}} \bibinfo{volume}{3}, p.
  \bibinfo{pages}{135}, \doi{10.22331/q-2019-04-30-135}.

\bibitemdeclare{article}{hanks2020effective}
\bibitem{hanks2020effective}
\bibinfo{author}{Michael \surnamestart Hanks\surnameend},
  \bibinfo{author}{Marta~P. \surnamestart Estarellas\surnameend},
  \bibinfo{author}{William~J. \surnamestart Munro\surnameend} \&
  \bibinfo{author}{Kae \surnamestart Nemoto\surnameend} (\bibinfo{year}{2020}):
  \emph{\bibinfo{title}{Effective Compression of Quantum Braided Circuits Aided
  by ZX-Calculus}}.
\newblock {\slshape \bibinfo{journal}{Phys. Rev. X}} \bibinfo{volume}{10}, p.
  \bibinfo{pages}{041030}, \doi{10.1103/PhysRevX.10.041030}.

\bibitemdeclare{book}{hatcher2002algebraic}
\bibitem{hatcher2002algebraic}
\bibinfo{author}{Allen \surnamestart Hatcher\surnameend}
  (\bibinfo{year}{2002}): \emph{\bibinfo{title}{Algebraic Topology}}.
\newblock \bibinfo{publisher}{Cambridge University Press}.

\bibitemdeclare{article}{Horsman-2011}
\bibitem{Horsman-2011}
\bibinfo{author}{Dominic \surnamestart Horsman\surnameend}
  (\bibinfo{year}{2011}): \emph{\bibinfo{title}{Quantum picturalism for
  topological cluster-state computing}}.
\newblock {\slshape \bibinfo{journal}{New Journal of Physics}}
  \bibinfo{volume}{13}(\bibinfo{number}{9}), p. \bibinfo{pages}{095011},
  \doi{10.1088/1367-2630/13/9/095011}.

\bibitemdeclare{article}{Horsman-2012}
\bibitem{Horsman-2012}
\bibinfo{author}{Dominic \surnamestart Horsman\surnameend},
  \bibinfo{author}{Austin~G \surnamestart Fowler\surnameend},
  \bibinfo{author}{Simon \surnamestart Devitt\surnameend} \&
  \bibinfo{author}{Rodney~Van \surnamestart Meter\surnameend}
  (\bibinfo{year}{2012}): \emph{\bibinfo{title}{Surface code quantum computing
  by lattice surgery}}.
\newblock {\slshape \bibinfo{journal}{New Journal of Physics}}
  \bibinfo{volume}{14}(\bibinfo{number}{12}), p. \bibinfo{pages}{123011},
  \doi{10.1088/1367-2630/14/12/123011}.

\bibitemdeclare{inproceedings}{surface-code-communication}
\bibitem{surface-code-communication}
\bibinfo{author}{Ali \surnamestart Javadi-Abhari\surnameend},
  \bibinfo{author}{Pranav \surnamestart Gokhale\surnameend},
  \bibinfo{author}{Adam \surnamestart Holmes\surnameend},
  \bibinfo{author}{Diana \surnamestart Franklin\surnameend},
  \bibinfo{author}{Kenneth~R. \surnamestart Brown\surnameend},
  \bibinfo{author}{Margaret \surnamestart Martonosi\surnameend} \&
  \bibinfo{author}{Frederic~T. \surnamestart Chong\surnameend}
  (\bibinfo{year}{2017}): \emph{\bibinfo{title}{Optimized surface code
  communication in superconducting quantum computers}}.
\newblock In: {\slshape \bibinfo{booktitle}{Proceedings of the 50th Annual
  IEEE/ACM International Symposium on Microarchitecture}},
  \bibinfo{series}{MICRO-50 '17}, \bibinfo{publisher}{Association for Computing
  Machinery}, \bibinfo{address}{New York, NY, USA}, p.
  \bibinfo{pages}{692–705}, \doi{10.1145/3123939.3123949}.

\bibitemdeclare{article}{joyal1991geometry}
\bibitem{joyal1991geometry}
\bibinfo{author}{André \surnamestart Joyal\surnameend} \&
  \bibinfo{author}{Ross \surnamestart Street\surnameend}
  (\bibinfo{year}{1991}): \emph{\bibinfo{title}{The geometry of tensor
  calculus, I}}.
\newblock {\slshape \bibinfo{journal}{Advances in Mathematics}}
  \bibinfo{volume}{88}(\bibinfo{number}{1}), pp. \bibinfo{pages}{55--112},
  \doi{10.1016/0001-8708(91)90003-P}.

\bibitemdeclare{phdthesis}{kissinger2012pictures}
\bibitem{kissinger2012pictures}
\bibinfo{author}{Aleks \surnamestart Kissinger\surnameend}
  (\bibinfo{year}{2012}): \emph{\bibinfo{title}{Pictures of Processes:
  Automated Graph Rewriting for Monoidal Categories and Applications to Quantum
  Computing}}.
\newblock Ph.D. thesis, \bibinfo{school}{University of Oxford}.
\newblock \bibinfo{note}{DPhil Thesis}.

\bibitemdeclare{article}{kissinger2022phasefree}
\bibitem{kissinger2022phasefree}
\bibinfo{author}{Aleks \surnamestart Kissinger\surnameend}
  (\bibinfo{year}{2022}): \emph{\bibinfo{title}{Phase-free ZX diagrams are CSS
  codes (...or how to graphically grok the surface code)}}.
\newblock \doi{10.48550/arXiv.2204.14038}.
\newblock \eprint{2204.14038}.

\bibitemdeclare{inproceedings}{pyzx}
\bibitem{pyzx}
\bibinfo{author}{Aleks \surnamestart Kissinger\surnameend} \&
  \bibinfo{author}{John \surnamestart van~de Wetering\surnameend}
  (\bibinfo{year}{2020}): \emph{\bibinfo{title}{PyZX: Large Scale Automated
  Diagrammatic Reasoning}}.
\newblock In \bibinfo{editor}{Bob \surnamestart Coecke\surnameend} \&
  \bibinfo{editor}{Matthew \surnamestart Leifer\surnameend}, editors: {\slshape
  \bibinfo{booktitle}{{\rm Proceedings 16th International Conference on}
  Quantum Physics and Logic, {\rm Chapman University, Orange, CA, USA., 10-14
  June 2019}}}, {\slshape \bibinfo{series}{Electronic Proceedings in
  Theoretical Computer Science}} \bibinfo{volume}{318},
  \bibinfo{publisher}{Open Publishing Association}, pp.
  \bibinfo{pages}{229--241}, \doi{10.4204/EPTCS.318.14}.

\bibitemdeclare{article}{Kissinger-2022}
\bibitem{Kissinger-2022}
\bibinfo{author}{Aleks \surnamestart Kissinger\surnameend} \&
  \bibinfo{author}{John \surnamestart van~de Wetering\surnameend}
  (\bibinfo{year}{2022}): \emph{\bibinfo{title}{Simulating quantum circuits
  with ZX-calculus reduced stabiliser decompositions}}.
\newblock {\slshape \bibinfo{journal}{Quantum Science and Technology}}
  \bibinfo{volume}{7}(\bibinfo{number}{4}), p. \bibinfo{pages}{044001},
  \doi{10.1088/2058-9565/ac5d20}.

\bibitemdeclare{inproceedings}{kissinger2024scalable}
\bibitem{kissinger2024scalable}
\bibinfo{author}{Aleks \surnamestart Kissinger\surnameend} \&
  \bibinfo{author}{John \surnamestart van~de Wetering\surnameend}
  (\bibinfo{year}{2024}): \emph{\bibinfo{title}{Scalable Spider Nests (...Or
  How to Graphically Grok Transversal Non-Clifford Gates)}}.
\newblock In \bibinfo{editor}{Alejandro \surnamestart D\'iaz-Caro\surnameend}
  \& \bibinfo{editor}{Vladimir \surnamestart Zamdzhiev\surnameend}, editors:
  {\slshape \bibinfo{booktitle}{{\rm Proceedings of the 21st International
  Conference on} Quantum Physics and Logic, {\rm Buenos Aires, Argentina, July
  15-19, 2024}}}, {\slshape \bibinfo{series}{Electronic Proceedings in
  Theoretical Computer Science}} \bibinfo{volume}{406},
  \bibinfo{publisher}{Open Publishing Association}, pp.
  \bibinfo{pages}{79--95}, \doi{10.4204/EPTCS.406.4}.

\bibitemdeclare{article}{KITAEV20032}
\bibitem{KITAEV20032}
\bibinfo{author}{Alexei~Y. \surnamestart Kitaev\surnameend}
  (\bibinfo{year}{2003}): \emph{\bibinfo{title}{Fault-tolerant quantum
  computation by anyons}}.
\newblock {\slshape \bibinfo{journal}{Annals of Physics}}
  \bibinfo{volume}{303}(\bibinfo{number}{1}), pp. \bibinfo{pages}{2--30},
  \doi{10.1016/S0003-4916(02)00018-0}.

\bibitemdeclare{phdthesis}{paler2014design}
\bibitem{paler2014design}
\bibinfo{author}{Alexandru \surnamestart Paler\surnameend}
  (\bibinfo{year}{2014}): \emph{\bibinfo{title}{Design methods for reliable
  quantum circuits}}.
\newblock Ph.D. thesis, \bibinfo{school}{Universit{\"a}t Passau}.

\bibitemdeclare{article}{paler2019surfbraid}
\bibitem{paler2019surfbraid}
\bibinfo{author}{Alexandru \surnamestart Paler\surnameend}
  (\bibinfo{year}{2019}): \emph{\bibinfo{title}{SurfBraid: A concept tool for
  preparing and resource estimating quantum circuits protected by the surface
  code}}.
\newblock {\slshape \bibinfo{journal}{arXiv preprint arXiv:1902.02417}},
  \doi{10.48550/arXiv.1902.02417}.

\bibitemdeclare{article}{paler2018specification}
\bibitem{paler2018specification}
\bibinfo{author}{Alexandru \surnamestart Paler\surnameend} \&
  \bibinfo{author}{Simon~J. \surnamestart Devitt\surnameend}
  (\bibinfo{year}{2018}): \emph{\bibinfo{title}{Specification format and a
  verification method of fault-tolerant quantum circuits}}.
\newblock {\slshape \bibinfo{journal}{Phys. Rev. A}} \bibinfo{volume}{98}, p.
  \bibinfo{pages}{022302}, \doi{10.1103/PhysRevA.98.022302}.

\bibitemdeclare{article}{paler2016synthesis}
\bibitem{paler2016synthesis}
\bibinfo{author}{Alexandru \surnamestart Paler\surnameend},
  \bibinfo{author}{Simon~J \surnamestart Devitt\surnameend} \&
  \bibinfo{author}{Austin~G \surnamestart Fowler\surnameend}
  (\bibinfo{year}{2016}): \emph{\bibinfo{title}{Synthesis of arbitrary quantum
  circuits to topological assembly}}.
\newblock {\slshape \bibinfo{journal}{Scientific reports}}
  \bibinfo{volume}{6}(\bibinfo{number}{1}), p. \bibinfo{pages}{30600},
  \doi{10.1038/srep30600}.

\bibitemdeclare{article}{paler2017fault}
\bibitem{paler2017fault}
\bibinfo{author}{Alexandru \surnamestart Paler\surnameend},
  \bibinfo{author}{Ilia \surnamestart Polian\surnameend}, \bibinfo{author}{Kae
  \surnamestart Nemoto\surnameend} \& \bibinfo{author}{Simon~J \surnamestart
  Devitt\surnameend} (\bibinfo{year}{2017}):
  \emph{\bibinfo{title}{Fault-tolerant, high-level quantum circuits: form,
  compilation and description}}.
\newblock {\slshape \bibinfo{journal}{Quantum Science and Technology}}
  \bibinfo{volume}{2}(\bibinfo{number}{2}), p. \bibinfo{pages}{025003},
  \doi{10.1088/2058-9565/aa66eb}.

\bibitemdeclare{article}{penrose1971applications}
\bibitem{penrose1971applications}
\bibinfo{author}{Roger \surnamestart Penrose\surnameend}
  (\bibinfo{year}{1971}): \emph{\bibinfo{title}{Applications of negative
  dimensional tensors}}.
\newblock {\slshape \bibinfo{journal}{Combinatorial mathematics and its
  applications}} \bibinfo{volume}{1}, pp. \bibinfo{pages}{221--244}.

\bibitemdeclare{article}{raussendorf2006fault}
\bibitem{raussendorf2006fault}
\bibinfo{author}{Robert \surnamestart Raussendorf\surnameend},
  \bibinfo{author}{Jim \surnamestart Harrington\surnameend} \&
  \bibinfo{author}{Kovid \surnamestart Goyal\surnameend}
  (\bibinfo{year}{2006}): \emph{\bibinfo{title}{A fault-tolerant one-way
  quantum computer}}.
\newblock {\slshape \bibinfo{journal}{Annals of Physics}}
  \bibinfo{volume}{321}(\bibinfo{number}{9}), pp. \bibinfo{pages}{2242--2270},
  \doi{10.1016/j.aop.2006.01.012}.

\bibitemdeclare{article}{raussendorf2007topological}
\bibitem{raussendorf2007topological}
\bibinfo{author}{Robert \surnamestart Raussendorf\surnameend},
  \bibinfo{author}{Jim \surnamestart Harrington\surnameend} \&
  \bibinfo{author}{Kovid \surnamestart Goyal\surnameend}
  (\bibinfo{year}{2007}): \emph{\bibinfo{title}{Topological fault-tolerance in
  cluster state quantum computation}}.
\newblock {\slshape \bibinfo{journal}{New Journal of Physics}}
  \bibinfo{volume}{9}(\bibinfo{number}{6}), p. \bibinfo{pages}{199},
  \doi{10.1088/1367-2630/9/6/199}.

\bibitemdeclare{article}{rodatz2024floquetifying}
\bibitem{rodatz2024floquetifying}
\bibinfo{author}{Benjamin \surnamestart Rodatz\surnameend},
  \bibinfo{author}{Boldizsár \surnamestart Poór\surnameend} \&
  \bibinfo{author}{Aleks \surnamestart Kissinger\surnameend}
  (\bibinfo{year}{2024}): \emph{\bibinfo{title}{Floquetifying stabiliser codes
  with distance-preserving rewrites}}.
\newblock {\slshape \bibinfo{journal}{arXiv preprint arXiv: 2410.17240}},
  \doi{10.48550/arXiv.2410.17240}.

\bibitemdeclare{article}{SELINGER2007139}
\bibitem{SELINGER2007139}
\bibinfo{author}{Peter \surnamestart Selinger\surnameend}
  (\bibinfo{year}{2007}): \emph{\bibinfo{title}{Dagger Compact Closed
  Categories and Completely Positive Maps: (Extended Abstract)}}.
\newblock {\slshape \bibinfo{journal}{Electronic Notes in Theoretical Computer
  Science}} \bibinfo{volume}{170}, pp. \bibinfo{pages}{139--163},
  \doi{10.1016/j.entcs.2006.12.018}.
\newblock \bibinfo{note}{Proceedings of the 3rd International Workshop on
  Quantum Programming Languages (QPL 2005)}.

\bibitemdeclare{inproceedings}{will-simmons-flow}
\bibitem{will-simmons-flow}
\bibinfo{author}{Will \surnamestart Simmons\surnameend} (\bibinfo{year}{2021}):
  \emph{\bibinfo{title}{Relating Measurement Patterns to Circuits via Pauli
  Flow}}.
\newblock In \bibinfo{editor}{Chris \surnamestart Heunen\surnameend} \&
  \bibinfo{editor}{Miriam \surnamestart Backens\surnameend}, editors: {\slshape
  \bibinfo{booktitle}{{\rm Proceedings 18th International Conference on}
  Quantum Physics and Logic, {\rm Gdansk, Poland, and online, 7-11 June
  2021}}}, {\slshape \bibinfo{series}{Electronic Proceedings in Theoretical
  Computer Science}} \bibinfo{volume}{343}, \bibinfo{publisher}{Open Publishing
  Association}, pp. \bibinfo{pages}{50--101}, \doi{10.4204/EPTCS.343.4}.

\bibitemdeclare{article}{spekkens2007evidence}
\bibitem{spekkens2007evidence}
\bibinfo{author}{Robert~W \surnamestart Spekkens\surnameend}
  (\bibinfo{year}{2007}): \emph{\bibinfo{title}{Evidence for the epistemic view
  of quantum states: A toy theory}}.
\newblock {\slshape \bibinfo{journal}{Physical Review A—Atomic, Molecular,
  and Optical Physics}} \bibinfo{volume}{75}(\bibinfo{number}{3}), p.
  \bibinfo{pages}{032110}, \doi{10.1103/PhysRevA.75.032110}.

\bibitemdeclare{inproceedings}{TownsendTeague2023}
\bibitem{TownsendTeague2023}
\bibinfo{author}{Alex \surnamestart Townsend{-}Teague\surnameend},
  \bibinfo{author}{Julio Carlos~Magdalena \surnamestart de~la
  Fuente\surnameend} \& \bibinfo{author}{Markus~S. \surnamestart
  Kesselring\surnameend} (\bibinfo{year}{2023}):
  \emph{\bibinfo{title}{Floquetifying the Colour Code}}.
\newblock In \bibinfo{editor}{Shane \surnamestart Mansfield\surnameend},
  \bibinfo{editor}{Beno{\^{\i}}t \surnamestart Valiron\surnameend} \&
  \bibinfo{editor}{Vladimir \surnamestart Zamdzhiev\surnameend}, editors:
  {\slshape \bibinfo{booktitle}{Proceedings of the Twentieth International
  Conference on Quantum Physics and Logic, {QPL} 2023, Paris, France, 17-21st
  July 2023}}, {\slshape \bibinfo{series}{EPTCS}} \bibinfo{volume}{384}, pp.
  \bibinfo{pages}{265--303}, \doi{10.4204/EPTCS.384.14}.

\bibitemdeclare{phdthesis}{wang2018completeness}
\bibitem{wang2018completeness}
\bibinfo{author}{Quanlong \surnamestart Wang\surnameend}
  (\bibinfo{year}{2018}): \emph{\bibinfo{title}{Completeness of the
  ZX-calculus}}.
\newblock Ph.D. thesis, \bibinfo{school}{University of Oxford}.

\end{thebibliography}

\newpage
\appendix

\section{ZX-calculus}
Graphical languages can be formalised as categories. One formulation is to treat the diagrams as morphisms and input/outputs or strings as objects (since diagrams represent processes which are modelled by morphisms) (e.g. see Ref. \cite{cockett2017category, comfort2019circuit, bonchi2017interacting}). Other examples include formulating diagrams as a bicategory \cite{cicala2017categorifying}, algebraic \cite{kissinger2012pictures, penrose1971applications}, topological \cite{kissinger2012pictures, joyal1991geometry}, and combinatoric \cite{dixon2008extending} formulations.
In particular, string diagrams are often represented as a $\catname{PROP}$ \cite{wang2018completeness}.

\begin{definition}
	A $\catname{PROP}$ is a strict symmetric monoidal category where every object is of the form
	\begin{equation}
		x^{\bigotimes n} = x \otimes x \otimes \dots \otimes x
	\end{equation}
	for a single object $x$ and $n \geq 0$.
\end{definition}

One can also consider a $\catname{PROP}$ as a category of natural numbers with addition as a tensor product. Since many graphical languages can be considered as a $\catname{PROP}$, we use it to define maps between our languages as categories.

In considering the graphical languages, we follow the approach of \cite{wang2018completeness}. For any diagrammatic system in this article, we have $S$ the set of basic diagrams together with an empty diagram. As in this chapter we interpret the diagrams from left (inputs) to right (outputs), we can compose the basic diagrams horizontally ($\circ$) and vertically ($\otimes$). The diagrams satisfy a set of equivalence relations $R$. Thus, any category $\catname{C}$ having diagrams as morphisms based on a generator set $S$ and satisfying the relations $R$ is implicitly considered modulo $R$ ($\catname{C}/R$) in this chapter. Any two morphisms are the same in $\catname{C}/R$ if their diagrammatic representations in $\catname{C}$ are equal according to $R$. This has a role in considering the mappings between different categories based on graphical languages. A mapping $F: \catname{A}/R_A \rightarrow \catname{B}/R_B$ between two categories $\catname{A}$ modulo relations $R_A$ and $\catname{B}$ modulo relations $R_B$ is functorial if we map the generators of $\catname{A}$ to $\catname{B}$ and if the relations $R_A$ of $\catname{A}$ hold in $\catname{B}$ after the mapping $F$. This ensures that two diagrams equal in $\catname{A}$ (representing the same morphism in $\catname{A}/R_A$) are mapped to one morphism in $\catname{B}/R_B$ (that the functor $F$ is well-defined).

\subsection{ZX-calculus}
In this work, we focus on a specific fragment of ZX-calculus where all the phases are either $0$ and $\pi$. This is a subcategory of ZX-calculus similar to the calculi presented in Refs.~\cite{kissinger2022phasefree}. This does not preserve some of the semantics of ZX-calculus in terms of $\catname{FHilb}$ (we do not have any rules involving zero scalars), but the rewrite rules are sufficient for tasks such as ZX-diagram simplification \cite{pyzx}, circuit optimisation \cite{Duncan2020graphtheoretic}, stabiliser simulation \cite{Kissinger-2022}. Therefore,  soundness and completeness we give is in terms of the ``most common'' rules of ZX-calculus.

In ZX-calculus, there are two types of spiders: \textit{Z spiders} and \textit{X spiders}. These are defined with respect to the eigenbases of the Pauli Z and Pauli X operators, respectively.
\begin{equation}\label{eq:spiders}
	\begin{array}{ccl}
		\textrm{\scriptsize $m$}\!\left\{\ \ \tikzfig{zx/z-phase-spider}\ \ \right\}\!\textrm{\scriptsize $n$}
		 & \ := \   &
		\ket{0}^{\otimes n}\bra{0}^{\otimes m} + e^{i\alpha}\ket{1}^{\otimes n}\bra{1}^{\otimes m} \\[5mm]
		\textrm{\scriptsize $m$}\!\left\{\ \ \tikzfig{zx/x-phase-spider}\ \ \right\}\!\textrm{\scriptsize $n$}
		 & \ := \   &
		\ket{{+}}^{\otimes n}\bra{{+}}^{\otimes m} +
		e^{i\alpha}\ket{{-}}^{\otimes n}\bra{{-}}^{\otimes m}
	\end{array}
\end{equation}

We usually omit phases if $\alpha=0$ inside spiders as a convention.

In addition to spiders, we allow identity wires, swaps, cups, and caps in ZX diagrams, which are defined as follows:
\[
	\tikzfig{zx/cup} \ \ := \ \ \sum_i \ket{ii} \qquad
	\tikzfig{zx/cap} \ \ := \ \ \sum_i \bra{ii} \qquad
	\tikz[tikzfig]{ \draw (0,0) -- (3,0); } \ \ :=\ \  \sum_i \ket{i}\bra{i}\qquad
	\tikzfig{zx/swap} \ \ :=\ \ \sum_{ij} \ket{ij}\bra{ji}
\]
Finally, we have the following rewrite rules for the ($0$, $\pi$)-fragment of ZX-calculus modulo scalars \cite{kissinger2022phasefree}:
\begin{center}
	\tikzfig{zx-axioms},
\end{center}

where $l, k$ are in $\mathbb{F}_2$.

\paragraph{Note.}

The rules given above are non-standard: we ignore scalars completely and only allow the phases to be either 0 or $\pi$ (which introduces the $X$, $Z$ commutation rule). Strictly speaking, this category is actually the subcategory of the Spekkens' toy theory ($\catname{Spek}_2$) \cite{spekkens2007evidence} where the phase group is $\mathbb{Z}_2$ (if we only allow the spiders to be decorated with $11$ as in the convention from \cite{backens2016complete} or alternatively with $\pi, \pi$ following \cite[\S 11.2.2]{Coecke-Kissinger-2017})  -- we will describe this relationship more in section~\ref{section:proof-soundness}. Interestingly, even though this is a subcategory of $\catname{Spek}_2$, because we do not have the problematic $\pm \frac{\pi}{2}$ ($01$, $10$ from Ref. \cite{backens2016complete}) phases, this is also a subcategory of $\catname{Stab}_2$ -- we can define a functor from this category into $\catname{Stab}_2$.

\begin{proof}[Proof of Proposition~\ref{prop:z-functor-zx}]
	\label{proof:z-functor-zx}
	We need to show that the rules of $\catname{KNOT}$ hold in ZX-calculus via the interpretation map $z$. We only explicitly prove the rules which require multiple steps. All other rules of $\catname{KNOT}$ can be proven to hold in ZX-calculus with the application of a single ZX-calculus rule.
	\begin{itemize}
		\item Rule \ref{eq:loop-around-one-wire}:
		      \begin{equation*}
			      \tikzfig{knot-generators-tikzit/K8-L} \mapsto \tikzfig{zx/loop-around} = \tikzfig{zx/loop-around-1} = \tikzfig{zx/loop-around-2} \mapsfrom \tikzfig{knot-generators-tikzit/K8-R}
		      \end{equation*}
		\item Rule \ref{eq:K6}:
		      \begin{equation*}
			      \tikzfig{knot-generators-tikzit/K5-L} \mapsto \tikzfig{zx/z-cnot-twice} = \tikzfig{zx/z-cnot-twice-1} = \tikzfig{zx/z-cnot-twice-2} = \tikzfig{zx/z-cnot-twice-3} \mapsfrom \tikzfig{knot-generators-tikzit/K5-R}
		      \end{equation*}
		\item Rule \ref{eq:overcross-kill}:
			\begin{equation*}
				\tikzfig{knot-generators-tikzit/K21-L} \mapsto \tikzfig{zx/dot-kills-overcrossing} = \tikzfig{zx/dot-kills-overcrossing-1} = \tikzfig{zx/dot-kills-overcrossing-2} \mapsfrom \tikzfig{knot-generators-tikzit/K21-R}
			\end{equation*}
		\item Rule \ref{eq:K7}:
				\begin{equation*}
					\tikzfig{knot-generators-tikzit/K9-L} \mapsto \tikzfig{zx/decorators-copy} = \tikzfig{zx/decorators-copy-1} = \tikzfig{zx/decorators-copy-2} \mapsfrom \tikzfig{knot-generators-tikzit/K9-R}
				\end{equation*}
		\item Rule \ref{eq:loop-D-commute}:
			the case for $p=0$ and $q=0$ is trivial. If we translate the subdiagram $D$ into ZX-calculus, the resulting diagram will be constructed from the following generators:
			\begin{equation*}
				\tikzfig{zx/z-cnot} \quad \tikzfig{zx/x-cnot} \quad \tikzfig{zx/green-merge} \quad
				\tikzfig{zx/red-merge}.
			\end{equation*}
			The loop translated in ZX-calculus will have the following two forms depending on $p$ and $q$:
			\begin{equation*} \tag{*}
				\label{eq:loops-in-zx}
				\tikzfig{zx/loop-around-a} \quad \tikzfig{zx/two-loops-around}.
			\end{equation*}

			All $\tikzfig{zx/z-cnot}$ and $\tikzfig{zx/x-cnot}$ will be positioned such that the loops from \eqref{eq:loops-in-zx} will commute through them (\gatename{CNOT}s are always positioned such that we can only get green spider on a wire corresponding to a black defect and red spider on a wire corresponding to a grey defect). For the spiders we show the following:
			\begin{equation*}
				\tikzfig{zx/D-commute-proof} = 				\tikzfig{zx/D-commute-proof-1} = 				\tikzfig{zx/D-commute-proof-2}  = \dots = 				\tikzfig{zx/D-commute-proof-5}.
			\end{equation*}
			On the left hand side of the red spider, we might end up with green spiders (if the number of input wires is not one). We can however always remove the connections, since the wire might have a red spider connected to it (which would kill the connection) or (in $\catname{KNOT}$) the equivalent of the loop might be connected to the wire. In this case applying the equation above would kill the connection by the Hopf rule. In the end, we are able to commute the equivalent of the loop through the diagram $z(D)$.
	\end{itemize}
\end{proof}

\subsection{Soundness and completeness of \texorpdfstring{$\catname{KNOT}_{\text{doubled}}$}{KNOT_{doubled}} for the ($0$, $\pi$)-fragment of ZX-calculus}
\label{appendix-prove-soundness}

\begin{lemma}
	\label{lemma:nf-braided}
	The diagrams of the category $\catname{KNOT}_{\text{doubled}}$ have a normal form.
\end{lemma}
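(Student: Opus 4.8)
The plan is to do all of the rewriting inside the ZX image and then transport the result back along $Z$. By Proposition~\ref{prop:z-functor-zx} the map $z$ is functorial, so any equality of ZX-diagrams witnessed by $\catname{KNOT}$-rewrites lifts to an equality in $\catname{KNOT}_{\text{doubled}}$ (the definition explicitly permits intermediate steps outside the subcategory). By Lemma~\ref{lemma:braided-zx-gens} every morphism of $\catname{ZX}_{\text{doubled}}$ is generated by doubled green spiders, doubled red spiders, and the doubled swap, and Proposition~\ref{prop:zx-analogue-knot} records that in this setting we have spider fusion, the copy rules, the bialgebra law, and the $\pi$-copy rules at our disposal. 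It therefore suffices to exhibit a normal form for diagrams over these generators. Using the compact-closed (cup/cap) structure carried by $\catname{KNOT}$, I would first apply map--state duality to bend every input into an output, reducing the task to normalising a state on $m+n$ wires.

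First I would fuse: exhaustively applying spider fusion merges all adjacent spiders of the same colour, producing a graph-like diagram in which every internal edge joins a green to a red spider. Next I would force a two-layer bipartite shape. The spider-fusion, copy, and bialgebra rules are exactly the axioms of an interacting (Hopf) bialgebra over $\mathbb{F}_2$, and their combined effect on connectivity is that of $\mathbb{F}_2$ row and column operations; I would use them to push the diagram into the canonical form ``outputs $\leftarrow$ one layer of red spiders $\leftarrow$ a complete bipartite mesh $\leftarrow$ one layer of green spiders $\leftarrow$ inputs,'' and then reduce the mesh's biadjacency matrix to reduced row-echelon form. Finally I would collect decorators: iterating the $\pi$-copy rules slides every red/grey decorator through the spiders to one fixed slot per wire, leaving an $\mathbb{F}_2$ shift vector. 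The pair consisting of the echelon biadjacency matrix together with this shift vector is the proposed normal form.

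Uniqueness I would anchor semantically. Under $K \circ Z$ (see Definition~\ref{def:K}) a normalised diagram denotes an affine $\mathbb{F}_2$-relation in the $(0,\pi)$-fragment, and the extracted (matrix, shift)-datum is precisely a canonical presentation of that relation; hence distinct data denote distinct relations, and two diagrams with equal denotation necessarily reduce to the same normal form. Since every rewrite step used belongs to the $\catname{KNOT}_{\text{doubled}}$-equivalence via the correspondences cited above, the reduction never leaves the class of the original diagram.

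I expect the principal obstacle to be simultaneously establishing termination and confluence while preserving the two-layer shape: pushing a $\pi$-decorator across the bipartite interface via $\pi$-copy can momentarily create new spiders, so one must supply a termination measure (for example a lexicographic count of spiders and crossings) under which re-fusing strictly decreases complexity and restores the canonical shape without cycling. A secondary, purely bookkeeping difficulty is making the normal form independent of the wire ordering introduced by swaps; I would handle this by absorbing all swaps into the choice of echelon form on the biadjacency matrix, so that the permutation data is fixed once the matrix is canonicalised.
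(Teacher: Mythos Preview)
Your approach is correct and in fact matches closely the \emph{second} argument the paper gives (the paragraph immediately following the formal proof), which observes that every rule of the $(0,\pi)$-fragment has an analogue in $\catname{KNOT}_{\text{doubled}}$ via Proposition~\ref{prop:zx-analogue-knot}, and then runs the standard $(0,\pi)$-ZX normal-form rewrite system inside $\catname{KNOT}_{\text{doubled}}$. Your explicit description of that normal form (bipartite two-layer shape, echelon biadjacency matrix, $\mathbb{F}_2$ shift vector) and your semantic uniqueness argument via $K\circ Z$ are both sound elaborations of this route.

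The paper's \emph{primary} proof, however, takes a more abstract shortcut that sidesteps exactly the termination/confluence bookkeeping you flag as the principal obstacle. Rather than building the rewrite system by hand, it asserts that both $K$ and the restriction $\widetilde{Z}$ of $Z$ to $\catname{KNOT}_{\text{doubled}}$ are full and faithful, so that $K\circ\widetilde{Z}$ is an equivalence onto its image; the known normal form for the $(0,\pi)$-fragment is then simply transported back along this equivalence. This buys brevity and avoids any direct manipulation of the rewrite system, at the cost of leaning on claims (in particular the faithfulness of $\widetilde{Z}$) that are stated rather than argued in detail. Your constructive route is longer but more self-contained, and makes explicit the rewrite-level content that the paper's functorial argument packages up.

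One small point of phrasing: your opening sentence ``any equality of ZX-diagrams witnessed by $\catname{KNOT}$-rewrites lifts to an equality in $\catname{KNOT}_{\text{doubled}}$'' reads as though you need to invert $Z$, but what you actually use (and what is available) is that the ZX rules themselves have $\catname{KNOT}_{\text{doubled}}$ analogues via Proposition~\ref{prop:zx-analogue-knot}, so the rewriting already lives in $\catname{KNOT}_{\text{doubled}}$ from the start. Tightening that sentence would remove the apparent need for a backwards transport along $Z$.
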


\begin{proof}
To prove this we will use the availability of a normal form for ($0$, $\pi$)-fragment of ZX-calculus (modulo scalars).

First, for any given $\catname{KNOT}_{\text{doubled}}$ diagram, let's take a ribbon-like structure (LHS of eq.~\eqref{duality-1} and eq.~\eqref{duality-2}) and rewrite it to the corresponding tangle-like structure (RHS) for which we provided explicit definitions of $\catname{ZX}_{\text{doubled}}$ (Definition~\ref{lemma:braided-zx-gens}) and of the functor $K$ (Definition~\ref{def:K}).

Notice that the functor $K$ is full and faithful. If we restrict the functor $Z$ to $\catname{KNOT}_{\text{doubled}}$, the restricted functor $\widetilde{Z}$ is also full and faithful. This makes the composite $K \circ \widetilde{Z}$ a full and faithful functor from $\catname{KNOT}_{\text{doubled}}$ to $\catname{ZX}$.

($0$, $\pi$)-fragment of ZX-calculus (modulo scalars) has a unique normal form. Let us consider two $\catname{KNOT}_{\text{doubled}}$ diagrams $d_1$ and $d_2$. If $K \circ \widetilde{Z}(d_1)$ and $K \circ \widetilde{Z}(d_2)$ reduce to the same normal form, then the $d_1$ and $d_2$ reduce to a corresponding $\catname{KNOT}_{\text{doubled}}$ diagram $d_r$ (also by Proposition~\ref{prop:zx-analogue-knot} $\catname{KNOT}_{\text{doubled}}$ is sound for ($0$, $\pi$)-fragment of ZX-calculus). We always stay in the image of the functor $K \circ \widetilde{Z}$.

In the other direction, if $d_1$ and $d_2$ are equal in $\catname{KNOT}_{\text{doubled}}$, then they are equal in $\catname{ZX}_{\text{doubled}}$ ($Z(d_1) = Z(d_2)$) using the soundness result of Proposition~\ref{prop:z-functor-zx}. This means that we also have $K \circ Z(d_1) = K \circ Z(d_2)$, which means that the ZX diagrams must have the same normal form, which by the fullness and faithfulness of the functor $K \circ \widetilde{Z}$ has a corresponding $\catname{KNOT}_{\text{doubled}}$ diagram $d_r$.
\end{proof}

Another way of seeing how this holds is by observing that all the rules of the ($0$, $\pi$)-fragment of ZX-calculus (modulo scalars) have an equivalent rule in $\catname{KNOT}_{\text{doubled}}$ by Proposition~\ref{prop:zx-analogue-knot} ($\catname{KNOT}_{\text{doubled}}$ is sound for the ($0$, $\pi$)-fragment of ZX-calculus (modulo scalars)). ($0$, $\pi$)-fragment of ZX-calculus (modulo scalars) has a unique normal form which is obtained by using a rewrite system composed of a spider, bialgebra, and Hopf rule (which is provable from other rules). We can therefore use the same rewrite system in $\catname{KNOT}_{\text{doubled}}$ to obtain a $\catname{KNOT}_{\text{doubled}}$ diagram. At each stage of the rewrite procedure, the $\catname{KNOT}$ diagrams will also be $\catname{KNOT}_{\text{doubled}}$ diagrams. The resulting diagram will be a normal form for $\catname{KNOT}_{\text{doubled}}$ diagrams only with respect to the rules from Proposition~\ref{prop:zx-analogue-knot}. As all the rules in $\catname{KNOT}_{\text{doubled}}$ come from the rules of $\catname{KNOT}$, we need to show that the normal form is a normal form with respect to the remaining rules of $\catname{KNOT}$.

Suppose that we have two $\catname{KNOT}_{\text{doubled}}$ diagrams whose above normal forms are different (by the rules of Proposition~\ref{prop:zx-analogue-knot}). If we map these two normal forms into $\catname{ZX}_{\text{doubled}}$ via the functor $Z$, the resulting two diagrams must be different, but from the definition of the functor $Z$, the set of all the rules of $\catname{KNOT}$ is a subset of the set of all the equivalent rules of $\catname{ZX}_{\text{doubled}}$. Therefore, the two diagrams must also be different in $\catname{KNOT}_{\text{doubled}}$ according to the rules of $\catname{KNOT}$.

\begin{lemma}
	\label{lemma:tangle-only}
	All identities holding in $\catname{KNOT}_{\text{doubled}}$ via the rules of $\catname{KNOT}$ are derivable from the relations in Proposition~\ref{prop:zx-analogue-knot}.
\end{lemma}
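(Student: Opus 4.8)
The plan is to promote the normal-form result of Lemma~\ref{lemma:nf-braided} into a relative-completeness statement, using the full faithfulness of the composite $K \circ \widetilde{Z}: \catname{KNOT}_{\text{doubled}} \to \catname{ZX}$ established there. The claim is that any equation between doubled diagrams provable by the full rule-set of $\catname{KNOT}$ is already provable from the smaller list of relations in Proposition~\ref{prop:zx-analogue-knot}. The essential observation, already recorded in the discussion following Lemma~\ref{lemma:nf-braided}, is that the reduction of a $\catname{KNOT}_{\text{doubled}}$ diagram to its normal form uses \emph{only} those relations (the two-coloured spider and copy rules, the bialgebra law, and the $\pi$-copy rules), since these are precisely the doubled images of the spider, bialgebra, and Hopf rules that generate the unique normal form in the $(0,\pi)$-fragment of ZX.

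Concretely, I would take two diagrams $d_1, d_2$ of $\catname{KNOT}_{\text{doubled}}$ satisfying $d_1 = d_2$ by the rules of $\catname{KNOT}$, and first reduce each to a normal form $\mathrm{nf}(d_i)$ using only the relations of Proposition~\ref{prop:zx-analogue-knot}. By soundness of $Z$ (Proposition~\ref{prop:z-functor-zx}) and the action of $K$ we have $K \circ \widetilde{Z}(d_1) = K \circ \widetilde{Z}(d_2)$ in $\catname{ZX}$; since the $(0,\pi)$-fragment has a unique normal form and $K \circ \widetilde{Z}$ sends a $\catname{KNOT}_{\text{doubled}}$ normal form to the corresponding ZX normal form, faithfulness forces $\mathrm{nf}(d_1)$ and $\mathrm{nf}(d_2)$ to be the same doubled diagram. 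Chaining $d_1 \leadsto \mathrm{nf}(d_1) = \mathrm{nf}(d_2)$ with the reverse of $d_2 \leadsto \mathrm{nf}(d_2)$ then exhibits $d_1 = d_2$ as a consequence of the Proposition's relations alone.

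The main obstacle I expect is justifying the uniqueness of the $\catname{KNOT}_{\text{doubled}}$ normal form and its precise correspondence with the ZX normal form under $K \circ \widetilde{Z}$: one must ensure that fullness supplies a doubled preimage of each ZX normal form and that faithfulness makes distinct ZX normal forms reflect to distinct doubled ones, so that no $\catname{KNOT}$ rule lying outside the Proposition's list can identify two doubled diagrams whose ZX images already agree. Verifying this amounts to checking that every rule of $\catname{KNOT}$, once restricted to doubled diagrams, factors through the relations of Proposition~\ref{prop:zx-analogue-knot} rather than introducing a genuinely new identification --- exactly the direction argued at the close of the discussion after Lemma~\ref{lemma:nf-braided}.
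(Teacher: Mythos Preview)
Your proposal is correct and follows essentially the same approach as the paper: the paper's proof is a two-line corollary of Lemma~\ref{lemma:nf-braided}, observing that two $\catname{KNOT}_{\text{doubled}}$ diagrams equal by the rules of $\catname{KNOT}$ share the same unique normal form, and that this normal form is reached using only the relations of Proposition~\ref{prop:zx-analogue-knot}. Your write-up is more explicit about the role of $K \circ \widetilde{Z}$ and the chaining through the common normal form, but the underlying argument is identical.
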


\begin{proof}
This is a corollary of Lemma~\ref{lemma:nf-braided}. If two $\catname{KNOT}_{\text{doubled}}$ diagrams are equal according to the rules $\catname{KNOT}$, they have the same unique normal form which is obtained using the rules from Proposition~\ref{prop:zx-analogue-knot}.
\end{proof}

\begin{theorem}
	$\catname{KNOT}_{\text{doubled}}$ via the interpretation $K \circ Z$ is sound and complete for the ($0$, $\pi$)-fragment of ZX-calculus.
\end{theorem}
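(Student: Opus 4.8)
The plan is to prove the two halves separately, reducing each to the apparatus already assembled. Write $\widetilde{Z}$ for the restriction of $Z$ to $\catname{KNOT}_{\text{doubled}}$, so that the interpretation under scrutiny is the composite $K \circ \widetilde{Z}$ into $\catname{ZX}$.

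First I would dispatch soundness, which is the easy direction. Since $Z$ is a bona fide functor into $\catname{ZX}$ by Proposition~\ref{prop:z-functor-zx} and $K$ is a functor by Definition~\ref{def:K}, the composite $K \circ \widetilde{Z}$ is a functor, so it carries every equation that holds between $\catname{KNOT}_{\text{doubled}}$ diagrams (via the rules of $\catname{KNOT}$) to a valid equation of $(0,\pi)$-ZX-diagrams. Proposition~\ref{prop:zx-analogue-knot} already does the bookkeeping explicitly, realising each generating rule of the $(0,\pi)$-fragment --- the two spider laws, the two copy laws, the bialgebra law, and the $\pi$-copy laws --- as the image of a corresponding $\catname{KNOT}_{\text{doubled}}$ identity. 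Assembling these yields soundness directly.

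For completeness I must upgrade this to faithfulness of $K \circ \widetilde{Z}$: whenever $K \circ \widetilde{Z}(d_1)$ and $K \circ \widetilde{Z}(d_2)$ are provably equal in the $(0,\pi)$-fragment, I want $d_1 = d_2$ in $\catname{KNOT}_{\text{doubled}}$. Here I would lean on the normal-form machinery. The $(0,\pi)$-fragment (modulo scalars) has a unique normal form, so the hypothesis forces $K \circ \widetilde{Z}(d_1)$ and $K \circ \widetilde{Z}(d_2)$ to share a single ZX normal form. By Lemma~\ref{lemma:nf-braided} every $\catname{KNOT}_{\text{doubled}}$ diagram possesses a normal form obtained by transporting the ZX normal-form rewrites back along the full and faithful functor $K \circ \widetilde{Z}$; it follows that $d_1$ and $d_2$ both reduce to the unique $\catname{KNOT}_{\text{doubled}}$ diagram corresponding to that common ZX normal form, whence $d_1 = d_2$. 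As an alternative presentation of the same argument I could cite Lemma~\ref{lemma:tangle-only}: every identity valid in $\catname{KNOT}_{\text{doubled}}$ is derivable from the relations of Proposition~\ref{prop:zx-analogue-knot}, and those relations are in bijection with the rules of the $(0,\pi)$-fragment, so the two rewrite theories prove exactly the same equations.

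The hard part will be keeping the normal-form transport honest across three points. First, the $(0,\pi)$-fragment is defined only up to scalars, so its normal form is unique only modulo scalars; I must check that $K \circ \widetilde{Z}$ is compatible with this quotient and that the dropped scalar factors never obstruct pulling a ZX rewrite back into $\catname{KNOT}_{\text{doubled}}$. Second, I must ensure that the rewrites realising the normal form can be performed within $\catname{KNOT}_{\text{doubled}}$, or, where an intermediate diagram leaves the subcategory, that it can be returned to it --- exactly the role played by the ribbon-to-tangle conversions of eqs.~\eqref{duality-1}--\eqref{duality-2} together with Lemma~\ref{lemma:tangle-only}. Third, faithfulness of the composite rests on $K$ being faithful (it is an undoubling that is injective on the doubled generators) and on $\widetilde{Z}$ being faithful; once both are secured, faithfulness of $K \circ \widetilde{Z}$, and with it completeness, follows at once.
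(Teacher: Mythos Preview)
Your proposal is essentially correct and follows the same route as the paper's proof, which simply cites Proposition~\ref{prop:zx-analogue-knot} and Lemma~\ref{lemma:tangle-only} (with the normal-form Lemma~\ref{lemma:nf-braided} implicit in ``the discussion above''). Your argument is more fully articulated than the paper's one-line proof, and your ``hard part'' checklist (scalars, staying inside the subcategory during rewrites, faithfulness of $K$ and $\widetilde{Z}$) is exactly the content buried in the proof of Lemma~\ref{lemma:nf-braided}.

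One point to tidy up: you have the role of Proposition~\ref{prop:zx-analogue-knot} backwards in your soundness paragraph. That proposition shows that each $(0,\pi)$-ZX rule is realised by a $\catname{KNOT}_{\text{doubled}}$ identity, which is the ingredient for \emph{completeness} (ZX equalities lift to $\catname{KNOT}_{\text{doubled}}$), not soundness. Soundness is already fully established by your first sentence via functoriality of $K \circ \widetilde{Z}$; the sentence ``Assembling these yields soundness directly'' should be deleted or moved to the completeness half, where Proposition~\ref{prop:zx-analogue-knot} is what lets you transport a ZX normal-form derivation back into $\catname{KNOT}_{\text{doubled}}$. The paper's own proof uses Proposition~\ref{prop:zx-analogue-knot} in exactly that role, paired with Lemma~\ref{lemma:tangle-only} to certify that no further $\catname{KNOT}_{\text{doubled}}$ identities exist beyond the ZX-derived ones.
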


\begin{proof}
	Follows from the discussion above. Proposition~\ref{prop:zx-analogue-knot} shows that the rules of the ($0$, $\pi$)-fragment of ZX-calculus hold in $\catname{KNOT}_{\text{doubled}}$ and the Lemma~\ref{lemma:tangle-only} shows that they are the only rules.
\end{proof}

\section{Soundness of $\catname{KNOT}$ for the evolution of logical operators}
\label{section:proof-soundness}

\subsection{Overview}
The maps implemented by surface code braiding procedures are defined by their space-time stabilisers. These stabilisers are defined by how the logical (undetectable) errors evolve with the computation. The logical operators in the surface code are described by first homology groups of the underlying topology of a surface code lattice. To deduce a map implemented by a procedure we must thus have a way of capturing how the first homology groups evolve alongside the computation. Our assumptions about the lattice enable doing this with (affine) Lagrangian relations: a logical map implemented by a braiding procedure is fully characterised (modulo probabilistic byproducts) by an (affine) Lagrangian relation. The soundness property shown below is thus phrased in terms of the evolution of logical operators performed by \emph{some} physical surface code procedures implementing a particular braiding pattern.
The following ``cartoon'' commutative diagram summarises this reasoning:

\[
\xymatrix@C=1cm@R=1cm{
  \catname{KNOT}
      \ar[r]^-{G}
      \ar@{^{(}->}[d]^{I}                     &
  \catname{AffLagRel}_{\mathbb{F}_2}
      \ar[r]^-{\sim}                       &
  \catname{Spek}_{2}    \\
  \substack{\text{Braiding patterns}\\\text{(with undetectable errors)}}
      \ar[ru]^{G_{\text{extended}}}                      & & \\ 
  {%
  \substack{\text{Physical QEC Procedure}\\\text{(up to byproducts)}}
  } \ar[u]^{F_{\text{implement}}} & &
}
\]
Here, $\catname{Spek}_2$ is the category of Spekkens' toy bit theory (for alternative presentations, see~\cite{backens2016complete} and \cite[\S 11.2.2]{Coecke-Kissinger-2017}). We give the definition of the functor $G$ in section~\ref{sec:soundness-afflagrel}. We do not claim that $\catname{KNOT}$ captures all possible braiding patterns and equivalences between them, and $I$ represents a functor into a bigger category of all such braiding patterns, containing $\catname{KNOT}$ is a subcategory. Affine Lagrangian relations over $\mathbb{F}_2$ are exactly the circuits belonging to the Spekkens' toy bit theory~\cite{backens2016complete, spekkens2007evidence}.

As we will see, the image of the functor $G$ does not have problematic phases causing a mismatch between $\catname{Spek}_2$ and $\catname{Stab}_2$ which is a subcategory of $\catname{Mat(\mathbb{C})}$ generated by the qubit Clifford group, $\bra{0}$, and $\ket{0}$, modulo invertible scalars (for a graphical presentation, see~\cite{simplified-stab-backens}). We will therefore treat the image of the functor $G$ as corresponding to morphisms of $\catname{Stab}_2$.

\subsection{Background preliminaries}
\subsubsection{Homological algebra}
\label{hom-alg}
We first introduce some background from homological algebra. In particular, the tool of choice is the chain complex.

\begin{definition}
	A chain complex $C_{\bullet}$ in the category $\catname{C}$ is: a collection of objects $\{C_n\}_{n \in \mathbb{Z}}$, and of morphisms
			      $\dots \xrightarrow[]{\partial_3} C_2 \xrightarrow[]{\partial_2} C_1 \xrightarrow[]{\partial_1} C_0 \xrightarrow[]{\partial_0} C_{-1}  \xrightarrow[]{\partial_{-1}} \dots$ ($\partial_n: C_n \rightarrow C_{n-1}$),
		      such that $\partial_n \circ \partial_{n+1} = 0$ for all $n \in \mathbb{Z}$.
\end{definition}

\begin{definition}
	Given a chain complex $C_{\bullet}$, let $Z_n(C_{\bullet}) = \mathrm{ker(\partial_{n-1})}$ and $B_n(C_{\bullet}) = \mathrm{im}(\partial_n)$ which are called respectively n-cycles and n-boundaries. A quotient $H_n(C_{\bullet}) = Z_n(C_{\bullet})/B_n(C_{\bullet})$ is the nth homology group of $C_{\bullet}$.
\end{definition}

A \textbf{homology group} provides an algebraic way to capture the structure of a chain complex by distinguishing between cycles and boundaries within it. For a given chain complex \( C_\bullet \), the \textbf{nth homology group} \( H_n(C_\bullet) \) is defined as the quotient group:

\[
	H_n(C_\bullet) = Z_n(C_\bullet) / B_n(C_\bullet),
\]

where \( Z_n(C_\bullet) = \ker(\partial_{n-1}) \) denotes the \textbf{n-cycles}, which are elements in \( C_n \) that map to zero in \( C_{n-1} \) under the boundary map \( \partial_{n-1} \), \( B_n(C_\bullet) = \operatorname{im}(\partial_n) \) denotes the \textbf{n-boundaries}, which are elements in \( C_n \) that arise as boundaries from \( C_{n+1} \) under the boundary map \( \partial_n \).

Thus, the homology group \( H_n(C_\bullet) \) identifies elements of \( Z_n(C_\bullet) \) (cycles) that are not boundaries, capturing information about "holes" or "voids" in the structure represented by the chain complex.

\begin{definition}
	\label{subchain-complex}
	A chain subcomplex $D_{\bullet}$ of a chain complex $C_{\bullet}$ is a chain complex with the collection of objects $D_n$ being the subobjects of $C_n$ such that the following resultant square commutes:

	\begin{center}
		$\xymatrix@R=1cm@C=1cm{
			\dots \ar[r]^{} & D_{n+1} \ar[r]^{\partial_{n}^{D_{\bullet}}} \ar@{^{(}->}[d]^{f_{n+1}} & D_{n} \ar[r]^{\partial_{n-1}^{D_{\bullet}}} \ar@{^{(}->}[d]^{f_{n}} & D_{n-1} \ar[r]^{} \ar@{^{(}->}[d]^{f_{n-1}} & \dots \\
			\dots \ar[r]^{} & C_{n+1} \ar[r]^{\partial_{n}^{C_{\bullet}}} & C_{n} \ar[r]^{\partial_{n-1}^{C_{\bullet}}} & C_{n-1} \ar[r]^{}  & \dots
			}$,
	\end{center}
	The subcomplex inclusion map induced by the monomorphisms is called the chain map in the category of chain complexes. We call it the chain inclusion map.
\end{definition}

\begin{definition}
	Let $C_{\bullet}$ be a chain complex and $D_{\bullet}$ its chain subcomplex. The quotient $C_{\bullet} / D_{\bullet}$ ($C_{n} / D_{n}$ for all $n$) is the chain complex of $D_{\bullet}$-relative chains. The chain homology of the quotient is the $D_{\bullet}$-relative homology of $C_{\bullet}$: $H_n(C_{\bullet}, D_{\bullet}) := H_n(C_{\bullet}/D_{\bullet})$.
\end{definition}

In relative homology, we study cycles in a chain complex \( C_\bullet \) that are allowed to ``end" in a subcomplex \( D_\bullet \). The \textbf{relative homology group} \( H_n(C_\bullet, D_\bullet) \) captures information about cycles in \( C_\bullet \) that are not boundaries within \( D_\bullet \), thus describing features of \( C_\bullet \) relative to \( D_\bullet \).

\begin{lemma}
	\label{lemma-slice-vector-space}
	Let \( H_n(C_\bullet) \) be the nth homology group of a chain complex \( C_\bullet \) over the field \( \mathbb{F}_2 \). Then \( H_n(C_\bullet) \) is isomorphic to the vector space \( \mathbb{F}_2^{\operatorname{rank}(H_n(C_\bullet))} \), where \( \operatorname{rank}(H_n(C_\bullet)) \) denotes the rank of the homology group, or the number of independent cycles in \( H_n(C_\bullet) \).

	Similarly, for a pair \( (C_\bullet, D_\bullet) \) where \( D_\bullet \) is a subcomplex of \( C_\bullet \), the nth relative homology group \( H_n(C_\bullet, D_\bullet) \) is isomorphic to \( \mathbb{F}_2^{\operatorname{rank}(H_n(C_\bullet, D_\bullet))} \).
\end{lemma}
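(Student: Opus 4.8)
The plan is to reduce the statement to the elementary fact that, over a field, homology groups are themselves vector spaces, and that a finite-dimensional vector space over $\mathbb{F}_2$ is determined up to isomorphism by its dimension. Since $\mathbb{F}_2$ is a field, each chain group $C_n$ is an $\mathbb{F}_2$-vector space and each boundary map $\partial_n$ is $\mathbb{F}_2$-linear. First I would observe that the cycle space $Z_n(C_\bullet)$ is a linear subspace of the relevant chain group and that the boundary space $B_n(C_\bullet) = \operatorname{im}(\partial_n)$ is a linear subspace contained in $Z_n(C_\bullet)$, the inclusion $B_n(C_\bullet) \subseteq Z_n(C_\bullet)$ being exactly the content of the chain-complex condition $\partial\circ\partial = 0$.

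Next I would form the quotient $H_n(C_\bullet) = Z_n(C_\bullet)/B_n(C_\bullet)$. The quotient of an $\mathbb{F}_2$-vector space by a subspace is again an $\mathbb{F}_2$-vector space, so $H_n(C_\bullet)$ carries a canonical $\mathbb{F}_2$-vector space structure. Choosing a basis of cardinality $d := \dim_{\mathbb{F}_2} H_n(C_\bullet)$ then gives an isomorphism $H_n(C_\bullet) \cong \mathbb{F}_2^{d}$; identifying $d$ with $\operatorname{rank}(H_n(C_\bullet))$, the number of independent cycles generating $H_n(C_\bullet)$, yields the claimed isomorphism. For the relative case I would first check that the quotient complex $C_\bullet/D_\bullet$ is a chain complex of $\mathbb{F}_2$-vector spaces: each $C_n/D_n$ is a quotient of vector spaces, hence a vector space, and the induced boundary maps are well-defined and linear precisely because $\partial$ restricts to $D_\bullet$, as encoded by the commuting square of Definition~\ref{subchain-complex}. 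The argument of the absolute case then applies verbatim to $C_\bullet/D_\bullet$, giving $H_n(C_\bullet, D_\bullet) = H_n(C_\bullet/D_\bullet) \cong \mathbb{F}_2^{\operatorname{rank}(H_n(C_\bullet,D_\bullet))}$.

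There is no substantive analytic difficulty here, as the result is a routine consequence of linear algebra over a field; the one point that genuinely needs care is finiteness. The statement as written presupposes that $\operatorname{rank}(H_n)$ is finite, so that $\mathbb{F}_2^{\operatorname{rank}}$ denotes a finite power, and I would justify this by noting that in the intended application each $C_n$ arises from a finite surface-code lattice and is therefore finite-dimensional over $\mathbb{F}_2$. Consequently every subspace, quotient, and subquotient is finite-dimensional, and in particular every absolute and relative homology group has a finite basis, making the chosen isomorphism with a finite power of $\mathbb{F}_2$ legitimate.
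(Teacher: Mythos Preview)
Your argument is correct and complete: over a field, cycles and boundaries are linear subspaces, the homology is their quotient vector space, and a finite-dimensional $\mathbb{F}_2$-vector space is classified by its dimension; you also correctly flag the finiteness hypothesis and justify it via the finite lattice. The paper itself states this lemma without proof, treating it as a standard fact from homological algebra, so there is no proof to compare against --- your write-up simply supplies the elementary justification the paper omits.
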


\begin{definition}
	A cochain complex $C^\bullet$ in the category $\catname{C}$ is: a collection of objects $\{C^n\}_{n \in \mathbb{Z}}$, and of morphisms $
	\cdots \xrightarrow{d^{n-2}} C^{n-1} \xrightarrow{d^{n-1}} C^n \xrightarrow{d^n} C^{n+1} \xrightarrow{d^{n+1}} \cdots$, such that $d^{n+1} \circ d^n = 0$ for all $n \in \mathbb{Z}$.
\end{definition}

\begin{definition}
	Given a cochain complex $C^\bullet$, let $Z^n(C^\bullet) = \ker(d^n)$ and $B^n(C^\bullet) = \operatorname{im}(d^{n-1})$, which are called respectively $n$-cocycles and $n$-coboundaries. A quotient $H^n(C^\bullet) = Z^n(C^\bullet) / B^n(C^\bullet)$ is the $n$th cohomology group of $C^\bullet$.

	\medskip

	A cohomology group provides an algebraic way to capture the structure of a cochain complex by distinguishing between cocycles and coboundaries within it. For a given cochain complex $C^\bullet$, the $n$th cohomology group $H^n(C^\bullet)$ is defined as the quotient group:
	\[
	H^n(C^\bullet) = Z^n(C^\bullet) / B^n(C^\bullet),
	\]
	where $Z^n(C^\bullet) = \ker(d^n)$ denotes the $n$-cocycles, which are elements in $C^n$ that map to zero in $C^{n+1}$ under the coboundary map $d^n$, and $B^n(C^\bullet) = \operatorname{im}(d^{n-1})$ denotes the $n$-coboundaries, which are elements in $C^n$ that arise as coboundaries from $C^{n-1}$ under the coboundary map $d^{n-1}$.
	\end{definition}

\subsubsection{Affine Lagrangian relations}

The contents of this section are based on \cite{Comfort:2021axr}.

\begin{definition}
	The \( \dagger \)-compact closed $\catname{PROP}$ of linear relations over \( \mathbb{F}_2 \), \(\catname{LinRel}_{\mathbb{F}_2}\) is a category of natural numbers and linear relations (linear subspaces of \( \mathbb{F}_2^n \oplus \mathbb{F}_2^m \) for natural numbers $n, m$) with relational composition, a direct sum of linear subspaces as the tensor product, and the relational converse as the dagger.
\end{definition}

\begin{definition}
	Let $\catname{AffRel}_{\mathbb{F}_2}$ denote $\catname{PROP}$ whose morphisms $n \to m$ are the (possibly empty) affine subspaces of ${\mathbb{F}_2}^n \oplus {\mathbb{F}_2}^m$, with composition given by relational composition and tensor product given by the direct sum.
\end{definition}

\begin{definition}
	Given a field ${\mathbb{F}_2}$, the $\catname{PROP}$ of {\bf Lagrangian relations},  $\catname{LagRel}_{\mathbb{F}_2}$, has morphisms ${\mathbb{F}_2}^{2n}\to {\mathbb{F}_2}^{2m}$ as Lagrangian subspaces of the symplectic vector space ${\mathbb{F}_2}^{n+m} \oplus {\mathbb{F}_2}^{n+m}$ with symplectic form:
	\begin{equation}
		\omega :=
		\begin{bmatrix}
			0_n  & I_n \\
			I_n & 0_n
		\end{bmatrix}.
	\end{equation}
	Lagrangian subspaces $W$ are the subspaces, for which we have $W^{\omega} := \{ v \in V : \forall w \in W,\, \omega(v, w) = 0 \} = W$. Composition is given by relational composition and the tensor product is given by the direct sum.
\end{definition}

\begin{definition}
	Let $\catname{AffLagRel}_{\mathbb{F}_2}$ denote the $\catname{PROP}$ whose morphisms are the affine Lagrangian subspaces of ${\mathbb{F}_2}^{2n} \oplus {\mathbb{F}_2}^{2m}$ of some symplectic vector space ${\mathbb{F}_2}^{n+m} \oplus {\mathbb{F}_2}^{n+m}$.
\end{definition}

\begin{lemma}
	Given a field $\mathbb{F}_2$, the $\catname{PROP}$ of matrices over $\mathbb{F}_2$ under the direct sum, $\catname{LinRel}_{\mathbb{F}_2}$, is presented by the following generators and equations, for $a,b \in \mathbb{F}_2$:

	\begin{equation*}
		\tikzfig{frob/1} = \tikzfig{frob/2}, \hspace*{1cm} \tikzfig{frob/3} = \tikzfig{frob/4}, \hspace*{1cm} \tikzfig{frob/5} = \tikzfig{frob/6}
	\end{equation*}

	\begin{equation*}
		\tikzfig{frob/7} = \tikzfig{frob/8}, \hspace*{1cm} \tikzfig{frob/9} = \tikzfig{frob/10}, \hspace*{1cm} \tikzfig{frob/11} = \tikzfig{frob/12}
	\end{equation*}

	\begin{equation*}
		\tikzfig{frob/13} = \tikzfig{frob/14}, \hspace*{1cm} \tikzfig{frob/15} = \tikzfig{frob/16}, \hspace*{1cm} \tikzfig{frob/17} = \tikzfig{frob/18}
	\end{equation*}

	\begin{equation*}
		\tikzfig{frob/19} = \tikzfig{frob/20}, \hspace*{1cm} \tikzfig{frob/21} = \tikzfig{frob/22}, \hspace*{1cm} \tikzfig{frob/23} = \tikzfig{frob/24}
	\end{equation*}

	\begin{equation*}
		\tikzfig{frob/25} = \tikzfig{frob/26}, \hspace*{1cm} \tikzfig{frob/27} = \tikzfig{frob/28}, \hspace*{1cm} \tikzfig{frob/29} = \tikzfig{frob/30}
	\end{equation*}

	\begin{equation*}
		\tikzfig{frob/31} = \tikzfig{frob/32}, \hspace*{1cm} \tikzfig{frob/33} = \tikzfig{frob/34},  \hspace*{1cm} \tikzfig{frob/35} = \tikzfig{frob/36}
	\end{equation*}

	\begin{equation*}
		\tikzfig{frob/37} = \tikzfig{frob/38} = \tikzfig{frob/39}, \hspace*{1cm} \tikzfig{frob/40} = \tikzfig{frob/41} = \tikzfig{frob/42}
	\end{equation*}
\end{lemma}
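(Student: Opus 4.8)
The statement is a presentation result, so the plan is to discharge two obligations: \emph{soundness}, that each displayed equation holds between the corresponding linear relations, and \emph{completeness}, that these equations suffice to derive every valid equality of diagrams. First I would fix the semantic interpretation of the generators as morphisms of $\catname{LinRel}_{\mathbb{F}_2}$. The white (co)monoid is sent to the copy relation $\{(x,x,x)\} \subseteq \mathbb{F}_2 \oplus \mathbb{F}_2^2$ and the discard relation $\{(x)\} \subseteq \mathbb{F}_2 \oplus \mathbb{F}_2^0$, while the black (co)monoid is sent to the addition relation $\{(x,y,x{+}y)\}$ and the zero relation $\{(0)\}$; the remaining generators are the relational converses of these, which exist because, as stated in the preceding definition, $\catname{LinRel}_{\mathbb{F}_2}$ is $\dagger$-compact closed and self-dual. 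Since $a,b \in \mathbb{F}_2 = \{0,1\}$, the scalar nodes collapse to the identity and the empty relation, so no genuine field arithmetic intervenes.

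With the interpretation fixed, soundness is a finite verification: each displayed equation asserts equality of two linear subspaces of a fixed $\mathbb{F}_2^n \oplus \mathbb{F}_2^m$, and over $\mathbb{F}_2$ every such subspace is a finite set, so the two sides may be compared elementwise. Concretely, the equations split into the laws making each colour a special commutative Frobenius algebra — (co)associativity, (co)commutativity, (co)unitality, the Frobenius law (the triple equalities), and the special/separable law — together with the interaction laws (the bialgebra and Hopf relations) relating copy to addition. Each of these is a standard identity among the copy, discard, add and zero relations, and each is immediately checkable by hand.

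The substantive content is completeness, which I would obtain through a normal-form argument. The generators and equations are precisely those of the theory of two interacting special commutative Frobenius algebras specialised to the field $\mathbb{F}_2$, equivalently the phase-free two-colour spider calculus. For this theory one has a rewriting strategy bringing any diagram $n \to m$ into a canonical factorisation presenting its denoted subspace via a \emph{matrix part} together with its kernel/image data — a span--cospan normal form. The plan is to show, using only the displayed equations, that every diagram reduces to such a form, and that two diagrams in normal form denoting the same linear relation are syntactically identical. Because over $\mathbb{F}_2$ the field equals its own field of fractions and the prime field carries no nontrivial scalars, the normal-form data reduces to a single $\mathbb{F}_2$-matrix together with dimension bookkeeping, so the uniqueness step becomes a statement about reduced row-echelon forms.

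The main obstacle is exactly this completeness step: establishing that the rewriting terminates and is confluent, and that the resulting normal form is a complete invariant of the denoted subspace. Rather than re-derive the confluence analysis from scratch, I would match the generators and relations above against the established presentation of $\catname{LinRel}_{\mathbb{F}_2}$ in the graphical-linear-algebra literature on which the surrounding development relies \cite{Comfort:2021axr}, and invoke that presentation's completeness; the remaining labour is the bookkeeping check that the equation list here coincides with theirs once the trivial $\mathbb{F}_2$-scalar nodes are discarded.
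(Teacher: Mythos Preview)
The paper does not prove this lemma at all: it is stated as a background presentation result, with the surrounding section explicitly flagged as ``based on \cite{Comfort:2021axr}''. There is no \texttt{proof} environment following the lemma in the source; the next item is the definition of $\catname{AffRel}_{\mathbb{F}_2}$. So there is no ``paper's own proof'' to compare against --- the authors are simply importing a known presentation.

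Your proposal is therefore already more than the paper offers, and your overall plan (soundness by direct verification, completeness by normal form, then defer the confluence/termination analysis to the existing literature) is the standard and correct shape for such results. Since you end by invoking the same reference the paper relies on, your approach is entirely compatible with theirs. One small slip: the scalar $0$ node in $\catname{LinRel}_{\mathbb{F}_2}$ does not denote the empty relation but the graph of multiplication by $0$, i.e.\ $\{(x,0):x\in\mathbb{F}_2\}$ (or its converse); the empty relation is not linear. This does not affect your argument, since over $\mathbb{F}_2$ the only scalars are $0$ and $1$ and both are already expressible via the (co)units, but the description should be corrected.
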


\begin{definition}
	The $\catname{PROP}$ of affine relations over $\mathbb{F}_2$, $\catname{AffRel}_{\mathbb{F}_2}$, is constructed in the same way as $\catname{LinRel}_{\mathbb{F}_2}$, except a map $ n \to m $ is instead a (possibly empty) affine subspace of $ {\mathbb{F}_2}^n \oplus {\mathbb{F}_2}^m $.
\end{definition}

\begin{lemma}
	$\catname{AffRel}_{\mathbb{F}_2}$ is generated by adding an extra generator to $\catname{LinRel}_{\mathbb{F}_2}$ modulo the following equations:

	\begin{equation*}
		\tikzfig{frob/43} = \tikzfig{frob/44}, \hspace*{1cm} \tikzfig{frob/45} = \tikzfig{frob/46},  \hspace*{1cm} \tikzfig{frob/47} = \tikzfig{frob/48}.
	\end{equation*}

\end{lemma}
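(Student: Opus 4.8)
The plan is to read the new generator as the \emph{affine unit}: a state $0 \to 1$ whose denotation is the single point $\{1\} \subseteq \mathbb{F}_2$, and to show that adjoining it to $\catname{LinRel}_{\mathbb{F}_2}$ realises exactly the affine subspaces while the three displayed equations suffice to prove every identity among the resulting diagrams. I would organise the argument into soundness of the new equations, expressiveness of the extended generating set, and completeness, following the treatment in \cite{Comfort:2021axr}.

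First, for soundness I would evaluate each of the three new equations directly as relations over $\mathbb{F}_2$. Reading the diagrams as the relational composites they denote, these equations record how the constant $1$ interacts with the two Frobenius structures already present in $\catname{LinRel}_{\mathbb{F}_2}$: that $1$ is copied by the comultiplication of the copy structure, that $1+1=0$ under the addition structure, and that contracting $1$ against the relevant counit yields the empty relation (the inconsistent constraint $1=0$). Each of these is a one-line check on finite subspaces of some $\mathbb{F}_2^k$, so soundness is immediate.

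Next, for expressiveness I would show every morphism of $\catname{AffRel}_{\mathbb{F}_2}$ lies in the image of the presented category. A nonempty affine subspace $A \subseteq \mathbb{F}_2^n \oplus \mathbb{F}_2^m$ may be written as $b + L$ for a linear subspace $L$ and an offset vector $b$; the linear part $L$ is already a $\catname{LinRel}_{\mathbb{F}_2}$ morphism, while the translation by $b$ is assembled by placing the new generator on the coordinates where $b$ is $1$ and the linear zero-state elsewhere, then adding coordinatewise through the addition Frobenius structure. The empty relation arises as the composite of the constant $1$ with the zero-effect, realising the contradiction $1=0$. Hence every affine relation is denoted by some diagram over the extended signature.

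The substantive step is completeness: that the three new equations, together with those of $\catname{LinRel}_{\mathbb{F}_2}$, prove every true equality. Here I would argue via a normal form obtained from the standard homogenisation correspondence, under which an affine subspace of $\mathbb{F}_2^n$ is recovered as the slice $\{x_0 = 1\}$ of a linear subspace of $\mathbb{F}_2^{1+n}$. Concretely, I would use the new equations to migrate every internal occurrence of the constant $1$ to a single boundary layer, leaving a purely linear diagram to which the completeness of $\catname{LinRel}_{\mathbb{F}_2}$ applies; two diagrams with the same denotation then reduce to the same normal form. The main obstacle is precisely this rewriting-to-normal-form step: verifying that the three equations are enough to push every constant to the boundary without obstruction, that the resulting offset is canonical, and that the degenerate collapse to the empty relation is handled uniformly. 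Once the normal form is secured, soundness and expressiveness combine with completeness of the linear fragment to yield the claim.
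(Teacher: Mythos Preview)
The paper does not give its own proof of this lemma: it is stated without proof in the background subsection on affine Lagrangian relations, which is explicitly flagged as drawn from Ref.~\cite{Comfort:2021axr}. So there is no in-paper argument to compare your proposal against.

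That said, your proposal is a sound outline of the standard argument one finds in the literature (and which Ref.~\cite{Comfort:2021axr} invokes): interpret the new generator as the affine point $\{1\}\subseteq\mathbb{F}_2$, verify soundness of the three equations as relations, obtain expressiveness by writing any nonempty affine subspace as a linear part plus an offset built from copies of the new generator, and derive completeness from a normal form that pushes all affine shifts to the boundary so that the completeness of $\catname{LinRel}_{\mathbb{F}_2}$ can be invoked on the remaining linear core. The only place where more care is warranted is your completeness step: the claim that the three equations suffice to commute every occurrence of the constant past every generator of the linear fragment (both Frobenius structures, the scalar, cups/caps, and swaps) and to canonicalise the resulting offset is the real content, and your sketch acknowledges but does not discharge it. If you want this to stand as a proof rather than a plan, you would need to either spell out that rewriting explicitly or defer to the cited reference, which is what the paper itself does.
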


\begin{lemma}
	The functor $(\_)^\perp:\catname{AffRel}_{\mathbb{F}_2} \to \catname{AffRel}_{\mathbb{F}_2}$ defined by:
	\begin{equation*}
	    \tikzfig{frob/ih-1}
		\mapsto
		\tikzfig{frob/ih-2},
		\quad
		\tikzfig{frob/ih-3}
		\mapsto
		\tikzfig{frob/ih-4},
		\quad
		\tikzfig{frob/ih-5}
		\mapsto
		\tikzfig{frob/ih-6},
		\quad
		\tikzfig{frob/ih-7}
		\mapsto
		\tikzfig{frob/ih-8},
	\end{equation*}

	we also have the orthogonal complement:
	$$
		V \mapsto V^\perp := \{  v \in V: \forall w \in V , \langle v,w\rangle = 0\}.
	$$

\end{lemma}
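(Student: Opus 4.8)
The plan is to exploit the fact that $\catname{AffRel}_{\mathbb{F}_2}$ is a $\catname{PROP}$ presented by generators and equations, as recorded in the preceding lemmas (the generators and equations of $\catname{LinRel}_{\mathbb{F}_2}$ together with the single affine generator and its three extra equations). A functor out of a $\catname{PROP}$ given by such a presentation is completely determined by its action on generators, and it is well defined as soon as it carries every defining equation to a valid equation in the target. So the first task is purely to posit the action on generators -- the black/white (copy/add) colour swap displayed in the statement, sending each Frobenius structure to its opposite-coloured partner -- and then to verify that the defining relations are respected.

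For this well-definedness check I would proceed equation by equation through the presentation. The key observation is that the presentation of $\catname{LinRel}_{\mathbb{F}_2}$ is essentially colour-symmetric: the bimonoid, Frobenius, and Hopf axioms occur in matched black/white pairs, so the colour swap merely permutes them, while the self-dual equations (identity, symmetry, the special and extraspecial laws) are sent to themselves. Hence every equation of $\catname{LinRel}_{\mathbb{F}_2}$ maps to another equation of $\catname{LinRel}_{\mathbb{F}_2}$. The only genuinely new bookkeeping is the affine extension, where I would check that the three equations governing the affine generator are each preserved by the prescribed image of that generator. This establishes that $(\_)^\perp$ is a well-defined $\catname{PROP}$ endofunctor.

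It then remains to identify this syntactic functor with the semantic operation $V \mapsto V^\perp$. I would do this in two moves. First, compute the denotation of each generator as an explicit subspace of the relevant $\mathbb{F}_2^n \oplus \mathbb{F}_2^m$ and check directly that the colour-swapped generator denotes the orthogonal complement of the original with respect to the standard bilinear form $\langle\,\cdot\,,\cdot\,\rangle$; for instance the identity relation is self-orthogonal, while the zero state and the total (codiscard) relation are mutually orthogonal, matching the unit/counit swap. Second, I would show that the set-theoretic orthogonal complement is itself strict monoidal and composition-preserving, i.e. $(V \oplus W)^\perp = V^\perp \oplus W^\perp$ and $(S \circ R)^\perp = S^\perp \circ R^\perp$. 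Since the two functors then agree on generators and both respect $\otimes$ and $\circ$, they agree on all morphisms.

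The main obstacle is the composition law $(S \circ R)^\perp = S^\perp \circ R^\perp$: orthogonal complement must be shown to commute with relational composition, where the latter existentially quantifies over the shared middle object. The clean way to settle this is not by direct element-chasing but by a rank argument -- one verifies the inclusion $S^\perp \circ R^\perp \subseteq (S \circ R)^\perp$ directly from the definition of the pairing, and then matches dimensions using $\dim V + \dim V^\perp = \dim(\text{ambient})$ over $\mathbb{F}_2$ together with the rank formula for relational composition. A secondary subtlety is the affine (non-linear) case: the orthogonal complement of a genuine coset is not literally captured by the linear formula, so one works in the linearised picture in which the affine shift is carried by the extra generator, ensuring the displayed definition of $V^\perp$ is applied to the correct linear representative. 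Both points are precisely where \cite{Comfort:2021axr} does the work, and I would follow that treatment.
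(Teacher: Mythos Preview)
The paper does not actually prove this lemma: it is stated without proof as background material, with the surrounding section explicitly noting that its contents are based on \cite{Comfort:2021axr}. Your proposal is a reasonable outline of the standard argument (well-definedness via the colour-symmetry of the presentation, followed by semantic identification with the orthogonal complement), and you yourself correctly defer to \cite{Comfort:2021axr} for the two technical points where real work is needed. So there is nothing to compare against; your sketch simply fills in what the paper leaves to the reference.
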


Affine relations are represented as affine subspaces that map between vector spaces, with objects given by natural numbers (indicating vector space dimensions) and morphisms as affine subspaces of ${\mathbb{F}_2}^n \oplus {\mathbb{F}_2}^m$. These subspaces define possible pairings between vectors in the input and output spaces, allowing for transformations with both linear and constant shift components.

In the category $\catname{AffRel}_{\mathbb{F}_2}$, composition is relational: given two affine relations, one from an $n$-dimensional space to an $m$-dimensional space and another from $m$ to an $\ell$-dimensional space, their composition is the set of pairs in ${\mathbb{F}_2}^n \oplus {\mathbb{F}_2}^\ell$ that can be linked by an intermediate vector in ${\mathbb{F}_2}^m$.

For example, a relation from ${\mathbb{F}_2}^2$ to $\mathbb{F}_2$ might be represented by the affine subspace $\{(x, y, z) \in {\mathbb{F}_2}^2 \oplus {\mathbb{F}_2} : z = x + y + 1\}$, where the output is shifted by 1. Another relation from ${\mathbb{F}_2}^3$ to ${\mathbb{F}_2}^2$ could be given by $\{(x, y, z, w, v) \in {\mathbb{F}_2}^3 \oplus {\mathbb{F}_2}^2 : w = x + y, v = y + z\}$, with outputs as specific linear combinations of inputs.

\begin{lemma}
	There is a faithful, strong symmetric monoidal functor $L:\catname{LinRel}_{\mathbb{F}_2} \to \catname{LagRel}_{\mathbb{F}_2}$ given by the following action on the generators of $\catname{LinRel}_{\mathbb{F}_2}$ (i.e.\ doubling, and then changing the colours of one of the copies):
	$$
		\tikzfig{frob/lagrel-1}
		\mapsto
		\tikzfig{frob/lagrel-2}
	$$
\end{lemma}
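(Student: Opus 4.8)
The plan is to exploit the finite presentation of $\catname{LinRel}_{\mathbb{F}_2}$ by generators and equations recorded above: a symmetric monoidal functor out of a presented $\catname{PROP}$ is determined by its values on generators, and is well defined precisely when it carries every defining equation to a valid equation in the target. So I would first fix $L$ on objects and generators, then confirm the equations survive, and finally read off monoidality and faithfulness. On objects I set $L(n) = 2n$, identifying $\mathbb{F}_2^{2n}$ with $\mathbb{F}_2^{n} \oplus \mathbb{F}_2^{n}$, where the first summand is the ``primal'' ($Z$) copy, the second the ``dual'' ($X$) copy, and the symplectic form $\omega$ pairs one copy against the other. On a morphism $R : n \to m$, viewed as a subspace $R \subseteq \mathbb{F}_2^n \oplus \mathbb{F}_2^m$, I would set
$$
  L(R) \;=\; \{\, u : u_Z \in R,\ u_X \in R^{\perp} \,\},
$$
where $u_Z$ and $u_X$ denote the primal and dual components of $u$ and $(\_)^{\perp}$ is the covariant orthogonal-complement functor of the preceding lemma. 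On a single generator this is exactly ``keep one copy and place the colour-swapped generator in the other copy'', matching the displayed figure.

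Next I would verify that each $L(R)$ is Lagrangian. Maximality is the dimension count $\dim L(R) = \dim R + \dim R^{\perp} = n + m$, the required half-dimension; and isotropy reduces, for $u, v \in L(R)$, to $\omega(u,v) = \langle u_Z, v_X \rangle + \langle u_X, v_Z \rangle$, both terms vanishing because $R \perp R^{\perp}$. For functoriality it then suffices, by the presentation, to check that $L$ sends each generating equation of $\catname{LinRel}_{\mathbb{F}_2}$ to a true equation of $\catname{LagRel}_{\mathbb{F}_2}$; but more efficiently I would argue directly that $L(S \circ R) = L(S) \circ L(R)$. Since $L(R)$ constrains its $Z$- and $X$-components independently, relational composition in the doubled symplectic space should factor as a composition on the $Z$-copies and one on the $X$-copies: the former yields $S \circ R$, the latter $S^{\perp} \circ R^{\perp}$, which equals $(S \circ R)^{\perp}$ by covariance of $(\_)^{\perp}$. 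Thus the composite again has $Z$-part $S \circ R$ and $X$-part $(S \circ R)^{\perp}$, i.e.\ $L(S) \circ L(R) = L(S \circ R)$, and identities are preserved since $\mathrm{id}^{\perp} = \mathrm{id}$.

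Strong symmetric monoidality should then be almost immediate: both $\catname{PROP}$s carry direct sum as tensor, and $(R \oplus S)^{\perp} = R^{\perp} \oplus S^{\perp}$ gives $L(R \oplus S) = L(R) \oplus L(S)$ with identity coherence maps, while the symmetry generator maps to the symmetry of $\catname{LagRel}_{\mathbb{F}_2}$ because a swap is its own orthogonal complement. Faithfulness I would obtain by recovering $R$ from $L(R)$: projecting $L(R)$ onto its $Z$-components returns exactly $R$, so $L(R_1) = L(R_2)$ forces $R_1 = R_2$. I expect the main obstacle to be the functoriality step, specifically pinning down $(S \circ R)^{\perp} = S^{\perp} \circ R^{\perp}$ and verifying that relational composition in the doubled symplectic space genuinely decouples into the two colour copies without interference through $\omega$; once that decoupling is established, the remaining checks are routine bookkeeping.
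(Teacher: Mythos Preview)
The paper does not actually prove this lemma: it is quoted as background from the cited reference on affine Lagrangian relations, and no argument is supplied. Your proposal therefore cannot be compared against a proof in the paper, but it is a correct and self-contained way to establish the result.

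Two remarks on the points you flag as potential obstacles. First, the identity $(S \circ R)^{\perp} = S^{\perp} \circ R^{\perp}$ that you isolate is exactly the functoriality of the colour-swap endofunctor $(\_)^{\perp}$ recorded in the lemma immediately preceding this one, so you may simply cite it rather than reprove it. Second, your worry about ``interference through $\omega$'' is a non-issue: composition in $\catname{LagRel}_{\mathbb{F}_2}$ is ordinary relational composition inherited from $\catname{LinRel}_{\mathbb{F}_2}$; the form $\omega$ enters only in deciding which subspaces are admissible morphisms, not in how they compose. Since $L(R)$ is, up to a fixed reordering of wires, the direct sum $R \oplus R^{\perp}$, relational composition on $2m$ intermediate wires automatically splits into independent compositions on the $Z$- and $X$-blocks. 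With those two points granted, the remaining checks (Lagrangian dimension count, isotropy via $R \perp R^{\perp}$, monoidality via $(R \oplus S)^{\perp} = R^{\perp} \oplus S^{\perp}$, and faithfulness by projecting onto the $Z$-block) are exactly as you describe.
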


\begin{lemma}
	The forgetful functor $E:\catname{LagRel}_{\mathbb{F}_2} \to \catname{LinRel}_{\mathbb{F}_2}$  is faithful, strong symmetric monoidal.
\end{lemma}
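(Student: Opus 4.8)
The plan is to exploit that $E$ is a \emph{forgetful} functor: on morphisms it changes nothing but the ambient structure, reinterpreting a Lagrangian subspace as an ordinary linear subspace, so almost all of the required properties are inherited immediately, and the work reduces to three routine checks—functoriality, faithfulness, and compatibility with the (symmetric) monoidal structure. First I would fix the action of $E$ explicitly. On objects it sends the $\catname{LagRel}_{\mathbb{F}_2}$-object $n$, standing for the symplectic space $\mathbb{F}_2^{2n}$, to the $\catname{LinRel}_{\mathbb{F}_2}$-object $2n$, namely the underlying vector space with the form $\omega$ forgotten. On morphisms it sends a Lagrangian subspace $W \subseteq \mathbb{F}_2^{2n} \oplus \mathbb{F}_2^{2m}$ to the same set $W$ viewed merely as a linear subspace, i.e.\ as a linear relation $2n \to 2m$. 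Faithfulness is then immediate: if $E(W) = E(W')$ then $W$ and $W'$ are equal as subsets and hence equal as Lagrangian relations, so $E$ is injective on every hom-set.

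Next I would check functoriality. Identities pose no problem, since the identity Lagrangian relation on $\mathbb{F}_2^{2n}$ is the diagonal subspace, which is exactly the identity linear relation on $\mathbb{F}_2^{2n}$. For composition, the decisive observation is that composition in both $\catname{PROP}$s is \emph{relational} composition of the underlying subspaces, and $E$ leaves those underlying subspaces untouched; hence it suffices to know that the relational composite of two Lagrangian relations coincides, as a set, with the relational composite taken in $\catname{LinRel}_{\mathbb{F}_2}$. This holds because the relational composite of a composable pair of Lagrangian relations is again Lagrangian—the very fact that makes $\catname{LagRel}_{\mathbb{F}_2}$ a well-defined category—so that no symplectic correction intervenes and $E(g \circ f) = E(g) \circ E(f)$ on the nose.

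Finally I would treat the monoidal data. Both $\catname{PROP}$s have direct sum as tensor, so on objects $E(n + m) = 2(n+m) = 2n + 2m = E(n) + E(m)$ and $E(0) = 0$, while on morphisms $E(W \oplus V) = E(W) \oplus E(V)$ literally; thus the coherence comparisons are identities and $E$ is in fact \emph{strict}, in particular strong, monoidal. Symmetry follows likewise: the braiding of $\catname{LagRel}_{\mathbb{F}_2}$ on a pair of symplectic factors becomes, once $\omega$ is dropped, precisely the swap relation that serves as the symmetry of $\catname{LinRel}_{\mathbb{F}_2}$. The only genuinely non-formal ingredient is the one used in the functoriality step—that relational composition does not leave $\catname{LagRel}_{\mathbb{F}_2}$—which I expect to invoke from the source material \cite{Comfort:2021axr} rather than reprove; everything else is bookkeeping once the forgetful action is pinned down.
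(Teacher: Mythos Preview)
Your proposal is correct. The paper does not actually prove this lemma: it appears in a background subsection whose contents are explicitly attributed to Ref.~\cite{Comfort:2021axr}, and the statement is simply recorded there without demonstration. Your direct unpacking of the forgetful action is the natural argument, and you correctly isolate the one non-formal ingredient---that relational composites of Lagrangian subspaces remain Lagrangian---as something to import from that reference rather than reprove.
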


\begin{definition}
	Let $\catname{AffLagRel}_{\mathbb{F}_2}$ denote the monoidal category whose objects are symplectic vector spaces with morphisms are generated by the image of $\catname{LagRel}_{\mathbb{F}_2} \xrightarrow{E} \catname{LinRel}_{\mathbb{F}_2} \to \catname{AffRel}_{\mathbb{F}_2}$ as well as all affine shifts and where the tensor product is the direct sum.
\end{definition}

\subsection{Homology within surface codes and Lagrangian relations}

As discussed in the introduction, braiding defects (or, more generally, performing code deformations) in a surface code induces logical maps on the code space. From the standpoint of error correction, understanding these maps amounts to tracking how the logical operators are transformed under these braiding procedures. We assume the topology of the surface code to be such that we can define logical operators which come in two main flavors:
\begin{itemize}
	\item Loops encircling one or more defects/boundaries, and
	\item Strings connecting pairs of defects or boundaries.
\end{itemize}

Other operators can exist (and most topologies of a surface code patch we will produce other operators), but we only focus on the two types in this work. From a homological perspective, the logical operators are the first homology (or cohomology for the dual lattice) groups on the underlying lattice seen as a chain complex \cite{raussendorf2006fault}. Tracking the evolution of these groups is akin to specifying the Pauli webs or the spacetime stabilisers of maps. The operators and their action (and the first homology groups) at the start of the computation are equivalent to the operators at the end of the computation. From homological algebra, we know that:

\begin{equation}
	H^{1}\!\bigl(C_{\bullet};\,\mathbb{F}_{2}\bigr)
	\quad\text{and}\quad
	H_{2}\!\bigl(C^{\bullet};\,\mathbb{F}_{2}\bigr)
\end{equation}

\noindent
are vector spaces over $\mathbb{F}_2$ where each basis element corresponds to a logical operator. This means that as we evolve the defects, we define a linear relation between the two $\mathbb{F}_2$-vector spaces of logical operators. For surface codes, this space is actually a direct product of vector spaces for primal and dual logical operators. In homological algebra, this becomes a direct product of spaces corresponding to the first homology and the second cohomology groups.

Assume there is an isomorphism between the first homology group and the second cohomology group by Lefschetz duality, that is, that our chain complex comes from an appropriate three-dimensional compact manifold\footnote{For more information on the Lefschetz and Poincar\'e duality see Section 3.3 of Ref.~\cite{hatcher2002algebraic}.}:
\begin{equation}
	H^1(C^{\bullet}, \partial C^{\bullet}; \mathbb{F}_2) \cong H_{2}(C_{\bullet}; \mathbb{F}_2)
\end{equation}
The intersection number between the operators supported on a defect is well-defined, because our assumptions guarantee a primal and dual operator supported on the same defect,
\begin{equation}
	H^1(C^{\bullet}; \mathbb{F}_2) \times H_2(C_{\bullet}; \mathbb{F}_2) \rightarrow \mathbb{F}_2,
\end{equation}
which evaluates to $1$ if the two operators are supported on the same defect and $0$ otherwise. This also captures the commutation relations, since $X$ and $Z$ operators anticommute if when supported on the same defect.

We treat our (co-)homology groups as vector spaces over $\mathbb{F}_2$. Denote the ambient vector space by
\begin{equation}
	V = H_1 \oplus H_1^*.
\end{equation}
Define the pairing as
\[
\langle\,\cdot,\cdot\,\rangle\colon H \times H^* \to \mathbb{F}_2\text,
\]
where $H$ is the vector space of logical operators and $H^*$ is the space of dual operators, and $\langle x,\beta\rangle$ is defined for vectors $x \in H$ and $\beta \in H^*$. In our case, it is given by $x^T \beta$. In other words, we compare the parities of dual and primal operators. This captures the commutation relations of operators defined on a surface code. This allows us to define a symplectic form on \(V\) by:

\begin{equation}
\omega_1\big((x,\alpha),(y,\beta)\big) = \langle x,\beta\rangle + \langle y,\alpha\rangle,
\end{equation}

\noindent
which is zero if the parities across two operators $(x,\alpha),(y,\beta)$ for dual and primal and primal components have the same parity.

Let us now consider the situation where we attempt to map between operators at the beginning of some surface code procedure to the operators at the end. This is a linear relation between the spaces of logical operators on both ``time-slices''. For the second (final) time-slice, define:
\begin{equation}
V_2 = H_2 \oplus H_2^*.
\end{equation}

\noindent
The same construction (via Lefschetz duality and the Kronecker pairing) gives a symplectic form on \(V_2\):
\begin{equation}
\omega_2\big((x,\alpha),(y,\beta)\big) = \langle x,\beta\rangle + \langle y,\alpha\rangle.
\end{equation}
A linear relation (a linear subspace) $g\colon V_1 \rightarrow V_2$ can be defined component-wise:
\begin{equation}
g = g_p \oplus g_d : H_1 \oplus H_1^* \to H_2 \oplus H_2^*,
\end{equation}
where
\[
g_p\colon H_1 \to H_1, \quad g_d\colon H_1^* \to H_2^*.
\]
Let $\Gamma(g)$ be the graph of the relation $g$, which is compatible with the duality pairings; in other words, it is equipped with a symplectic form
\begin{equation}
\Omega = \omega_1 \oplus (-\omega_2),
\end{equation}
and we assume
\begin{equation}
\omega_{1}(v_{1}, w_{1}) \;=\;\omega_{2}(v_{2}, w_{2})
\quad
\text{whenever}
\quad
(v_{1}, v_{2}), \,(w_{1}, w_{2}) \,\in\, \Gamma(g).
\end{equation}

We want to find maximal such graphs (each logical operator on the initial state should be mapped to another logical operator on the final state if it possible). These assumptions and definitions can be interpreted within the context of logical operators in surface codes as enforcing of preservation of commutation relation as the computation progresses. Logical degrees of freedom are defined by the commutation relations and we do not want to alter the structure of the code, possibly merging distinct logical states or splitting one logical state into two. This would no longer describe ``the same logical qubits,” so one would not be performing a valid operation on the original code.

Consider $V_1 \oplus \overline{V_2}$, where \(\overline{V_2}\) denotes \(V_2\) equipped with the symplectic form \(-\omega_1\). Since Lefschetz duality and the Kronecker pairing endow both \(V_1\) and \(V_2\) with symplectic forms (namely, \(\omega_1\) and \(\omega_2\), respectively), we have

\begin{equation}
\Omega\big((v_1, v_2), (w_1, w_2)\big) = \omega_1(v_1,w_1) - \omega_2\big(v_2,w_2\big), \quad
\text{for}
\quad
(v_{1}, v_{2}), \,(w_{1}, w_{2}) \,\in\, \Gamma(g).
\end{equation}

\noindent
Thus, for any two elements \((v_1, v_2)\) and \((w_1, w_2)\) in \(\Gamma(g)\) we obtain:
\begin{equation}
	\Omega\big((v_1, v_2), (w_1, w_2)\big) = \omega_1(v_1,w_1) - \omega_1\big(v_1,w_1\big) = 0.
\end{equation}

\noindent
This shows that \(\Gamma(g)\) is isotropic in \(V_1 \oplus \overline{V_2}\).

To prove that \(\Gamma(g)\) is coisotropic, we recall the assumption that we want to find maximal graphs $\Gamma(g)$. This requires the graph to be maximal isotropic. This means that the graph is coisotropic. Thus, the graph of the relation $g$ is a Lagrangian relation.

\begin{theorem}
The evolution of logical operators in a surface code between two points of the computation is a Lagrangian relation.
\end{theorem}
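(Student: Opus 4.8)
The plan is to combine the facts assembled in the preceding discussion into the two conditions that characterise a Lagrangian subspace of a symplectic space: isotropy together with maximality (equivalently, self-orthogonality under the symplectic form). I work throughout in the ambient space $V_1 \oplus \overline{V_2}$ carrying the form $\Omega = \omega_1 \oplus (-\omega_2)$, where $V_1 = H_1 \oplus H_1^*$ and $V_2 = H_2 \oplus H_2^*$ are the $\mathbb{F}_2$-vector spaces of logical operators at the initial and final time-slices, each equipped with the symplectic form supplied by Lefschetz duality and the Kronecker pairing. The evolution of logical operators is the linear relation $g \colon V_1 \to V_2$, and the object to be analysed is its graph $\Gamma(g) \subseteq V_1 \oplus \overline{V_2}$.

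First I would record isotropy of $\Gamma(g)$. For any $(v_1, v_2), (w_1, w_2) \in \Gamma(g)$, the hypothesis that the evolution preserves commutation relations gives $\omega_1(v_1, w_1) = \omega_2(v_2, w_2)$, whence
\[
\Omega\big((v_1, v_2), (w_1, w_2)\big) = \omega_1(v_1, w_1) - \omega_2(v_2, w_2) = 0.
\]
Thus $\Gamma(g) \subseteq \Gamma(g)^{\Omega}$, so $\Gamma(g)$ is isotropic. This step is immediate from the compatibility condition already imposed on $\Gamma(g)$.

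The remaining step is to upgrade isotropy to the full Lagrangian condition $\Gamma(g) = \Gamma(g)^{\Omega}$. I would do this by invoking maximality: the relation $g$ is constructed so that every independent logical operator on the initial time-slice is carried to a logical operator on the final time-slice, and dually, so that $\Gamma(g)$ relates the entire set of logical degrees of freedom rather than a proper subset of them. Consequently no vector of $V_1 \oplus \overline{V_2}$ can be adjoined to $\Gamma(g)$ while retaining isotropy, since such a vector would correspond to a logical operator either left unmapped or introduced spuriously, contradicting the construction of $g$. Hence $\Gamma(g)$ is a \emph{maximal} isotropic subspace, and the standard fact of symplectic linear algebra (valid over $\mathbb{F}_2$, where the form is alternating and nondegenerate) that a maximal isotropic subspace of a finite-dimensional symplectic space coincides with its own orthogonal complement yields $\Gamma(g) = \Gamma(g)^{\Omega}$. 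Therefore $\Gamma(g)$ is Lagrangian, i.e.\ $g$ is a Lagrangian relation.

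The main obstacle is the maximality argument: isotropy is a formal consequence of the symplectic-form bookkeeping, but coisotropy rests on the substantive claim that a braiding procedure neither destroys nor fabricates logical degrees of freedom between the two time-slices. Making this precise requires tracking the ranks of the relevant relative (co)homology groups and verifying that the component-wise description $g = g_p \oplus g_d$ genuinely exhausts the primal and dual sectors; the cleanest route is to show directly that any isotropic extension of $\Gamma(g)$ would force either a failure of totality of $g$ on $H_1$ or a failure of the pairing-compatibility that defines admissible operator pairs, so that no proper isotropic extension exists and $\Gamma(g)$ is maximal.
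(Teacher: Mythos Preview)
Your proposal is correct and follows essentially the same route as the paper: isotropy of $\Gamma(g)$ is obtained from the compatibility condition $\omega_1(v_1,w_1)=\omega_2(v_2,w_2)$, and the Lagrangian property is then deduced from the maximality requirement on $\Gamma(g)$, which the paper simply invokes as an assumption (``we recall the assumption that we want to find maximal graphs''). Your discussion of the obstacle is apt---the paper likewise treats coisotropy as following directly from the stipulated maximality rather than proving it from first principles.
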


Under our assumptions, braiding defects realise specific affine Lagrangian relations on the surface--code homology.
Each defect supports two independent classes: a loop that encircles the defect and a string that reaches a boundary.
For a primal defect the loop class is dual (an $X$‑type operator) and the string class is primal (a $Z$‑type operator); for a dual defect the roles are reversed.

When only one species of defects is present, the elementary topological move is a junction that splits one defect into several or merges several into one.
A junction copies the loop class—after the move every participating defect inherits the same encircling loop—and it identifies the string classes: strings from different participating defects to the boundary are pairwise homologous and therefore represent a single equivalence class rather than independent operators.
Because the two species act on opposite homology types, a primal‑type junction copies dual classes while identifying primal classes, whereas a dual‑type junction copies primal classes while identifying dual ones.
Neglecting processes in which the two species meet, these operations give two distinct affine Lagrangian maps on the surface--code homology, one for each species of defect:

\begin{equation*}
	\tikzfig{frob/junction-1} \quad \tikzfig{frob/junction-2}.
\end{equation*}

Of course, we cannot compose sequentially two defects of different species and therefore, a map from above corresponding to one type of defect cannot be composed with a map corresponding to the other type of defect (tensor product is allowed though).

We also specifically designate the following (standard \cite{raussendorf2006fault,raussendorf2007topological}) interpretations:

\begin{minipage}{0.45\textwidth}
	\begin{equation*}
		\left.
		\begin{aligned}
			\tikzfig{knot-generators-tikzit/13}
			\\[1ex]
			\tikzfig{knot-generators-tikzit/14}
		\end{aligned}
		\;\right\}
		\,\mapsto
		\tikzfig{frob/cnot-1}
	\end{equation*}
	\end{minipage}
	\begin{minipage}{0.45\textwidth}
	\begin{equation*}
		\left.
		\begin{aligned}
			\tikzfig{knot-generators-tikzit/15}
			\\[1ex]
			\tikzfig{knot-generators-tikzit/16}
		\end{aligned}
		\;\right\}
		\,\mapsto
		\tikzfig{frob/cnot-2}.
	\end{equation*}
\end{minipage}

The resulting set of Lagrangian relations is not a subcategory of $\catname{LagRel}_{\mathbb{F}_2}$, since it is not closed under sequential composition. Sequentially, we only allow spiders of the same type to be composed.

We now have a language to describe how the logical operators evolve in our setting as the computation progresses. This can be understood as an analysis of a hypothetical situation describing what would happen if these operators were applied. If we actually were to apply one of these operators, it would negate or cancel the hypothetical operator considered in the previous sentence. This is what the semantics of the red decorators are. This is actually what the affine Lagrangian relations describe in $\mathbb{F}_2$.

\begin{corollary}
The evolution of logical operators in a surface code between two points of the computation together with the logical (non-detectable) errors is an affine Lagrangian relation.
\end{corollary}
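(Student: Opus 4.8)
The plan is to build directly on the preceding Theorem, which already establishes that the error-free evolution of logical operators is a Lagrangian relation $\Gamma(g) \subseteq V_1 \oplus \overline{V_2}$. The only new ingredient in the corollary is the inclusion of the logical (non-detectable) errors, which in the diagrammatic language are exactly the red decorators. My first step would be to recall that, in the $\mathbb{F}_2$-homological description, a logical error is a \emph{fixed} homology class whose support lies entirely in the logical (undetectable) sector, so it contributes a constant element $e \in V_1 \oplus \overline{V_2}$, independent of which logical input is considered.

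Next I would argue that ``applying'' such an error amounts to translating the operator assignment by this constant vector. Whereas the error-free evolution records the hypothetical correlation-surface operator as an element of $\Gamma(g)$, the presence of a logical error negates (equivalently, shifts by $e$) that operator uniformly across the code space, as described just before the corollary. The resulting relation is thus the coset $e + \Gamma(g)$, whose underlying direction space is still the Lagrangian subspace $\Gamma(g)$.

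The conclusion is then immediate from the definition of $\catname{AffLagRel}_{\mathbb{F}_2}$: its morphisms are generated by the image of $\catname{LagRel}_{\mathbb{F}_2}$ together with all affine shifts. Since the error-free part is Lagrangian by the Theorem and each logical error is an affine shift, their composite is a morphism of $\catname{AffLagRel}_{\mathbb{F}_2}$, i.e.\ an affine Lagrangian relation. Because translation leaves the symplectic form on the fixed direction space $\Gamma(g)$ unchanged, the coset $e + \Gamma(g)$ inherits the Lagrangian property of its linear part, so it is genuinely affine Lagrangian rather than merely affine.

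The main obstacle is the justification in the second paragraph: one must show that a logical error manifests as a genuine \emph{constant} shift rather than as a perturbation of the linear or symplectic structure. This requires checking that the ``negate/cancel'' semantics of the red decorators act identically on every encoded state, so that the shift $e$ does not depend on the input, and that the shift is consistent across the two time-slices, so that it defines a single affine shift $e$ on the combined space $V_1 \oplus \overline{V_2}$. Both follow from the fact that non-detectable errors commute with all stabilisers and hence act only by fixed signs on the logical operators; with this in hand, the preservation-of-commutation argument of the Theorem carries over verbatim to the translated subspace, since the symplectic form is bilinear and sees only the direction space $\Gamma(g)$.
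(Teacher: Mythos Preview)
Your proposal is correct and follows the same approach as the paper: the corollary is stated immediately after the observation that the red decorators (logical errors) act by negating or cancelling the hypothetical operator, which is exactly what affine Lagrangian relations over $\mathbb{F}_2$ describe. The paper gives no further proof beyond this remark and the preceding Theorem, so your elaboration---that errors contribute a constant shift $e$ and the resulting coset $e+\Gamma(g)$ is affine Lagrangian because its direction space remains Lagrangian---is a faithful (and more explicit) rendering of the intended argument.
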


\subsection{Soundness of $\catname{KNOT}$ for Affine Lagrangian relations}
\label{sec:soundness-afflagrel}

Let $\widehat{\catname{KNOT}}$ be the category generated by the generators of $\catname{KNOT}$ without the red circles and bars.

\begin{definition}
	Define a function $f$ that maps the generators of $\widehat{\catname{KNOT}}$ to ZX-diagrams as follows:

	\begin{equation*}
		\tikzfig{knot-generators-tikzit/1} \xmapsto{f} \tikz[tikzfig]{ \draw (0,1) -- (0,-1); }
		\qquad
		\tikzfig{knot-generators-tikzit/2} \xmapsto{f} \tikz[tikzfig]{ \draw (0,1) -- (0,-1); }
	\end{equation*}

	\begin{equation*}
		\tikzfig{knot-generators-tikzit/23} \xmapsto{f} \tikzfig{frob/a}
		\qquad
		\tikzfig{knot-generators-tikzit/26} \xmapsto{f} \tikzfig{frob/a}
		\qquad
		\tikzfig{knot-generators-tikzit/24} \xmapsto{f} \tikzfig{frob/u}
		\qquad
		\tikzfig{knot-generators-tikzit/25} \xmapsto{f} \tikzfig{frob/u}
	\end{equation*}

	\begin{equation*}
		\tikzfig{knot-generators-tikzit/7} \xmapsto{f} \tikzfig{frob/swap}
		\qquad
		\tikzfig{knot-generators-tikzit/8} \xmapsto{f} \tikzfig{frob/swap}
		\qquad
		\tikzfig{knot-generators-tikzit/9} \xmapsto{f} \tikzfig{frob/swap}
	\end{equation*}

	\begin{equation*}
		\tikzfig{knot-generators-tikzit/10} \xmapsto{f} \tikzfig{frob/swap}
		\qquad
		\tikzfig{knot-generators-tikzit/11} \xmapsto{f} \tikzfig{frob/swap}
		\qquad
		\tikzfig{knot-generators-tikzit/12} \xmapsto{f} \tikzfig{frob/swap}
	\end{equation*}

	\begin{equation*}
		\tikzfig{knot-generators-tikzit/3} \xmapsto{f} \tikzfig{frob/white-effect}
		\qquad
		\tikzfig{knot-generators-tikzit/4} \xmapsto{f} \tikzfig{frob/white-state}
		\qquad
		\tikzfig{knot-generators-tikzit/5} \xmapsto{f} \tikzfig{frob/grey-effect}
		\qquad
		\tikzfig{knot-generators-tikzit/6}\xmapsto{f} \tikzfig{frob/grey-state}
	\end{equation*}

	\begin{equation*}
		\tikzfig{knot-generators-tikzit/13} \xmapsto{f} \tikzfig{frob/gen-3}
		\qquad
		\tikzfig{knot-generators-tikzit/14} \xmapsto{f} \tikzfig{frob/gen-3}
	\end{equation*}

	\begin{equation*}
		\tikzfig{knot-generators-tikzit/15} \xmapsto{f} \tikzfig{frob/gen-4}
		\qquad
		\tikzfig{knot-generators-tikzit/16}\xmapsto{f} \tikzfig{frob/gen-4}
	\end{equation*}

	\begin{equation*}
		\tikzfig{knot-generators-tikzit/19} \xmapsto{f} \tikzfig{frob/grey-2-1}
		\qquad
		\tikzfig{knot-generators-tikzit/21} \xmapsto{f} \tikzfig{frob/white-2-1}
	\end{equation*}

\end{definition}

\begin{lemma}
	The map $f$ lifts to a functor $F\colon \widehat{\catname{KNOT}} \to \catname{AffRel}_{\mathbb{F}_2}$. It is defined on objects as the length of the bit string and on morphisms as $f$.
\end{lemma}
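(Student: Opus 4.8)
The plan is to use the generators-and-relations criterion for functoriality spelled out in the appendix: since $\widehat{\catname{KNOT}}$ is the strict monoidal category freely generated by the decorator-free generators of $\catname{KNOT}$, modulo exactly those equations of Table~\ref{table:KNOT} that do not mention a red circle or bar, it suffices to check that $f$ respects each surviving relation. The action on objects is forced and immediate: a bit-string maps to its length, concatenation of bit-strings maps to addition of natural numbers, so $F$ is strict monoidal on objects and the entire content lies in the morphism relations. Concretely, for each defining equation $D_1 = D_2$ of $\widehat{\catname{KNOT}}$ I would verify that $f(D_1)$ and $f(D_2)$ are the same affine subspace of $\mathbb{F}_2^n \oplus \mathbb{F}_2^m$. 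Because each $f(D_i)$ is an explicitly given affine relation, every such verification reduces to a finite linear-algebra computation over $\mathbb{F}_2$, and functoriality (preservation of $\circ$ and $\otimes$) then follows since both categories are $\catname{PROP}$s.

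First I would dispatch the relations that remain after the decorators are dropped, namely the decorator-free planar-isotopy rules and the base rules \eqref{eq:loop-around-one-wire}, \eqref{eq:K6}, \eqref{eq:K-spider}, and \eqref{eq:spider-disappears}. The three same-type crossings all map to the swap of $\catname{AffRel}_{\mathbb{F}_2}$, so \eqref{eq:K1} holds by definition. For the opposite-type crossings, $f$ assigns the $\gatename{CNOT}$-type affine relation, and I would verify the surviving isotopy moves using its standard algebraic properties: it is self-inverse, it yanks straight against the images of the cups and caps, it commutes appropriately with the swap, and \eqref{eq:overcross-kill} follows because precomposing with the relevant state/effect collapses the controlled relation. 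The base rules then follow from the Frobenius and bialgebra laws already recorded in the presentation of $\catname{AffRel}_{\mathbb{F}_2}$ given in the lemmas above, each being a one- or two-step rewrite once the black and grey bifurcation maps are recognised as the two interacting (co)multiplications.

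The only relation that is not a finite generator-level check is \eqref{eq:loop-D-commute}, which is a schema quantified over \emph{all} decorator-free diagrams $D$ whose boundary wires match the colour of the surrounding loops. For this I would argue by induction on a generator-decomposition of $D$, mirroring the corresponding step in the proof of Proposition~\ref{prop:z-functor-zx}: the base case is a loop commuting past a single generator, and the inductive step propagates it through sequential and parallel composites. Under $f$, a loop around a wire becomes a parity constraint, and the inductive claim is that this constraint slides freely through every $\gatename{CNOT}$ and every (co)multiplication in $f(D)$, with the Hopf law eliminating any spurious connections produced in transit. I expect this inductive argument to be the main obstacle, since it is the only part that is not purely mechanical; once it is in hand, the remaining relations are equalities of explicitly given affine subspaces and are routine.
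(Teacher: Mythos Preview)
Your proposal is correct and is essentially the paper's own approach: verify each surviving relation of $\widehat{\catname{KNOT}}$ under $f$ by reducing to the known Frobenius/bialgebra/Hopf laws of $\catname{AffRel}_{\mathbb{F}_2}$, handling \eqref{eq:loop-D-commute} by the same structural argument on the generators of $D$ as in the proof of Proposition~\ref{prop:z-functor-zx}. The paper is simply terser, explicitly invoking that the ZX verification carries over verbatim and displaying only \eqref{eq:K6} and \eqref{eq:loop-around-one-wire} as worked examples, whereas you spell out the organisation of the case analysis in more detail.
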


\begin{proof}
	The axioms of $\catname{KNOT}$ hold in $\catname{AffRel}_{\mathbb{F}_2}$, as they follow from the equivalent rules in ZX-calculus (see \ref{proof:z-functor-zx}). Let us show two examples.
	\begin{equation*}
		\tikzfig{knot-generators-tikzit/K5-L} \mapsto \tikzfig{frob/K6-p} = \tikzfig{frob/K6-p-1} = \tikzfig{frob/K6-p-3} = \tikzfig{frob/K6-p-4}
		\mapsfrom
		\tikzfig{knot-generators-tikzit/K5-R}
	\end{equation*}

	\begin{equation*}
		\tikzfig{knot-generators-tikzit/K8-L} \mapsto \tikzfig{frob/K7-p} = \tikzfig{frob/K7-p-1} = \tikzfig{frob/K7-p-2}
		\mapsfrom \tikzfig{knot-generators-tikzit/K8-R}
	\end{equation*}

\noindent
Equation \eqref{eq:loop-D-commute} can be proven in the same way as the translation into ZX (see \ref{proof:z-functor-zx}).
All other rules can be proven with an application of a single rule.
\end{proof}

Composing gives a functor $\widehat{G} = L \circ F: \widehat{\catname{KNOT}} \to \catname{LagRel}_{\mathbb{F}_2}$. Finally, we extend $\widehat{G}$ to a functor $G \colon \catname{KNOT} \to \catname{AffLagRel}_{\mathbb{F}_2}$ by mapping the red circles and bars as follows:
\begin{equation*}
	\tikzfig{knot-generators-tikzit/30} \mapsto \tikzfig{frob/shift-id}, \hspace*{1cm} \tikzfig{knot-generators-tikzit/28} \mapsto \tikzfig{frob/shift-id}, \hspace*{1cm} \quad \tikzfig{knot-generators-tikzit/27} \mapsto \tikzfig{frob/id-shift}, \hspace*{1cm} \tikzfig{knot-generators-tikzit/29} \mapsto \tikzfig{frob/id-shift}.
\end{equation*}
This is indeed a well-defined functor, as
\begin{multline*}
	\tikzfig{knot-generators-tikzit/K9-L} \mapsto \tikzfig{frob/K8-p}  = \tikzfig{frob/K8-p-1}  = \tikzfig{frob/K8-p-2}  = \tikzfig{frob/K8-p-3}  = \\ = \tikzfig{frob/K8-p-4}  = \tikzfig{frob/K8-p-5}  = \tikzfig{frob/K8-p-6}  = \tikzfig{frob/K8-p-7}  = \tikzfig{frob/K8-p-8}
	\mapsfrom \tikzfig{knot-generators-tikzit/K9-R}
\end{multline*}
and the other decorator rules hold trivially.

\begin{corollary}
	$\catname{KNOT}$ is sound for $\catname{AffLagRel}_{\mathbb{F}_2}$.
\end{corollary}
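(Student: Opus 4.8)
The plan is to observe that soundness of $\catname{KNOT}$ for $\catname{AffLagRel}_{\mathbb{F}_2}$ is precisely the assertion that the assignment $G$ just described is a well-defined functor: that $G$ sends each generator of $\catname{KNOT}$ to a fixed morphism of $\catname{AffLagRel}_{\mathbb{F}_2}$ (determined by $f$ together with the shifts assigned to the red decorators), and that every defining equation of $\catname{KNOT}$---the planar isotopy rules, the base rules, and the decorator rules of Table~\ref{table:KNOT}---is carried to a valid identity of affine Lagrangian relations. So the whole proof reduces to verifying this well-definedness, and the staged construction of $G$ splits the verification naturally into a decorator-free part and a decorator part.

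First I would dispose of the fragment $\widehat{\catname{KNOT}}$ containing no red circles or bars, on which $G$ restricts to $\widehat{G} = L \circ F$. By the preceding lemma, $f$ lifts to the functor $F\colon \widehat{\catname{KNOT}} \to \catname{AffRel}_{\mathbb{F}_2}$, so every decorator-free rule already holds in $\catname{AffRel}_{\mathbb{F}_2}$; moreover, since no affine shift is produced in the absence of red decorators, the image of $F$ actually lies in the linear subcategory $\catname{LinRel}_{\mathbb{F}_2}$, which is what makes the composite $L \circ F$ typecheck. Post-composing with the faithful strong symmetric monoidal doubling functor $L\colon \catname{LinRel}_{\mathbb{F}_2} \to \catname{LagRel}_{\mathbb{F}_2}$ then transports each of these identities intact into $\catname{LagRel}_{\mathbb{F}_2} \subseteq \catname{AffLagRel}_{\mathbb{F}_2}$, simply because a functor preserves equalities of morphisms. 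Hence all the planar isotopy and base rules are sound.

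It remains to check the decorator rules, where the red circles and bars are sent to the affine shifts fixed above. The hard part will be the interaction of these shifts with the junction and crossing generators, as encapsulated by the multi-step derivation already displayed for the copy-type rule: one must verify, inside $\catname{AffRel}_{\mathbb{F}_2}$, that a shift copied through a junction reassembles correctly, using the bialgebra-style relations between the shift generator and the (co)multiplications. The remaining decorator rules I would dispatch trivially, since an affine shift commutes past the relevant identity, swap, and cup/cap generators. Assembling the two parts shows that $G$ respects every rule of $\catname{KNOT}$ and is therefore a well-defined functor into $\catname{AffLagRel}_{\mathbb{F}_2}$, which is exactly the claimed soundness.
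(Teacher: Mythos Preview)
Your proposal is correct and matches the paper's approach exactly: the corollary is stated immediately after the paper verifies that $G$ is a well-defined functor (by the lemma lifting $f$ to $F$, the composition $\widehat G = L\circ F$, and the explicit check of the decorator rule~\eqref{eq:K7} with the remark that the other decorator rules hold trivially), and soundness is nothing other than this well-definedness. Your observation that the image of $F$ on $\widehat{\catname{KNOT}}$ lands in $\catname{LinRel}_{\mathbb{F}_2}$, so that $L\circ F$ typechecks, is a point the paper leaves implicit.
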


The objects and diagrams in the image of $G$ do not constitute a category: the morphisms in the image are not closed under composition. This lemma, however, allows us to think of the $\catname{KNOT}$ diagrams as pairs of linear relations over $\mathbb{F}_2$.

\begin{corollary}
	\label{soundness-homology}
	$\catname{KNOT}$ is sound for the transformations between the first homology groups of the input and output slices of the surface code computation.
\end{corollary}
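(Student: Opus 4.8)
The plan is to obtain this as a short corollary of the two immediately preceding results: the soundness of $\catname{KNOT}$ for $\catname{AffLagRel}_{\mathbb{F}_2}$ via the functor $G$, and the homological analysis establishing that the evolution of logical operators between two time-slices is an affine Lagrangian relation. The strategy is to identify the abstract target of $G$ with the concrete homological semantics, so that functoriality of $G$ transports equalities of $\catname{KNOT}$ diagrams to equalities of homological transformations.

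First I would recall the two ingredients. The preceding corollary gives a well-defined functor $G\colon \catname{KNOT} \to \catname{AffLagRel}_{\mathbb{F}_2}$. The Theorem and Corollary of the previous subsection identify the transformation between the first (co)homology data $H^1(C^\bullet;\mathbb{F}_2) \times H_2(C_\bullet;\mathbb{F}_2)$ of the input and output slices — together with undetectable errors — with an affine Lagrangian relation on the ambient symplectic space $V = H_1 \oplus H_1^*$. These facts already supply both halves of the correspondence; what remains is to check that they agree generator by generator.

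The key step is therefore to verify that, for each generator of $\catname{KNOT}$, the affine Lagrangian relation assigned by $G$ coincides with the homological action of the corresponding elementary braiding move. Concretely: the bifurcation maps must map under $G$ to the junction relations that copy one (co)homology class and identify the complementary string classes, exactly as described for the primal- and dual-type junctions; the opposite-type crossing generators must map to the $\gatename{CNOT}$ affine Lagrangian relation realising $X_L^{(\mathrm{smooth})} \mapsto X_L^{(\mathrm{smooth})} X_L^{(\mathrm{rough})}$ and $Z_L^{(\mathrm{rough})} \mapsto Z_L^{(\mathrm{smooth})} Z_L^{(\mathrm{rough})}$; and the red Pauli decorators must map to the affine shifts recording undetectable errors. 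Since each such assignment was fixed in the definition of $G$ to be precisely the relation appearing in the homological description, this reduces to a direct comparison of the two sets of assignments.

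Once the generators match, functoriality closes the argument: any $\catname{KNOT}$ diagram $d$ is a composite of generators, $G(d)$ is the corresponding composite of affine Lagrangian relations, and by the generator-wise agreement this composite is exactly the affine Lagrangian relation describing how the first homology groups of the initial slice evolve into those of the final slice. Soundness then follows immediately: if $d_1 = d_2$ in $\catname{KNOT}$ then $G(d_1) = G(d_2)$, so the two diagrams denote the same homological transformation. I expect the main obstacle to be precisely the generator-wise verification above — in particular, confirming that the copy/identify behaviour of junctions on first homology really is the affine Lagrangian relation that $G$ assigns to the bifurcation maps. Because the image of $G$ is not closed under composition, this homological content must be pinned down at the level of elementary moves before one may appeal to functoriality to handle arbitrary composites.
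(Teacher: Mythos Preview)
Your proposal is correct and follows essentially the same approach as the paper: the paper states this corollary without any explicit proof, treating it as an immediate consequence of the soundness of $\catname{KNOT}$ for $\catname{AffLagRel}_{\mathbb{F}_2}$ together with the preceding identification of the evolution of logical operators (the first homology data) with affine Lagrangian relations. Your generator-by-generator verification is more detailed than the paper provides, but it is exactly the content the paper implicitly relies on when defining $G$ to match the homological assignments and then invoking functoriality.
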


\begin{corollary}
	\label{soundness-maps}
	$\catname{KNOT}$ is sound for the logical maps implemented by the braided surface code error correction procedure up to classical byproducts.
\end{corollary}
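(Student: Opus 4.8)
The plan is to combine the soundness of $\catname{KNOT}$ for $\catname{AffLagRel}_{\mathbb{F}_2}$ (the corollary immediately preceding, obtained via the functor $G$ of Section~\ref{sec:soundness-afflagrel}) with the homological analysis of braiding, and then to read off the logical map directly from the resulting affine Lagrangian relation. Concretely, I would argue that the cartoon commutative diagram of the overview commutes: a physical error-correction procedure realises a braiding pattern up to byproducts ($F_{\text{implement}}$), whose logical effect is computed by tracking the first (co)homology of the lattice chain complex ($G_{\text{extended}}$), and the combinatorially-defined functor $G$ agrees with $G_{\text{extended}} \circ I$ on the image of $\catname{KNOT}$.

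First I would recall that, by the Theorem and Corollary of the previous subsection, the evolution of logical operators together with undetectable errors between two time-slices of any such procedure is an affine Lagrangian relation over $\mathbb{F}_2$: its linear part records how homology classes are copied or identified at junctions, while the affine part records the byproduct Paulis. By Corollary~\ref{soundness-homology}, $G$ already reproduces the linear transformations between the first homology groups, so the remaining task is to certify that these transformations are genuinely the logical maps, and that the affine shifts account precisely for the classical byproducts.

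The decisive identification is that a basis of $H^1(C^\bullet;\mathbb{F}_2)$ together with its Lefschetz dual is exactly a generating set of logical operators, and that an element of $\catname{AffLagRel}_{\mathbb{F}_2}$ on these spaces is, under the equivalence $\catname{AffLagRel}_{\mathbb{F}_2}\cong\catname{Spek}_2$, a stabiliser map up to an affine shift. Because the image of $G$ avoids the $\pm\frac{\pi}{2}$ phases, it lands in $\catname{Stab}_2\subseteq\catname{Mat}(\mathbb{C})$ modulo invertible scalars; a morphism of $\catname{Stab}_2$ is the logical Clifford map actually effected by the procedure, with the affine component supplying the heralded Pauli byproduct. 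Hence $G(d)$ is the logical map implemented by any physical realisation of the pattern $I(d)$, up to classical byproducts.

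The step I expect to be the main obstacle is verifying commutativity of the diagram on the generators of $\catname{KNOT}$: for each primitive braiding move one must compute the affine Lagrangian relation dictated by the homology—a junction copies the loop class and identifies the string classes, the designated crossings act as the stated \gatename{CNOT}s, and cups, caps, and swaps behave as the corresponding relations—and check that it coincides with the value assigned by $G$ generator-by-generator, while confirming that sequential and parallel composition of moves corresponds to relational composition and direct sum. Once this is in hand, soundness follows formally: diagrams equal in $\catname{KNOT}$ are sent by $G$ to equal affine Lagrangian relations, hence to the same logical map up to classical byproducts.
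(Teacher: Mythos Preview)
Your proposal is correct and follows the same approach as the paper: the statement is presented there as an immediate corollary of the soundness of $\catname{KNOT}$ for $\catname{AffLagRel}_{\mathbb{F}_2}$ via $G$, together with the preceding homological identification of logical operators with first (co)homology classes and the equivalence with $\catname{Spek}_2$/$\catname{Stab}_2$. The paper gives no further proof beyond stating the corollary, so your write-up is in fact more explicit than the original about the generator-by-generator verification and the role of the affine part as byproducts.
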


\section{Lemmas}
\label{appendix-lemmas}
\begin{lemma}
	\label{lemma:ring-around}
	\begin{equation}
		\tikzfig{lemmas/lemma1-l} = \tikzfig{lemmas/lemma-2-circs-5}
	\end{equation}
\end{lemma}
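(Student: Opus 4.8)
The plan is to derive this identity entirely inside $\catname{KNOT}$, using only the rewrites of Table~\ref{table:KNOT}. I would deliberately avoid routing the argument through the ZX semantics: the functor $Z$ of Proposition~\ref{prop:z-functor-zx} only transports $\catname{KNOT}$-equalities \emph{forward} into $\catname{ZX}$, so checking that the two sides have equal ZX-images does not by itself establish the $\catname{KNOT}$-equality (completeness is proved later only for $\catname{KNOT}_{\text{doubled}}$, not for single wires). The left-hand side is a single loop encircling the two parallel same-type wires, and the target is the corresponding pair of independent loops. The overall idea is to give the encircling loop an internal branch point, pinch it apart in the region between the two wires, and then peel each resulting lobe onto its own wire.

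First I would apply the spider rule~\eqref{eq:K-spider} to the encircling loop, splitting the single closed strand into two arcs joined at a bifurcation. Using the planar isotopy rules~\eqref{eq:K1}--\eqref{eq:K5} I would then slide this branch point into the channel between the two wires, presenting the loop as two lobes meeting at a spider, one lobe wrapping each wire. Because the loop and the wires it encircles are of opposite type, the sliding is governed by the decorator rules~\eqref{eq:K9}--\eqref{eq:K-circle-slide}, which let the strands pass the intervening wire without introducing braiding.

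The key step is to detach the two lobes. Here I would invoke the loop-commute rule~\eqref{eq:loop-D-commute} (with the inner wire playing the role of the diagram $D$) together with the loop-around-one-wire rule~\eqref{eq:loop-around-one-wire}, so that each lobe becomes a loop encircling exactly one wire. The joining spider is then two-legged and is erased by~\eqref{eq:spider-disappears}, and a final application of planar isotopy puts the result into the stated two-loop form.

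The hard part will be the bookkeeping of the crossing data rather than any conceptual subtlety. Throughout the pinch-and-slide I must keep every intermediate picture a legitimate $\catname{KNOT}$ diagram with consistent over/under information, and apply the opposite-type crossing equalities with the correct handedness so that no spurious half-twist is introduced when a lobe is dragged past the inner wire. Checking that the branch point truly commutes through the central channel, so that the two halves detach into genuinely independent loops rather than remaining linked, is the place where the derivation must be verified most carefully.
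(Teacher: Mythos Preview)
Your overall plan is the same as the paper's: introduce a bifurcation on the encircling strand via the spider rule~\eqref{eq:K-spider} (used in tandem with~\eqref{eq:spider-disappears}), then use~\eqref{eq:loop-around-one-wire} and~\eqref{eq:loop-D-commute} to peel the two lobes onto the individual wires, and finally absorb the leftover two-legged spider. So the skeleton is right.

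There is, however, one concrete mistake. In your middle paragraph you say the sliding ``is governed by the decorator rules~\eqref{eq:K9}--\eqref{eq:K-circle-slide}''. Those rules are about the \emph{red Pauli decorators} (the bars and small circles), not about opposite-colour wire loops; the diagram in this lemma contains no red decorators at all, so rules~\eqref{eq:K9}, \eqref{eq:K10}, \eqref{eq:K-bar-slide}, \eqref{eq:K-circle-slide} are simply inapplicable. More importantly, there is no $\catname{KNOT}$ rule that lets an opposite-type strand ``pass the intervening wire without introducing braiding'': an opposite-type crossing \emph{is} the CNOT generator. The paper never attempts such a slide; instead it repeatedly uses~\eqref{eq:K-spider} to reshape the bifurcated loop and~\eqref{eq:loop-D-commute} to commute a closed sub-loop past a portion of the diagram, with~\eqref{eq:loop-around-one-wire} handling the single-wire case. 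If you drop the decorator-rule step and instead interleave~\eqref{eq:K-spider} with~\eqref{eq:loop-D-commute} (as the paper does, applying~\eqref{eq:K-spider} several times rather than once), your argument goes through and coincides with the paper's.
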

\begin{proof}
	{\allowdisplaybreaks
		\begin{align*}
			\tikzfig{lemmas/lemma1-l} &
			\overset{\substack{\eqref{eq:K-spider} \\ \eqref{eq:spider-disappears}}}{=}
			\tikzfig{lemmas/lemma1-1}
			\overset{\eqref{eq:loop-around-one-wire}}{=} \\[10pt]
			\tikzfig{lemmas/lemma1-2} &
			\overset{\substack{\eqref{eq:loop-D-commute} \\ \eqref{eq:K-spider}}}{=}
			\tikzfig{lemmas/lemma1-3}
			\overset{\substack{\eqref{eq:K-spider} \\ \eqref{eq:loop-D-commute}}}{=}  \\[10pt]
			\tikzfig{lemmas/lemma1-5} &
			\overset{\substack{\eqref{eq:loop-around-one-wire} \\ \eqref{eq:K-spider}}}{=}
			\tikzfig{lemmas/lemma-2-circs-5}
		\end{align*}
	}
\end{proof}

\begin{lemma}
	\label{lemma:two-rings-around}
	\begin{equation}
		\tikzfig{lemmas/lemma-2-circs-1} = \tikzfig{lemmas/lemma-2-circs-5}
	\end{equation}
\end{lemma}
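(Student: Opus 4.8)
The plan is to reduce the two-loop configuration on the left-hand side to the single-loop configuration treated in Lemma~\ref{lemma:ring-around}, and then to quote that lemma to land on the shared right-hand side. Since Lemma~\ref{lemma:ring-around} already identifies the one-ring diagram with this right-hand side, the substance of the present lemma is to show that the second encircling loop is redundant: two concentric loops on a common wire contribute the same correlation surface as a single loop, so the two diagrams must agree.

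First I would peel the outer loop away from the rest of the diagram. Using the loop-commutation rule~\eqref{eq:loop-D-commute}, I slide the outer loop past the sub-diagram $D$ formed by the inner loop together with the underlying bifurcation and wires. This is legitimate precisely because $D$ contains no red decorators and its boundary wires share the colour of the encircling loops, which are exactly the side-conditions of~\eqref{eq:loop-D-commute} (with an admissible choice of $p,q \in \{0,1,2\}$). This manoeuvre brings the two loops into adjacent position on a common wire. I would then fuse them: merging their endpoints via the spider rule~\eqref{eq:K-spider} and discarding the resulting two-legged spider via~\eqref{eq:spider-disappears} collapses the pair of concentric loops to one, with~\eqref{eq:loop-around-one-wire} tidying the remaining loop if necessary. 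At this point the diagram is exactly the left-hand side of Lemma~\ref{lemma:ring-around}, so a single application of that lemma delivers the desired right-hand side and closes the chain.

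The main obstacle is justifying the commutation step: I must verify that the inner loop together with its surrounding wires genuinely constitutes an admissible body $D$ for~\eqref{eq:loop-D-commute}, i.e.\ that it is free of red decorators, that its input and output wires match the loop colour, and that the relevant multiplicities stay within $\{0,1,2\}$. Once the outer loop can be slid through legitimately, the subsequent fusion and the appeal to Lemma~\ref{lemma:ring-around} are routine, re-using exactly the rule set already deployed in that lemma's proof. An alternative route, avoiding an explicit invocation of Lemma~\ref{lemma:ring-around}, would be to run the same sequence of spider and loop rewrites (\eqref{eq:K-spider}, \eqref{eq:spider-disappears}, \eqref{eq:loop-around-one-wire}, \eqref{eq:loop-D-commute}) directly on the two-ring diagram, mirroring step by step the derivation used for the single-ring case; I expect the two approaches to coincide after the first loop is eliminated.
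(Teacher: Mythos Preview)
Your overall plan—reduce two rings to one and then quote Lemma~\ref{lemma:ring-around}—is reasonable, but the mechanism you give for eliminating the extra ring does not go through. The two encircling loops are disjoint closed components built from cups, caps and crossings; they carry no bifurcation node, share no wire, and have no ``endpoints'' for~\eqref{eq:K-spider} to act on. The spider rules~\eqref{eq:K-spider} and~\eqref{eq:spider-disappears} manipulate the trivalent bifurcation generators on the \emph{encircled} wires, not the loops themselves, so ``fusing'' two separate closed loops into one is not a move the calculus provides. If what you actually intend is to slide the outer loop onto a bare single wire and delete it with~\eqref{eq:loop-around-one-wire}, say so and drop the fusion language; but that still leaves the colour hypothesis of~\eqref{eq:loop-D-commute}—which you rightly flag as the crux—undischarged for the subdiagram $D$ you describe.

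By contrast, the paper never attempts to merge the two loops. Its proof invokes Lemma~\ref{lemma:ring-around} \emph{four} times, in both directions, interleaved with a single use of~\eqref{eq:loop-D-commute} and one of~\eqref{eq:K5}: the single-ring lemma is used as a bidirectional rewriting tool to reshuffle the diagram, not as a one-shot closing step. Your alternative suggestion at the end—mirroring the single-ring derivation directly—is closer in spirit to what is actually needed, but you would still have to work out where the additional applications of Lemma~\ref{lemma:ring-around} enter.
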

\begin{proof}
	{\allowdisplaybreaks
		\begin{align*}
			\tikzfig{lemmas/lemma-2-circs-1} &
			\overset{\substack{\eqref{lemma:ring-around} \\ \eqref{lemma:ring-around}}}{=}
			\tikzfig{lemmas/lemma-2-circs-2}
			\overset{\eqref{lemma:ring-around}}{=} \\[10pt]
			\tikzfig{lemmas/lemma-2-circs-3} &
			\overset{\substack{\eqref{eq:loop-D-commute} \\ \eqref{eq:K5}}}{=}
			\tikzfig{lemmas/lemma-2-circs-4}
			\overset{\eqref{lemma:ring-around}}{=}
			\tikzfig{lemmas/lemma-2-circs-5}
		\end{align*}
	}
\end{proof}

\begin{lemma}
	\label{lemma-duality}
	\begin{equation}
		\tikzfig{lemmas/lemma2-l} = \tikzfig{lemmas/lemma2-r-b}
	\end{equation}
\end{lemma}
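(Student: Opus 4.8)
The plan is to prove Lemma~\ref{lemma-duality} by exhibiting an explicit chain of $\catname{KNOT}$ rewrites carrying the ribbon-like left-hand side to the tangle-like right-hand side, in the same style as the proofs of Lemma~\ref{lemma:ring-around} and Lemma~\ref{lemma:two-rings-around} on which it will rely. The two sides differ in how the encircling (loop) structure of one defect is routed relative to the bifurcation of the other: on the LHS the loops run ``along'' the ribbon, whereas on the RHS they are gathered into a tangle passing through a single crossing. The strategy is therefore to expose the loops, slide them freely through the remainder of the diagram, and then reassemble them on the opposite side.

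First I would use the spider rules \eqref{eq:K-spider} and \eqref{eq:spider-disappears} to normalise the bifurcation maps, splitting and merging them so that each encircling loop sits on a single wire of uniform colour. Next I would apply the loop-commutation rule \eqref{eq:loop-D-commute} to transport each loop across the intervening subdiagram $D$; at each application I must check that $D$ contains no red decorators and that its input and output wires match the colour of the loop, which is guaranteed because the generators of the doubled diagrams place same-type crossings and bifurcations only between wires of a single colour. Having relocated the loops, I would invoke \eqref{lemma:ring-around} to duplicate or coalesce them as the target tangle requires, and \eqref{lemma:two-rings-around} wherever two loops must be fused into one.

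The remaining work is purely planar: using the isotopy crossing rules \eqref{eq:K1}--\eqref{eq:K12} together with \eqref{eq:overcross-kill} and \eqref{eq:K5}, I would pull the crossings into the configuration of the tangle-like side, absorbing any spurious over-crossings that arise. A final application of \eqref{eq:K-spider} and \eqref{eq:spider-disappears} reassembles the bifurcations into the right-hand side, completing the chain.

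The main obstacle will be the repeated use of \eqref{eq:loop-D-commute}: each invocation carries a side-condition on the colour and decorator-freeness of the subdiagram through which the loop passes, and the delicate part is ordering the rewrites so that this hypothesis holds at every step rather than only at the endpoints. Tracking the over/under crossing data through the planar-isotopy moves, so that no crossing is inadvertently switched, is the second source of bookkeeping difficulty; I expect this to remain manageable because same-type wires do not interact, so only the opposite-type crossings carry nontrivial braiding information and need careful attention.
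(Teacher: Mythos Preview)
Your overall strategy---use \eqref{eq:loop-D-commute} to shuttle loop structure across the diagram, then tidy up with spider and isotopy rules---is sound and would likely succeed, but it is considerably heavier than what the paper actually does. The paper's proof is only four rewrites and uses nothing beyond the basic $\catname{KNOT}$ axioms: it opens with \eqref{eq:loop-around-one-wire} together with \eqref{eq:K5}, applies \eqref{eq:loop-D-commute} once, rearranges via \eqref{eq:K-spider}, and closes with \eqref{eq:loop-around-one-wire} again. In particular, it does \emph{not} invoke Lemma~\ref{lemma:ring-around} or Lemma~\ref{lemma:two-rings-around} at all, so your plan to rely on those is unnecessary detour.

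The main thing you are missing is the direct role of rule \eqref{eq:loop-around-one-wire} (K9). You never mention it, yet it is the bookending move in the paper's argument: it lets you introduce or absorb a loop around a single wire in one step, which is exactly what converts between the ribbon-like and tangle-like encircling structures without first having to break everything down via \eqref{eq:K-spider}/\eqref{eq:spider-disappears} and then reassemble. Your concern about repeatedly checking the side-condition on \eqref{eq:loop-D-commute} is well placed, but in the paper's proof that rule is applied only once, so the bookkeeping you anticipate largely evaporates. Your route is not wrong, just longer; the shortcut is to reach for \eqref{eq:loop-around-one-wire} first.
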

\begin{proof}
	{\allowdisplaybreaks
		\begin{align*}
			\tikzfig{lemmas/lemma2-l} &
			\overset{\substack{\eqref{eq:loop-around-one-wire} \\ \eqref{eq:K5}}}{=}                    \\[10pt]
			\tikzfig{lemmas/lemma2-1} &
			\overset{\eqref{eq:loop-D-commute}}{=}                             \\[10pt]
			\tikzfig{lemmas/lemma2-2} &
			\overset{\eqref{eq:K-spider}}{=}                             \\[10pt]
			\tikzfig{lemmas/lemma2-4} &
			\overset{\eqref{eq:loop-around-one-wire}}{=}
			\tikzfig{lemmas/lemma2-r-b}
		\end{align*}
	}
\end{proof}

\begin{lemma}
	\label{lemma:complex-rule}
	\tikzfig{knot-generators-tikzit/K13-L} = \tikzfig{knot-generators-tikzit/K13-R}
\end{lemma}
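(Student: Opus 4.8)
The plan is to derive this identity by an explicit chain of rewrites inside $\catname{KNOT}$, in exactly the style of Lemmas~\ref{lemma:ring-around}, \ref{lemma:two-rings-around}, and \ref{lemma-duality}. Because $\catname{KNOT}$ itself is only claimed to be sound (not complete) for any external semantics, I cannot short-circuit the argument by translating to ZX and appealing to completeness; the work has to be done with the rules of Table~\ref{table:KNOT} together with the three lemmas already established. As a preliminary consistency check I would nevertheless push both $\tikzfig{knot-generators-tikzit/K13-L}$ and $\tikzfig{knot-generators-tikzit/K13-R}$ through the sound functor $Z$ of Proposition~\ref{prop:z-functor-zx} and confirm that the two ZX-diagrams coincide after spider fusion. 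This does not on its own prove the $\catname{KNOT}$ identity, but it tells me which spider and Hopf-type collapses the diagrammatic derivation must reproduce, and it guards against a crossing or decorator error in the target statement.

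First I would use the planar-isotopy rules --- principally \eqref{eq:K1} and \eqref{eq:K2} --- to straighten the opposite-type crossings of $\tikzfig{knot-generators-tikzit/K13-L}$ into a canonical layout, keeping in mind the convention (recorded beneath Table~\ref{table:KNOT}) that same-type wires do not interact and may be slid past one another freely. Next I would apply the loop-commutation rule \eqref{eq:loop-D-commute} to drag any encircling loop through the intervening decorator-free subdiagram, taking the subdiagram $D$ to be precisely that fragment so that the side conditions $p,q\in\{0,1,2\}$ and matching loop/wire colour are met. At this point I expect the bifurcation maps to merge through the spider rules \eqref{eq:K-spider} and \eqref{eq:spider-disappears}, collapsing the interior and exposing the loop structure that the earlier lemmas are designed to simplify.

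With the loops isolated, I would invoke Lemma~\ref{lemma:ring-around} to remove a single encircling ring, and Lemma~\ref{lemma:two-rings-around} wherever a doubled ring appears; Lemma~\ref{lemma-duality} then lets me trade a ribbon-like fragment for its tangle-like counterpart as required to match the right-hand side. Any red decorators are dealt with last: using the copy rule \eqref{eq:K7} together with the bar- and circle-spider rules \eqref{eq:K-bar-spider} and \eqref{eq:K-circle-spider}, I would slide them toward the boundary and normalise their placement, which is exactly the placement up to which $\catname{KNOT}_{\text{doubled}}$ is quotiented. The diagram that remains should be $\tikzfig{knot-generators-tikzit/K13-R}$.

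The hard part will be the bookkeeping around the loop-commutation step. Rule \eqref{eq:loop-D-commute} fires only when the enclosed subdiagram carries no red decorators and the loop colour agrees with the wires, so I anticipate needing an auxiliary pre-processing pass --- pushing decorators out of $D$ via \eqref{eq:K7} and \eqref{eq:K-bar-slide} --- before the loop can be moved at all. Equally delicate is tracking the over/under crossing data through every isotopy move, so that no forbidden same-type interaction is silently introduced and the intermediate diagrams (which need not themselves lie in $\catname{KNOT}_{\text{doubled}}$) remain well-formed $\catname{KNOT}$ morphisms. The $Z$-image computation from the first paragraph is what gives me confidence that the sequence of crossing choices is globally consistent and terminates at the claimed right-hand side.
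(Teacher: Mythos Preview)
Your plan is essentially the paper's approach: the actual derivation applies Lemma~\ref{lemma:ring-around}, then Lemma~\ref{lemma-duality} together with \eqref{eq:loop-D-commute} and \eqref{eq:K5}, then \eqref{eq:loop-D-commute} and \eqref{eq:K5} once more, and is done. The diagrams involved carry no red decorators, so Lemma~\ref{lemma:two-rings-around}, the direct spider rules \eqref{eq:K-spider} and \eqref{eq:spider-disappears}, and your entire decorator-handling pre-processing pass are unnecessary; the side condition of \eqref{eq:loop-D-commute} is met automatically.
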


\begin{proof}
	{\allowdisplaybreaks
		\begin{align*}
			\tikzfig{lemmas/lemma-4-l} &
			\overset{\eqref{lemma:ring-around}}{=}
			\tikzfig{lemmas/lemma-4-1}
			\overset{\substack{\eqref{lemma-duality} \\ \eqref{eq:loop-D-commute} \\ \eqref{eq:K5}}}{=} \\[10pt]
			\tikzfig{lemmas/lemma-4-2} &
			\overset{\substack{\eqref{eq:loop-D-commute}  \\ \eqref{eq:K5}}}{=}
			\tikzfig{lemmas/lemma-4-3}
			= \\[10pt]
			\tikzfig{lemmas/lemma-4-r}
		\end{align*}
	}
\end{proof}

\begin{lemma}\label{lemma:green-spider-zx-knot}
	\begin{equation}
		\tikzfig{zx-knot/spider1} = \tikzfig{zx-knot/spider1-r} \label{green-spider-zx-knot} \end{equation}
\end{lemma}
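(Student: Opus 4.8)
The plan is to prove \eqref{green-spider-zx-knot} by a short calculation entirely within the ($0$, $\pi$)-fragment of ZX-calculus, exactly along the lines already sketched for the analogous identity \eqref{spider-1-knot} and its ZX translation. Both sides of \eqref{green-spider-zx-knot} are ZX-diagrams: the left-hand side \texttt{zx-knot/spider1} is the image under the functor $Z$ (Proposition~\ref{prop:z-functor-zx}) of the doubled `spider' generator of $\catname{KNOT}_{\text{doubled}}$, while the right-hand side is a single green spider. So the first step is to expand the left-hand side generator-by-generator using the semantics of Table~\ref{table:KNOT-ZX-semantics}: each bifurcation map becomes a green merge and each cup/cap/identity contributes the corresponding ZX structure, so that \texttt{zx-knot/spider1} is a connected network of green spiders joined through the doubled wires, possibly decorated with red spiders coming from the inner loops.

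The second, and essentially only substantive, step is to apply the spider fusion rule of the ($0$, $\pi$)-fragment repeatedly to merge all adjacent green spiders into one, and to use the convention that a phaseless two-legged spider is a bare identity wire in order to delete the internal wires introduced by the doubling. Wherever a green and a red spider meet along a single wire, the junction is resolved either by the Hopf rule or by absorption of a phaseless red spider, precisely the bookkeeping already used in the proof of Proposition~\ref{prop:z-functor-zx} for rule~\eqref{eq:loop-D-commute}. After fusion the surviving diagram has exactly the leg structure of the right-hand side of \eqref{green-spider-zx-knot}, namely a single green spider, which establishes the ZX identity; the corresponding equality in $\catname{KNOT}_{\text{doubled}}$ then follows by soundness of $Z$.

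The hard part will not be any individual rewrite but the connectivity bookkeeping across the doubling. One must verify that the cups, caps, and crossings internal to the doubled structure link the green components into a \emph{single} spider rather than two disconnected ones or a spider carrying spurious extra legs, and that the red inner-loop decorators can be slid off or absorbed without obstruction. Concretely, I would fix an explicit left-to-right ordering of the wires, translate each generator in that order, and track the green subnetwork as a graph, checking that it is connected and that every internal degree-two vertex is phaseless. Once connectivity is confirmed, spider fusion closes the argument in one application, and the same template (translate via $Z$, then fuse) will dispatch the remaining identities of Proposition~\ref{prop:zx-analogue-knot} as well.
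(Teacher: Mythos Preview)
Your plan has a direction-of-implication error that makes it circular. The two sides of \eqref{green-spider-zx-knot} are diagrams in $\catname{KNOT}_{\text{doubled}}$, not ZX diagrams: the \texttt{zx-knot/} figures are $\catname{KNOT}$ pictures (compare \eqref{spider-1-knot}), while their ZX translations carry the \texttt{zx-knot-spider/} prefix. The lemma therefore asserts an equality that must be derived \emph{inside} $\catname{KNOT}$ from the rules~(K1)--(K22). Translating via $Z$, performing spider fusion in ZX, and then invoking ``soundness of $Z$'' does not close the loop: soundness (Proposition~\ref{prop:z-functor-zx}) only tells you that equalities already proven in $\catname{KNOT}$ survive under $Z$, not the converse. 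Pulling a ZX equality back to $\catname{KNOT}_{\text{doubled}}$ would require faithfulness or completeness of $Z$ on that subcategory, which is precisely what Lemma~\ref{lemma:green-spider-zx-knot} and its companions are being used to establish (see Proposition~\ref{prop:zx-analogue-knot} and Lemma~\ref{lemma:nf-braided}). So your argument assumes what the lemma is meant to help prove.

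The paper instead gives a short derivation entirely within $\catname{KNOT}$: one application of the ring-around Lemma~\ref{lemma:ring-around} to collapse the inner loop, followed by the loop-commutation rule~\eqref{eq:loop-D-commute} together with the planar isotopy rule~\eqref{eq:K5} to slide and absorb the remaining structure. The ``connectivity bookkeeping'' you anticipate is handled not by ZX spider fusion but by the $\catname{KNOT}$ spider rule~\eqref{eq:K-spider} and rule~\eqref{eq:loop-around-one-wire}, which are already packaged into Lemma~\ref{lemma:ring-around}. If you redo the calculation at the level of $\catname{KNOT}$ generators using those tools, the proof goes through in two or three steps.
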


\begin{proof}
	{\allowdisplaybreaks
	\begin{align*}
		\tikzfig{zx-knot/proofs/spider-1/1} &
		\overset{\eqref{lemma:ring-around}}{=}
		\tikzfig{zx-knot/proofs/spider-1/2}
		\overset{\substack{\eqref{eq:loop-D-commute} \\ \eqref{eq:K5}}}{=} \\[10pt]
		\tikzfig{zx-knot/proofs/spider-1/4} &
		=
		\tikzfig{zx-knot/proofs/spider-1/5}
	\end{align*}
	}
\end{proof}

\begin{lemma} \label{lemma:red-spider-zx-knot}
	\begin{equation}
		\tikzfig{zx-knot/spider2} = \tikzfig{zx-knot/spider2-r} \label{red-spider-zx-knot} \end{equation}
\end{lemma}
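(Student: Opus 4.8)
The statement is precisely the colour-dual of Lemma~\ref{lemma:green-spider-zx-knot}: whereas eq.~\eqref{green-spider-zx-knot} establishes the spider rule for one species of doubled junction (yielding the green ZX spider), eq.~\eqref{red-spider-zx-knot} establishes the corresponding rule for the opposite species (yielding the red ZX spider). The plan is therefore to mirror the derivation of Lemma~\ref{lemma:green-spider-zx-knot} step for step, interchanging the roles of the two wire colours (primal/black and dual/grey) throughout. Since the soundness of $\catname{KNOT}$ for $\catname{ZX}$ has already been established via the functor $Z$ (Proposition~\ref{prop:z-functor-zx}), it suffices to exhibit the rewrite inside $\catname{KNOT}$ itself.

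Concretely, I would first apply lemma:ring-around, in its form for the colour opposite to that used in Lemma~\ref{lemma:green-spider-zx-knot}, to detach the encircling loop from the doubled junction. I would then commute the resulting loop through the junction using eq.~\eqref{eq:loop-D-commute}, taking the subdiagram $D$ to be the (red-decorator-free) junction, and immediately absorb the freed loop with eq.~\eqref{eq:K5}. A final planar rearrangement then identifies the diagram with the single doubled junction on the right-hand side of eq.~\eqref{red-spider-zx-knot}. Each of these three moves is the exact colour-conjugate of the corresponding move in the proof of Lemma~\ref{lemma:green-spider-zx-knot}, so the derivation transfers line by line.

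The only point requiring genuine care is the legitimacy of invoking the \emph{dual} instances of these rules. This is licensed because the planar-isotopy and base rules of $\catname{KNOT}$ collected in Table~\ref{table:KNOT} are symmetric under interchange of the two wire colours: eq.~\eqref{eq:loop-D-commute} is explicitly stated to hold for loops of either colour (with $D$ ranging over any red-decorator-free diagram whose wires match the loop colour), while lemma:ring-around is itself derived purely from the colour-generic rules \eqref{eq:K-spider}, \eqref{eq:spider-disappears}, \eqref{eq:loop-around-one-wire}, and \eqref{eq:loop-D-commute}, so its colour-swapped instance holds by the identical argument. Consequently the main obstacle is not any new computation but merely the bookkeeping of verifying that every intermediate diagram remains well-formed in $\catname{KNOT}$ and that each rewrite step admits the required dual instance; once this symmetry is acknowledged the proof is a transcription of the green-spider case with the two colours exchanged.
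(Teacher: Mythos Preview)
Your proposal is correct and matches the paper's proof essentially step for step: the paper applies Lemma~\ref{lemma:ring-around}, then \eqref{eq:loop-D-commute} together with \eqref{eq:K5}, followed by a final planar rearrangement---precisely the sequence you outline, and in the same order as its proof of Lemma~\ref{lemma:green-spider-zx-knot}. Your additional remarks about the colour-symmetry of the invoked rules are a reasonable piece of bookkeeping that the paper leaves implicit.
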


\begin{proof}
	{\allowdisplaybreaks
	\begin{align*}
		\tikzfig{zx-knot/proofs/spider-2/1} &
		\overset{\eqref{lemma:ring-around}}{=}
		\tikzfig{zx-knot/proofs/spider-2/2}
		\overset{\substack{\eqref{eq:loop-D-commute} \\ \eqref{eq:K5}}}{=} \\[10pt]
		\tikzfig{zx-knot/proofs/spider-2/3} &
		=
		\tikzfig{zx-knot/proofs/spider-2/4}
	\end{align*}
	}
\end{proof}

\begin{lemma} \label{lemma:red-spider-zx-knot-state}
	\begin{equation}
		\tikzfig{zx-knot/spider3} = \tikzfig{zx-knot/spider3-r} \label{red-spider-zx-knot-state} \end{equation}
\end{lemma}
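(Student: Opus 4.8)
The plan is to prove this lemma by following essentially verbatim the template already used for Lemmas~\ref{lemma:green-spider-zx-knot} and \ref{lemma:red-spider-zx-knot}, which establish the analogous correspondences for the general green and red spiders. The left-hand side is a $\catname{KNOT}_{\text{doubled}}$ diagram assembled from the tangle-like and ribbon-like generators, and the target is to rewrite it into the canonical form on the right-hand side using only the rules of $\catname{KNOT}$ (passing, if convenient, through intermediate diagrams that need not themselves lie in $\catname{KNOT}_{\text{doubled}}$, as permitted by the definition of the subcategory). Since this is the \emph{state} variant of the red-spider correspondence, the input boundary is an empty bit-string (a grey or black dot), so the diagram has only output legs; this is the sole structural difference from Lemma~\ref{lemma:red-spider-zx-knot}.

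Concretely, I would open with an application of Lemma~\ref{lemma:ring-around} to consolidate and reposition the encircling loops around the output wires, exactly as in the first step of the two preceding proofs. I would then invoke the loop-commute rule~\eqref{eq:loop-D-commute} to slide the resulting loop through the body of the diagram, used in tandem with rule~\eqref{eq:K5} to resolve the spider interaction at the bifurcation point. Because there are no input wires in the state case, this commutation is strictly simpler than in the map case: the loop has nothing to pass through on the domain side, so it either collapses immediately or propagates unobstructed to the outputs. A final (essentially trivial) equality then identifies the simplified diagram with the right-hand side, closing the argument.

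The one point requiring genuine care—rather than routine bookkeeping—is the degenerate boundary condition when applying~\eqref{eq:loop-D-commute}. That rule is stated for a subdiagram $D$ carrying input and output wires matching the colour of the loops; here the domain is empty, so I must check that the $p=0$ (or $q=0$) specialisation of the rule applies cleanly and reduces to a trivial identity on the empty side, rather than producing a spurious obstruction. Given that the side condition in Table~\ref{table:KNOT} explicitly singles out the all-zero case as trivial, I expect this to pose no real difficulty, and the proof should collapse to the same two-line rewrite chain as its predecessors.

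\begin{proof}
	{\allowdisplaybreaks
	\begin{align*}
		\tikzfig{zx-knot/spider3} &
		\overset{\eqref{lemma:ring-around}}{=}
		\tikzfig{zx-knot/spider3}
		\overset{\substack{\eqref{eq:loop-D-commute} \\ \eqref{eq:K5}}}{=}
		\tikzfig{zx-knot/spider3-r}
	\end{align*}
	}
\end{proof}
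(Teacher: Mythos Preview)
Your written proof chain is vacuous: the first step reads $\tikzfig{zx-knot/spider3} \overset{\eqref{lemma:ring-around}}{=} \tikzfig{zx-knot/spider3}$, i.e.\ the same diagram on both sides, and the second step jumps directly to the target with no intermediate diagram displayed. As written, nothing has been shown.

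More substantively, the plan itself does not match what the paper does, and the mismatch is not cosmetic. For the \emph{map} versions (Lemmas~\ref{lemma:green-spider-zx-knot} and~\ref{lemma:red-spider-zx-knot}) the argument really is ``apply Lemma~\ref{lemma:ring-around}, then \eqref{eq:loop-D-commute} and \eqref{eq:K5}''. But for the \emph{state} version the paper does \emph{not} use Lemma~\ref{lemma:ring-around} or \eqref{eq:loop-D-commute} at all. Instead it alternates \eqref{eq:loop-around-one-wire} with the pair \eqref{eq:overcross-kill}~+~\eqref{eq:K5}, in two passes. The reason is exactly the structural difference you identified but did not exploit: the domain is an empty bit-string, so the relevant wires terminate in dots. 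A crossing in which one strand ends in a dot is eliminated by \eqref{eq:overcross-kill}, and a loop encircling a single wire is removed by \eqref{eq:loop-around-one-wire}; these dot-specific rules are the right tools, and the paper uses this lemma as the template for Lemmas~\ref{lemma:red-spider-copy-thr-green-zx-knot}--\ref{lemma:green-spider-zx-knot-state} for the same reason.

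Your attempt to salvage the map-version template by invoking the $p=0$ (or $q=0$) case of \eqref{eq:loop-D-commute} does not obviously go through: that rule commutes loops past a subdiagram $D$, but with dots in place of input wires there is no loop of the required shape to commute in the first place---the would-be loop strands terminate rather than close up. So the ``degenerate boundary'' case you flag is not merely a bookkeeping check; it is where your proposed argument breaks down and a different set of rules is needed.
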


\begin{proof}
	{\allowdisplaybreaks
		\begin{align*}
			\tikzfig{zx-knot/proofs/spider-3/1} &
			\overset{\eqref{eq:loop-around-one-wire}}{=}
			\tikzfig{zx-knot/proofs/spider-3/2}
			\overset{\substack{\eqref{eq:overcross-kill} \\ \eqref{eq:K5}}}{=} \\[10pt]
			\tikzfig{zx-knot/proofs/spider-3/3} &
			\overset{\eqref{eq:loop-around-one-wire}}{=}
			\tikzfig{zx-knot/proofs/spider-3/5}			\overset{\substack{\eqref{eq:overcross-kill} \\ \eqref{eq:K5}}}{=}
			\tikzfig{zx-knot/proofs/spider-3/6}
		\end{align*}
	}
\end{proof}

\begin{lemma} \label{lemma:red-spider-copy-thr-green-zx-knot}
	\begin{equation}
		\tikzfig{zx-knot/spider4} = \tikzfig{zx-knot/spider4-r} \label{red-spider-copy-thr-green-zx-knot} \end{equation}
\end{lemma}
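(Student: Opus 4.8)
The plan is to prove this identity entirely within $\catname{KNOT}$, following the same template established in the proofs of the preceding Lemmas~\ref{lemma:green-spider-zx-knot}, \ref{lemma:red-spider-zx-knot}, and \ref{lemma:red-spider-zx-knot-state}. The left-hand tangle-like diagram is a doubled red-spider whose characteristic feature is a red decorator (equivalently a loop) threaded past a grey branching; ``copying it through the green spider'' amounts to showing that this loop may be slid along the branching and duplicated onto each outgoing wire-pair. As in those earlier lemmas, the principal tools will be the loop-commutation rule~\eqref{eq:loop-D-commute} together with the Hopf-type cancellation~\eqref{eq:K5}, supplemented by the ring-around identity of Lemma~\ref{lemma:ring-around} to reshape loops into a form where~\eqref{eq:loop-D-commute} is directly applicable. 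I expect the resulting derivation to be short, on the order of three or four rewrite steps.

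Concretely, I would first rewrite the encircling loop(s) in the left-hand diagram using~\eqref{lemma:ring-around}, and where only a single wire is enclosed the base rule~\eqref{eq:loop-around-one-wire}, so that each loop is presented as a subdiagram to which~\eqref{eq:loop-D-commute} can fire. Next I would take the branching portion as the subdiagram $D$ in~\eqref{eq:loop-D-commute}: since this piece contains no red decorators and its wires match the colour of the loop, the hypotheses of the rule are met and the loop commutes past the branching. The key point is that commuting a loop past a single bifurcation replaces it by loops on each outgoing branch, which is exactly the copying behaviour we are after. Finally I would clean up any residual doubled loops or crossings using~\eqref{eq:K5} and the spider rules~\eqref{eq:K-spider} and~\eqref{eq:spider-disappears}, landing on the right-hand diagram.

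The main obstacle I anticipate is bookkeeping the hypotheses of~\eqref{eq:loop-D-commute} at each application: the rule only applies when the enclosed subdiagram $D$ is free of red decorators, its external wires match the loop colour, and $p,q\in\{0,1,2\}$. In a copy identity the loop must be pushed past a bifurcation and then continue through the trailing wires, so I would need to verify that at the intermediate stage---which may temporarily leave $\catname{KNOT}_{\text{doubled}}$, as the definition explicitly permits---the subdiagram still satisfies these colour and decorator constraints. A secondary subtlety is distributing the red decorator correctly: by the bar-slide rule~\eqref{eq:K-bar-slide} and the bar-spider rule~\eqref{eq:K-bar-spider} the exact placement of red bars on the copied loops is negotiable, so the final diagram need only match the right-hand side up to the quotient by red-decorator placement that is already built into the definition of $\catname{KNOT}_{\text{doubled}}$.
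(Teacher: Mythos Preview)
Your plan follows the template of Lemmas~\ref{lemma:green-spider-zx-knot} and~\ref{lemma:red-spider-zx-knot} (ring-around plus~\eqref{eq:loop-D-commute}), but the paper instead follows the template of Lemma~\ref{lemma:red-spider-zx-knot-state}: it cites only~\eqref{eq:loop-around-one-wire},~\eqref{eq:overcross-kill}, and~\eqref{eq:K5}. The structural reason is that this is a \emph{copy} identity, not a spider-fusion identity: the left-hand side features a state (a wire terminating in a dot) meeting a branching, and the relevant primitive move is the overcross-kill rule~\eqref{eq:overcross-kill}, which says a terminating dot annihilates an overcrossing. Iterating~\eqref{eq:loop-around-one-wire} followed by~\eqref{eq:overcross-kill} and~\eqref{eq:K5} pushes the dot through each strand of the branching directly, with no need to invoke Lemma~\ref{lemma:ring-around} or the general commutation rule~\eqref{eq:loop-D-commute}.

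Your route via~\eqref{eq:loop-D-commute} is not obviously wrong---the rule is broad enough that a derivation along your lines could likely be made to go through---but it is a detour. You never mention~\eqref{eq:overcross-kill}, which is precisely the tool tailored to diagrams with terminating dots; by contrast, the loop-commutation machinery is designed for situations where a closed loop must pass a nontrivial subdiagram, and here there is no such obstruction because the dot simply kills each crossing it meets. The paper's proof is correspondingly shorter and avoids the bookkeeping you yourself flag (verifying the side-conditions on $D$ in~\eqref{eq:loop-D-commute} at each step).
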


\begin{proof}
	Follows from eq. \eqref{eq:loop-around-one-wire}, eq. \eqref{eq:overcross-kill}, and eq. \eqref{eq:K5}  similarly to \ref{lemma:red-spider-zx-knot-state}.
\end{proof}

\begin{lemma} \label{lemma:green-spider-copy-thr-red-zx-knot}
	\begin{equation}
	\tikzfig{zx-knot/spider5} = \tikzfig{zx-knot/spider5-r} \label{green-spider-copy-thr-red-zx-knot} \end{equation}
\end{lemma}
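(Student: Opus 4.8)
The plan is to prove~\eqref{green-spider-copy-thr-red-zx-knot} by mirroring the diagrammatic argument used for its colour-dual counterpart, Lemma~\ref{lemma:red-spider-copy-thr-green-zx-knot}, which in turn follows the strategy of Lemma~\ref{lemma:red-spider-zx-knot-state}. Because the doubled construction treats the two defect species symmetrically once we work with the ribbon-like and tangle-like structures, the whole derivation should carry over after interchanging the black and grey wires and swapping the green and red decorators. The three rules that do the real work are the loop-around-a-wire rule~\eqref{eq:loop-around-one-wire}, the overcrossing-annihilation rule~\eqref{eq:overcross-kill}, and the loop-absorption rule~\eqref{eq:K5}.

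Concretely, starting from the left-hand diagram $\tikzfig{zx-knot/spider5}$, I would first apply~\eqref{eq:loop-around-one-wire} to push the encircling loop of the green structure through the opposite-type crossing so that it wraps a single wire on one output leg. The over-crossing produced by the braiding is then removed by~\eqref{eq:overcross-kill}, and the leftover loop is absorbed using~\eqref{eq:K5}. Repeating this same pair of moves on the second output leg duplicates the decorator, so that the single green structure on the input is replicated across both outputs, giving the tangle-like right-hand side $\tikzfig{zx-knot/spider5-r}$. This is precisely the sequence ``\eqref{eq:loop-around-one-wire}, then \eqref{eq:overcross-kill} together with \eqref{eq:K5}, applied twice'' that proves Lemma~\ref{lemma:red-spider-zx-knot-state}.

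The main obstacle I anticipate is colour bookkeeping rather than any new diagrammatic idea. In $\catname{KNOT}$ the two opposite-type crossings are not \emph{a priori} interchangeable, and~\eqref{eq:overcross-kill} as stated applies to one orientation; the delicate point is to confirm that the loop introduced by~\eqref{eq:loop-around-one-wire} lands on a wire of the colour for which~\eqref{eq:overcross-kill} can fire. Should the naive colour swap fail to hit an available instance of these rules, the remedy is to first pass between the ribbon-like and tangle-like presentations using the duality rewrite of Lemma~\ref{lemma-duality}, after which the same three-rule sequence applies and closes the proof.
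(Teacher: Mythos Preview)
Your proposal is correct and matches the paper's own proof, which simply states that the result follows from \eqref{eq:loop-around-one-wire}, \eqref{eq:overcross-kill}, and \eqref{eq:K5} in the same manner as Lemma~\ref{lemma:red-spider-zx-knot-state}. Your colour-bookkeeping caveat and the fallback via Lemma~\ref{lemma-duality} are unnecessary here, but the core three-rule sequence you identify is exactly what the paper invokes.
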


\begin{proof}
	Follows from eq. \eqref{eq:loop-around-one-wire}, eq. \eqref{eq:overcross-kill}, and eq. \eqref{eq:K5}  similarly to \ref{lemma:red-spider-zx-knot-state}.
\end{proof}

\begin{lemma} \label{lemma:green-spider-zx-knot-state}
	\begin{equation}
		\tikzfig{zx-knot/spider6} = \tikzfig{zx-knot/spider6-r} \label{green-spider-zx-knot-state} \end{equation}
\end{lemma}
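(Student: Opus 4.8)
The plan is to reduce this to the already-established Lemma~\ref{lemma:red-spider-zx-knot-state} by colour duality, since eq.~\eqref{green-spider-zx-knot-state} is precisely the statement of eq.~\eqref{red-spider-zx-knot-state} with the roles of the two wire colours (black/primal and grey/dual, together with their corresponding decorator types) interchanged. Every rule invoked in that earlier proof---namely \eqref{eq:loop-around-one-wire}, \eqref{eq:overcross-kill}, and \eqref{eq:K5}---has a colour-swapped counterpart available in Table~\ref{table:KNOT}, so I expect the same chain of four rewrites to transfer directly to the green-spider-state case without structural change.

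Concretely, I would first apply \eqref{eq:loop-around-one-wire} to reposition the encircling loop relative to the spider's terminating dot, then discharge the resulting crossing using \eqref{eq:overcross-kill} together with \eqref{eq:K5}. A second application of the same pair of moves, again preceded by \eqref{eq:loop-around-one-wire}, collapses the diagram onto the right-hand side of eq.~\eqref{green-spider-zx-knot-state}. This is exactly the colour-swapped image of the four-step rewrite chain already used for Lemma~\ref{lemma:red-spider-zx-knot-state}, so once the duality is set up the derivation is mechanical.

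The only delicate point is bookkeeping of the crossing data under the colour swap: I must verify that the crossing eliminated by \eqref{eq:overcross-kill} is an over-crossing of the correct orientation in the green diagram rather than its mirror image, so that the green-coloured instance of the rule is genuinely the one that applies. Because all same-type crossings are identified through \eqref{eq:K1} and the generators are symmetric under the colour interchange, I expect this verification to be routine; it constitutes the main (and essentially the only) obstacle, and it reduces to a direct inspection of the generators appearing on each side of the equation.
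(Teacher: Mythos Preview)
Your proposal is correct and matches the paper's own proof, which simply states that the lemma ``follows from eq.~\eqref{eq:loop-around-one-wire}, eq.~\eqref{eq:overcross-kill}, and eq.~\eqref{eq:K5} similarly to \ref{lemma:red-spider-zx-knot-state}.'' Your explicit articulation of the colour-duality argument and the four-step rewrite chain is exactly the mechanism the paper intends by ``similarly.''
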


\begin{proof}
	Follows from eq. \eqref{eq:loop-around-one-wire}, eq. \eqref{eq:overcross-kill}, and eq. \eqref{eq:K5}  similarly to \ref{lemma:red-spider-zx-knot-state}.
\end{proof}

\begin{lemma} \label{lemma:bialgebra}
	\begin{equation}
		\label{bialgebra}
		\tikzfig{zx-knot/bialgebra-l} = \tikzfig{zx-knot/bialgebra-r}
	\end{equation}
\end{lemma}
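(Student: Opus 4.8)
The plan is to prove the bialgebra identity \eqref{bialgebra} by exactly the diagrammatic strategy already used for the spider and copy identities (Lemmas~\ref{lemma:green-spider-zx-knot}--\ref{lemma:green-spider-zx-knot-state}): rewrite the constituents of each side into a common $\catname{KNOT}_{\text{doubled}}$ diagram using only the rules of $\catname{KNOT}$, with the loop-commutation rule \eqref{eq:loop-D-commute} and the ring-manipulation Lemmas~\ref{lemma:ring-around} and \ref{lemma:two-rings-around} doing the main work. First I would unfold the doubled generators on the left-hand side into their defining $\catname{KNOT}$ diagrams, so that each logical wire appears as a pair of defect wires together with its encircling loop. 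Composing the two bifurcation maps of one colour with the two of the opposite colour produces a configuration in which several loops of both colours are interleaved around the crossing defects; this is the raw material the rewrites will simplify.

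Next I would use Lemma~\ref{lemma:ring-around} (and Lemma~\ref{lemma:two-rings-around} wherever two coincident loops arise) to consolidate and reposition the encircling loops, and then apply \eqref{eq:loop-D-commute} to commute these loops past the opposite-type crossings. Each such commutation copies or merges a loop according to the bifurcation structure, after which the spider rules \eqref{eq:K-spider} and \eqref{eq:spider-disappears} fuse the resulting junctions. The duality rule of Lemma~\ref{lemma-duality} then lets me interconvert the ribbon-like and tangle-like presentations as needed, so that both sides can be driven toward the same normal configuration rather than being matched term-by-term.

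The main obstacle will be the Hopf-type cancellation that is the essential content of the bialgebra law: after the loops have been commuted through, a connection between a green junction and a red junction must vanish for the two sides to coincide. I expect this to follow from \eqref{eq:K5} together with \eqref{eq:overcross-kill}, precisely as the analogous cancellation was handled in the proof of Proposition~\ref{prop:z-functor-zx} when verifying rule \eqref{eq:loop-D-commute}. The delicate part is the bookkeeping: one must track exactly which loops survive the commutation and which are killed, since the only way the identity could fail is a parity miscount in that step. Once the cancellation is applied and the residual junctions are fused, both sides reduce to the same $\catname{KNOT}_{\text{doubled}}$ diagram, and the identity follows.
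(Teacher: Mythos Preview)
Your overall plan is close to the paper's: both arguments unfold the doubled generators and then push loops around using Lemmas~\ref{lemma:ring-around} and~\ref{lemma:two-rings-around} together with repeated applications of \eqref{eq:loop-D-commute} and \eqref{eq:K5}. So the scaffolding is right.

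The gap is in the step you flag as ``the main obstacle''. You propose to realise the Hopf-type cancellation via \eqref{eq:overcross-kill}, citing the verification of \eqref{eq:loop-D-commute} in Proposition~\ref{prop:z-functor-zx} as precedent. That analogy does not carry: the argument in Proposition~\ref{prop:z-functor-zx} is a ZX-side computation, where the Hopf rule is available directly; here you must stay inside $\catname{KNOT}$. Moreover, \eqref{eq:overcross-kill} requires a wire terminating in a dot to absorb an overcrossing, and in the bialgebra configuration no such endpoint is present at the point where the cancellation is needed --- all four external legs are live. The spider rules \eqref{eq:K-spider}, \eqref{eq:spider-disappears} and Lemma~\ref{lemma-duality} that you invoke in passing do not supply the missing step either.

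The paper closes this gap with Lemma~\ref{lemma:complex-rule}, which you do not mention. That lemma packages exactly the rearrangement needed once the loops have been commuted through: it is what lets one pass from the intermediate tangle to the right-hand side without appealing to a bare Hopf cancellation. If you rework your plan to route the ``delicate bookkeeping'' through Lemma~\ref{lemma:complex-rule} (after the sequence of \eqref{lemma:ring-around}, \eqref{lemma:two-rings-around}, \eqref{eq:loop-D-commute}, \eqref{eq:K5} applications), the argument goes through and matches the paper's.
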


\begin{proof}
	{\allowdisplaybreaks
	\begin{align*}
		\tikzfig{zx-knot/proofs/bialgebra/1} &
		  \overset{\eqref{lemma:ring-around}}{=}
		\tikzfig{zx-knot/proofs/bialgebra/2}
		  \overset{\eqref{lemma:two-rings-around}}{=} \\[10pt]
		\tikzfig{zx-knot/proofs/bialgebra/3}  &
		  \overset{\eqref{lemma:ring-around}}{=}
		\tikzfig{zx-knot/proofs/bialgebra/4-1}
		  \overset{\substack{\eqref{eq:loop-D-commute} \\ \eqref{lemma:two-rings-around}}}{=}                                        \\[10pt]
		\tikzfig{zx-knot/proofs/bialgebra/5-1}  &
		  \overset{\eqref{lemma:ring-around}}{=}
		\tikzfig{zx-knot/proofs/bialgebra/7-1}
		  \overset{\eqref{eq:loop-D-commute}}{=}                                  \\[10pt]
		\tikzfig{zx-knot/proofs/bialgebra/8-1}  &
		  \overset{\eqref{eq:loop-D-commute}}{=}
		\tikzfig{zx-knot/proofs/bialgebra/9-1}
		  \overset{\eqref{eq:loop-D-commute}}{=}                                        \\[10pt]
		\tikzfig{zx-knot/proofs/bialgebra/10-1}  &
		  \overset{\eqref{eq:K5}}{=}
		\tikzfig{zx-knot/proofs/bialgebra/11-1}
		  =                                     \\[10pt]
		\tikzfig{zx-knot/proofs/bialgebra/12-1}
		&=
		\tikzfig{zx-knot/proofs/bialgebra/13-1}
		  \overset{\eqref{eq:loop-D-commute}}{=}                                             \\[10pt]
		\tikzfig{zx-knot/proofs/bialgebra/14-1}  &
		  \overset{\eqref{lemma:complex-rule}}{=}
		\tikzfig{zx-knot/proofs/bialgebra/15-1}
		  \overset{\eqref{lemma:ring-around}}{=}                                                  \\[10pt]
		\tikzfig{zx-knot/bialgebra-r}
	\end{align*}
	}
\end{proof}

\begin{lemma} \label{lemma:bar-copies}
	\begin{equation}
		\label{bar-copies}
		\tikzfig{zx-knot/spider7} = \tikzfig{zx-knot/spider7-r}
	\end{equation}
\end{lemma}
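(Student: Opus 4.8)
The plan is to derive this identity---classified in Proposition~\ref{prop:zx-analogue-knot} as one of the $\pi$-copy rules---directly within $\catname{KNOT}$, following the same strategy used for the neighbouring copy lemmas \ref{lemma:red-spider-zx-knot-state}--\ref{lemma:green-spider-zx-knot-state}. On the left-hand side a single red bar decorator sits on an input of a tangle-like (doubled) generator; the claim is that the bar propagates through the generator and re-emerges as a matching configuration of bars on the outputs. Under the functor $Z$ this is precisely the ZX fact that an $X$-type $\pi$ copies through a $Z$-spider (or dually), so morally the derivation should mirror the ZX $\pi$-copy rewrite but be carried out in the richer $\catname{KNOT}$ syntax.

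Concretely, I would first slide the bar along its wire with \eqref{eq:K-bar-slide} until it abuts the junction inside the generator, then use the bar-through-spider rule \eqref{eq:K-bar-spider} together with the copy rule \eqref{eq:K7} to duplicate it across the emanating strands. Because a doubled wire is realised by two strands bound by an encircling loop, I would next apply Lemma~\ref{lemma:ring-around} to absorb the loop, commute any residual loop past the fresh bars via \eqref{eq:loop-D-commute}, and finish with \eqref{eq:K5} and \eqref{eq:overcross-kill} to discharge the remaining crossings---exactly the toolkit \eqref{eq:loop-around-one-wire}, \eqref{eq:overcross-kill}, \eqref{eq:K5} that closes Lemma~\ref{lemma:red-spider-zx-knot-state}.

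The main obstacle is the bookkeeping intrinsic to the doubled setting: the bar must be duplicated and routed consistently over both physical strands of each logical wire without interfering with the loops that encode the pairing. The delicate point is the order of operations---applying \eqref{eq:loop-D-commute} or Lemma~\ref{lemma:ring-around} prematurely can strand a bar on the wrong side of a loop. This is resolved by the commutation rule \eqref{eq:K-circle-bar-commute}, which lets bars and circles be reordered freely, so that the bars can always be brought to the intended output wires; with that freedom the computation closes in a handful of steps, just as for the preceding copy lemmas.
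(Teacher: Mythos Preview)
Your plan would likely get there, but it takes a considerably longer road than the paper does, and it brushes against a genuine restriction on one of the rules you invoke.

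The paper's proof of this lemma (and of the three companion $\pi$-copy lemmas~\ref{lemma:circle-passes}--\ref{lemma:circle-copies}) does \emph{not} touch the structural part of the doubled generator at all: no~\eqref{eq:loop-around-one-wire}, no~\eqref{eq:overcross-kill}, no~\eqref{eq:K5}, no Lemma~\ref{lemma:ring-around}. It simply pushes the red bar through the fixed tangle-like diagram using only the decorator rules \eqref{eq:K9}, \eqref{eq:K10}, \eqref{eq:K-bar-slide}, \eqref{eq:K-circle-slide}, and \eqref{eq:K7}. Those rules are precisely the ones that tell you how a bar or circle slides along a wire, through a crossing, and past the bifurcation/endpoint generators; once you observe that a doubled generator is literally a composite of such pieces, the bar just walks from the input to the outputs without any need to dismantle and reassemble the loops. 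Your instinct to imitate Lemmas~\ref{lemma:red-spider-zx-knot-state}--\ref{lemma:green-spider-zx-knot-state} is the source of the detour: those lemmas rewrite the \emph{undecorated} structure, so they legitimately need \eqref{eq:loop-around-one-wire}, \eqref{eq:overcross-kill}, \eqref{eq:K5}; here the structure is identical on both sides and only the decorator moves.

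There is also a technical snag in your plan as stated. You propose to ``commute any residual loop past the fresh bars via~\eqref{eq:loop-D-commute}'', but Table~\ref{table:KNOT} explicitly restricts~\eqref{eq:loop-D-commute} to subdiagrams $D$ \emph{without any red decorators}; a freshly copied bar is exactly a red decorator, so that step is not licensed as written. You could work around this by first sliding the bars clear with~\eqref{eq:K-bar-slide} and only then applying~\eqref{eq:loop-D-commute}, but at that point you are already doing by hand what the decorator rules accomplish directly. (Note also that~\eqref{eq:K-circle-bar-commute} commutes a red \emph{circle} with a red \emph{bar}; it does not help you move a structural loop past a bar.) The moral: for the $\pi$-copy lemmas, stay entirely within the decorator rule block and the argument collapses to a few lines.
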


\begin{proof}
	Holds by applying rules \eqref{eq:K9}, \eqref{eq:K10}, \eqref{eq:K-bar-slide}, \eqref{eq:K-circle-slide}, \eqref{eq:K7}
\end{proof}

\begin{lemma} \label{lemma:circle-passes}
	\begin{multline}
		\label{circle-passes}
		\tikzfig{zx-knot/spider8} = \tikzfig{zx-knot/spider8-m} = \tikzfig{zx-knot/spider8-r}
	\end{multline}
\end{lemma}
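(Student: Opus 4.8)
The plan is to treat this identity as the circle analogue of Lemma~\ref{lemma:bar-copies}: the red circle is a Pauli decorator, and the chain of equalities records the stages of pushing it through the tangle-like doubled spider, where it is copied onto the outgoing wires. As with the other $\pi$-copy rules collected in Proposition~\ref{prop:zx-analogue-knot}, I would argue entirely within $\catname{KNOT}$, freely passing through intermediate diagrams that need not lie in $\catname{KNOT}_{\text{doubled}}$, and then observe that each stage reached is again a $\catname{KNOT}_{\text{doubled}}$ diagram.

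First I would put the relevant doubled structure into its tangle-like form, so that the bifurcation maps and the crossings are explicit and the circle sits on a single identified strand. Then I would slide the circle along that strand toward the junction using \eqref{eq:K-circle-slide}, and carry it across the bifurcation with the circle-through-spider rule \eqref{eq:K-circle-spider} together with \eqref{eq:K-circle-falls}; this is exactly the step that duplicates the circle onto both branches. Where the circle meets a bar on an inner loop I would reorder them using \eqref{eq:K-circle-bar-commute}, and finally normalise the placement of decorators on the inner loops by \eqref{eq:K7}. The middle diagram in the statement is precisely the configuration reached once the circle has crossed the junction but before this last tidying step, so the two equalities split naturally at that point.

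The main obstacle I expect is the bookkeeping through the crossings of the tangle. Unlike the bar case, the circle must be tracked across both the over- and the under-crossing of the doubled wire, so establishing the intermediate (middle) diagram requires care that each application of \eqref{eq:K-circle-slide} and \eqref{eq:K-circle-spider} acts on the correct branch and that no spurious decorator is left behind. Once the correct strand is pinned down at each stage, the two equalities follow by the same short sequence of rewrites used for Lemma~\ref{lemma:bar-copies}, with the bar rules replaced throughout by their circle counterparts.
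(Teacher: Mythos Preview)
Your plan is essentially the same approach the paper takes: push the red circle decorator through the tangle-like doubled spider using the decorator rules of $\catname{KNOT}$, with the three displayed diagrams marking the intermediate stages. The paper's own proof is a single line citing \eqref{eq:K9}, \eqref{eq:K10}, \eqref{eq:K-bar-slide}, \eqref{eq:K-circle-slide}, \eqref{eq:K7} --- in fact it uses exactly the same one-line proof for all four of Lemmas~\ref{lemma:bar-copies}--\ref{lemma:circle-copies}, so your more detailed bookkeeping is already finer-grained than what the paper records.

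One small point worth checking: your rule selection leans on the circle-specific rules \eqref{eq:K-circle-spider}, \eqref{eq:K-circle-falls}, \eqref{eq:K-circle-bar-commute}, whereas the paper's list instead includes \eqref{eq:K9}, \eqref{eq:K10}, and the bar-slide rule \eqref{eq:K-bar-slide}. Since the doubled generators carry both bar and circle decorators on their inner loops, you may find that sliding the circle past a bifurcation also disturbs a bar, so that \eqref{eq:K9}, \eqref{eq:K10}, or \eqref{eq:K-bar-slide} are needed in addition to (or instead of some of) the circle rules you name. Also note that this lemma is the ``passes'' case rather than the ``copies'' case, so your description of the circle being ``duplicated onto both branches'' may not match the actual intermediate diagram; make sure your mental model agrees with the three displayed stages before committing to the rewrite sequence.
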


\begin{proof}
	Holds by applying rules \eqref{eq:K9}, \eqref{eq:K10}, \eqref{eq:K-bar-slide}, \eqref{eq:K-circle-slide}, \eqref{eq:K7}
\end{proof}

\begin{lemma} \label{lemma:bar-passes}
	\begin{multline}
		\label{bar-passes}
		\tikzfig{zx-knot/spider9} = \tikzfig{zx-knot/spider9-m} = \tikzfig{zx-knot/spider9-r}
	\end{multline}
\end{lemma}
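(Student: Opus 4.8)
The plan is to treat this as a $\pi$-copy rule, mirroring the approach already used for the companion lemmas \ref{lemma:bar-copies} and \ref{lemma:circle-passes}, whose proofs are themselves short appeals to the decorator rules. Equation \eqref{bar-passes} asserts that a red bar decorator, upon meeting a doubled spider structure, passes through it (via the displayed intermediate form) to emerge on the opposite side. This is exactly the behaviour expected from the $\pi$-copy rule of the ($0$, $\pi$)-fragment of ZX under the encoding $K \circ Z$ of Definition~\ref{def:K}, so the proof should be a purely syntactic manipulation within $\catname{KNOT}$ using only the decorator rules of Table~\ref{table:KNOT}.

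Concretely, I would first use the bar-sliding rule \eqref{eq:K-bar-slide} to transport the red bar along its wire until it reaches the bifurcation maps that make up the doubled spider. There, the decorator-transport behaviour recorded in \eqref{eq:K9} together with \eqref{eq:K10} carries (and where appropriate duplicates) the bar onto the outgoing legs, producing the intermediate diagram in the middle of \eqref{bar-passes}. Any red circle decorators encountered along the way are commuted past the bar using the circle-sliding rule \eqref{eq:K-circle-slide}, and the absorption identity \eqref{eq:K7} is used to normalise the placement of decorators on the inner loops, whose exact positions are, by the definition of $\catname{KNOT}_{\text{doubled}}$, negotiable. A symmetric application of the same rules on the far side then yields the right-hand diagram.

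The only real subtlety---and hence the main obstacle---is bookkeeping: one must check that the bar is copied with the correct multiplicity and parity as it crosses the doubled structure, so that the result matches the $\pi$-copy rule rather than, for instance, cancelling to the identity. Because we work modulo scalars and modulo the placement of decorators on inner loops, this reduces to tracking the parity of the number of legs onto which the bar is copied. One can cross-check this via the soundness of $z$ (Proposition~\ref{prop:z-functor-zx}), which sends the red bar to a $\pi$-phase spider and the bifurcation maps to ZX spiders: the ZX $\pi$-copy rule then guarantees the identity, while the chain of $\catname{KNOT}$ rewrites above realises it syntactically.
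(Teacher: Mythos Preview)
Your proposal is correct and matches the paper's own proof: the paper simply states that the identity holds by applying rules \eqref{eq:K9}, \eqref{eq:K10}, \eqref{eq:K-bar-slide}, \eqref{eq:K-circle-slide}, and \eqref{eq:K7}, which are exactly the decorator rules you invoke. Your write-up just makes explicit the order and role of these rewrites (sliding the bar, copying through the bifurcation, commuting past circles, and normalising inner-loop decorators), so it is the same argument with more detail.
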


\begin{proof}
	Holds by applying rules \eqref{eq:K9}, \eqref{eq:K10}, \eqref{eq:K-bar-slide}, \eqref{eq:K-circle-slide}, \eqref{eq:K7}
\end{proof}

\begin{lemma} \label{lemma:circle-copies}
	\begin{equation}
		\label{circle-copies}
		\tikzfig{zx-knot/spider10} = \tikzfig{zx-knot/spider10-r}
	\end{equation}
\end{lemma}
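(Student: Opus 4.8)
The plan is to mirror the proofs of the three preceding $\pi$-copy lemmas (Lemmas~\ref{lemma:bar-copies}, \ref{lemma:circle-passes}, and~\ref{lemma:bar-passes}), rewriting the left-hand side into the right-hand side using only the decorator rules \eqref{eq:K9}, \eqref{eq:K10}, \eqref{eq:K-bar-slide}, \eqref{eq:K-circle-slide}, and \eqref{eq:K7}. The cleanest observation is that \emph{circle-copies} is the colour-dual of \emph{bar-copies} (Lemma~\ref{lemma:bar-copies}) under the symmetry of the decorator fragment that simultaneously exchanges red bars with red circles and the bar-slide rule \eqref{eq:K-bar-slide} with the circle-slide rule \eqref{eq:K-circle-slide}. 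Since this symmetry permutes the rules in the list above among themselves, it suffices to replay the rewrite sequence establishing Lemma~\ref{lemma:bar-copies} with bars and circles interchanged.

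Concretely, I would first apply the circle-slide rule \eqref{eq:K-circle-slide} to push the red circle on the input leg up to the bifurcation of the doubled spider, then use the copy rules \eqref{eq:K9} and \eqref{eq:K10} to duplicate it across the two emerging branches. After copying, each copy may land on an inner loop of the doubled structure rather than on the outgoing wire; I would normalise these positions with \eqref{eq:K7}, which is exactly the rule by which $\catname{KNOT}_{\text{doubled}}$ quotients out the placement of decorators on inner loops. The resulting diagram then coincides syntactically with the right-hand side.

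The difficulty here is purely bookkeeping rather than conceptual: one must track the red circle as it commutes past the loop wires mediating the doubled encoding and verify that the two branches end up carrying matching decorators up to the freedom granted by \eqref{eq:K7}. Because the slide rules \eqref{eq:K-bar-slide} and \eqref{eq:K-circle-slide} move decorators freely along same-coloured wires while \eqref{eq:K9}--\eqref{eq:K10} govern the branch, each intermediate diagram either lies in $\catname{KNOT}_{\text{doubled}}$ or is immediately reducible to one, so the rewrite never leaves the subtheory in an essential way. Once the inner-loop decorators are normalised the two sides agree, completing the proof; alternatively, invoking the stated colour-duality with Lemma~\ref{lemma:bar-copies} discharges it in one line.
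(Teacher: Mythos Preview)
Your proposal is correct and matches the paper's own proof, which simply states that the lemma holds by applying rules \eqref{eq:K9}, \eqref{eq:K10}, \eqref{eq:K-bar-slide}, \eqref{eq:K-circle-slide}, and \eqref{eq:K7}. Your additional observation that this is the colour-dual of Lemma~\ref{lemma:bar-copies} is a nice organising remark, but the underlying rewrite strategy is identical to what the paper indicates.
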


\begin{proof}
	Holds by applying rules \eqref{eq:K9}, \eqref{eq:K10}, \eqref{eq:K-bar-slide}, \eqref{eq:K-circle-slide}, \eqref{eq:K7}
\end{proof}

\end{document}